\documentclass{article}
\usepackage[utf8]{inputenc}
\usepackage[english]{babel}
\usepackage[T1]{fontenc}
\usepackage{lmodern}
\usepackage[a4paper]{geometry}
\usepackage{graphicx}
\usepackage{onlyamsmath}
\usepackage{amsfonts}
\usepackage{amssymb}
\usepackage{ntheorem}
\usepackage{xcolor}

\newcommand{\eps}{\varepsilon}
\newcommand{\seps}{\sqrt\varepsilon}

\newcommand{\E}{\mathbb{E}}

\newcommand{\Pro}{\mathbb{P}}
\newcommand{\R}{\mathbb{R}}

\newcommand{\calA}{\mathcal{A}}
\newcommand{\calB}{\mathcal{B}}
\newcommand{\calC}{\mathcal{C}}
\newcommand{\calE}{\mathcal{E}}
\newcommand{\calF}{\mathcal{F}}
\newcommand{\calI}{\mathcal{I}}
\newcommand{\calK}{\mathcal{K}}
\newcommand{\calS}{\mathcal{S}}

\newcommand{\calH}{\mathcal{H}}
\newcommand{\calL}{\mathcal{L}}
\newcommand{\calM}{\mathcal{M}}

\newcommand{\calR}{\mathcal{R}}
\newcommand{\calX}{\mathcal{X}}

\newcommand{\calO}{\mathcal{O}}

\newcommand{\fg}{\mathfrak{g}}

\newcommand{\1}{\mathbf{1}}

\newcommand{\bfe}{\mathbf{e}}
\newcommand{\bx}{\mathbf{x}}
\newcommand{\by}{\mathbf{y}}

\newcommand{\bE}{\mathbf{E}}
\newcommand{\bF}{\mathbf{F}}
\newcommand{\bG}{\mathbf{G}}
\newcommand{\bH}{\mathbf{H}}

\newcommand{\bL}{\mathbf{L}}

\newcommand{\bP}{\mathbf{P}}

\setlength\textheight{700pt}
\setlength\textwidth{485pt}
\setlength\topmargin{-50pt}
\oddsidemargin=-16pt

\newtheorem{theorem}{Theorem}[section]

\newtheorem{proposition}{Proposition}[section]

\newtheorem{lemma}{Lemma}[section]

\theoremstyle{nonumberplain}
\theorembodyfont{\upshape}
\newtheorem{proof}{Proof}

\title{An effective fractional paraxial wave equation for wave-fronts in randomly layered media with long-range correlations}

\author{Christophe Gomez \thanks{Aix Marseille Univ, CNRS, I2M, Marseille, France, christophe.gomez@univ-amu.fr,
\newline 
Institut de Math\'ematiques de Marseille (I2M) CMI - Technop\^ole Ch\^ateau-Gombert
39, rue F. Joliot Curie
13453 Marseille Cedex 13 
France \newline
} }

\begin{document}

\maketitle

\abstract{
This work concerns the asymptotic analysis of high-frequency wave propagation in randomly layered media with fast variations and long-range correlations. The analysis takes place in the 3D physical space and weak-coupling regime. The role played by the slow decay of the correlations on a propagating pulse is two fold. First we observe a random travel time characterized by a fractional Brownian motion that appears to have a standard deviation larger than the pulse width, which is in contrast with the standard O'Doherty-Anstey theory for random propagation media with mixing properties. Second, a deterministic pulse deformation is described as the solution of a paraxial wave equation involving a pseudo-differential operator. This operator is characterized by the autocorrelation function of the medium fluctuations. In case of fluctuations with long-range correlations this operator is close to a fractional Weyl derivative whose order, between 2 and 3, depends on the power decay of the autocorrelation function. In the frequency domain, the pseudo-differential operator exhibits a frequency-dependent power-law attenuation with exponent corresponding to the order of the fractional derivative, and a frequency-dependent phase modulation, both ensuring the causality of the limiting paraxial wave equation as well as the Kramers-Kronig relations. The mathematical analysis is based on an approximation-diffusion theorem for random ordinary differential equations with long-range correlations.
}

\begin{flushleft}
\textbf{Key words.}  wave propagation, paraxial approximation, lossy wave equation, fractional derivative, random media, long-range processes.
\end{flushleft}

\section{Introduction}

In contexts such as geophysics, laser beam propagation through the atmosphere, or medical imaging for instance, frequency-dependent attenuation has been observed at rate proportional to
\begin{equation}\label{eq:powerdecay}
|\omega|^{\lambda}\qquad\lambda\in(0,2),
\end{equation}
for a given $\omega$ representing the angular frequency \cite{fannjiang, gargett, gomez00, holm, kiss, nicolle, sinkus}. Depending on the field of application, such power-law attenuation can be referred to as anomalous diffusion, nonexponential relaxation, inelastic damping, hysteric damping, singular hereditary, or singular memory media \cite{chen}. Accurate wave-propagation models with a power-law attenuation is therefore of great importance for applications in imaging and inverse problems (see \cite{nasholm2013} for a survey focusing on medical applications and references therein). To reproduce such power-law decay, several models have been proposed involving fractional derivatives (see \cite{caputo, gelinsky, hanyga, holm, nasholm2013, ren, sushilov, szabo} for instance). A particular attention has been paid to space-time models. In contrast to frequency-domain models, time-domain models allow numerical simulation of a large variety of boundary value problems \cite{hanyga2002}, and can be easier to implement and less costly \cite{wismer}. 

Another approach, attracting more attention recently, consists in considering the propagation media as random and exhibiting fractal correlation structures or long-range dependencies (see \cite{garnier1, garnier2, solna} and \cite[Chapter 9]{holm}). In other words, the correlation function of the medium fluctuations decays slowly enough to not be integrable at infinity. Experimental measurements in real environments have exhibited long-range correlation properties in different contexts, as in geophysics \cite{chen, dolan} or laser beam propagation through the atmosphere \cite{gargett, sidi}. Another aspect of wave propagation that we address in this paper concerns the paraxial approximation. This approximation consists in describing the wave propagation along a privileged axis, and has been extensively studied and used in applications (see \cite{bailly, bamberger, collins, garnier0, gomez1, trappert} for instance). Under suitable assumptions on physical parameters, this approximation can greatly simplify the description of propagation phenomena as well as their numerical simulations \cite{martin}.   

The aim of this paper is to provide a mathematical derivation, from first principles of physics, of a paraxial wave equation exhibiting a power-law attenuation \eqref{eq:powerdecay} with $\lambda \in (1,2]$, in the context of random propagation media with long-range correlations. Our analysis considers both aspects, paraxial approximation and random medium fluctuations with long-range correlations, under the same limit. The case of mixing random medium fluctuations is also treated for comparison. In homogeneous propagation media the derivation of the paraxial approximation is relatively straightforward, but it becomes much more complex when waves are propagating in heterogeneous media. In applications, wave frequencies are generally sufficiently high so that the interactions between the waves and the fine structures of the medium fluctuations cannot be ignored. Rigorous derivations of the paraxial wave equation in random media, with mixing properties, can be found in \cite{bailly, garnier0} for instance, and in case of long-range correlations in \cite{gomez1}. However, in this latter work the scaling regime is not the same as the one proposed in this paper. As waves propagate over large distances, it is natural to expect some \emph{universal behavior} to describe the statistical properties of the multiple-scattering effects. In case of long-range correlations this refers to the non-central limit theorem \cite{taqqu}, as opposed to the standard central limit theorem for mixing propagation media. The scaling of the non-central limit theorem has been used to study wave propagation in 1D propagation media with long-range correlations through the use of the rough-path theory \cite{marty1, marty2} under general assumptions on the random fluctuations, and random waveguides with a moment technique \cite{gomez2}. In this context, the effects on the propagating wave can be described as a random travel-time shift driven by a fractional Brownian motion, leading to anomalous diffusion phenomena but no power-law attenuation in frequency. Here, we rather consider a central limit theorem scaling in the context of long-range correlations leading to mathematical challenges. This situation has been considered for 1D propagation media \cite{garnier1, garnier2}, but their approach, allowing to exhibit a power-law attenuation of the form \eqref{eq:powerdecay} as well as a fractional derivative in the effective wave equation, does not seem to apply in a 3D setting. Despite more restrictive assumptions on the random fluctuations than in \cite{garnier1, garnier2, marty1, marty2}, the approach we propose can also be applied to more general 3D settings with non-layered fluctuations, but with additional technical difficulties, and will be the aim of future works. As for random media with mixing fluctuations, more general fluctuation models with long-range correlations should not change the overall results as the asymptotic equations and scattering coefficients depend only on the correlation functions of the fluctuation models, and not their precise definitions. 

As already pointed out in the context of the random Schr\"odinger equation with long-range correlations \cite{gomez0}, the central limit theorem scaling can be seen as propagating the non-central limit scaling over longer propagation distances. This latter scaling already producing an effective phase modulation driven by a fractional Brownian motion \cite{bal1}, the wave starts to oscillate very fast over larger propagation  distances, and then have to be treated properly to still exhibit effective nontrivial effects. For the random Schr\"odinger equation, the Wigner transform is used to study the energy propagation by looking at correlations of the wave function, which naturally cancels out the rapid phases, and provides an effective description of the energy propagation through a radiative transfer equation \cite{gomez}. For classical wave propagation problems, an equivalent approach consists in looking at the wave-front along a proper random characteristic time-frame. As a result, the rapid phases still have some effects by averaging the stochasticity to obtain a deterministic spreading for the wave-front. It is worthnoticing that under the central limit theorem scaling, but with long-range correlations, the random travel time has a standard deviation very large compared to the pulse width \cite{garnier1, garnier2}. This is in contrast with the standard O'Doherty-Anstey (ODA) theory with mixing medium fluctuations for which the standard deviation of the random travel time and the pulse width are of the same order (see \cite[Chapter 8]{fouque} and references therein). This \emph{unstable} behavior of the random travel time may have a dramatic effect for applications in inverse problems based on travel time estimations, and a deeper understanding of the propagating wave is required. In the context of a randomly layered media, the pulse deformation can be approximately characterized by a deterministic paraxial wave equation of the form
\[
\partial^2_{tz} \psi + \frac{c_0}{2} \Delta_{\bx} \psi - r_0 D^{2+\gamma}_t \psi = 0 \qquad \gamma\in(0,1),
\] 
where the $z$-variable corresponds to the main propagation axis, the $\bx$-variable to the transverse section (see Figure \ref{fig1}), $t$ to the time variable, $c_0$ to the background wave speed, and $r_0>0$ is a constant. Also, $D^{2+\gamma}_t$ stands for the Weyl fractional derivative with respect to time and order between $2$ and $3$ (see \eqref{def:weyl_dev}), depending on the power decay rate $\gamma\in(0,1)$ of the correlation function of the medium fluctuations. This fractional derivative ensures the causality of the paraxial wave equation in the sense that for a given time $t$ the equation involves only the prior knowledge, $\psi(\tau)$ for $\tau \leq t$, of the wave $\psi$. In the Fourier domain, this equation can be recast as a Schr\"odinger equation of the form
\[i\frac{\omega}{c_0} \partial_z \check \psi + \frac{1}{2}\Delta_\bx \check \psi + \tilde r_0 \omega|\omega|^{1+\gamma} \check \psi = 0, \qquad \text{with} \qquad \check \psi(\omega,\bx,z)=\int \psi(t,\bx,z)e^{i\omega t} dt,\]
where $\tilde r_0$ is a constant with positive imaginary part. For $\gamma=1$ the above fractional derivative turns to a classical third order derivative. This equation provides a frequency-dependent power-law attenuation of the form \eqref{eq:powerdecay} with $\lambda=1+\gamma\in(1, 2]$. This range of values for $\lambda$ is typical of attenuation in biological tissues \cite{holm2010}.

The analysis developed in this paper relies on an approximation-diffusion theorem, which is usually used for mixing fluctuations. Despite some restrictions on the noise model for long-range correlations, the proof of this theorem requires a very careful attention due to the nonintegrability of the correlation function at infinity. Also, our restriction does not allow us to capture power-law attenuation with  $\lambda\in(0,1)$, and further investigations would be required to capture these cases.

The paper is organized as follows. In Section \ref{sec:model}, we describe the physical model under consideration and introduce the main assumptions we need to derive the limiting fractional paraxial wave equation, which is presented and discussed in Section 3. Section 4 presents the influence of the slow decay of the random medium correlations on the wave-front travel time. Section \ref{sec:modal_dec} reformulates the propagation problem in terms of a proper random ordinary differential equation, and Section \ref{sec:propagator} describes the asymptotic behavior of this equation. Sections \ref{proof:th_main}, \ref{proof:th_asympt}, and \ref{proof:th_asymp} are dedicated to the proof of the main results.

\section{The wave model}\label{sec:model}

In this section, we describe the physical model under consideration, the random medium fluctuations, and introduce some assumptions that are necessary to derive the main results stated in Section \ref{sec:main_results}.

\subsection{Random wave equation}

In this paper a three-dimensional linear wave equation model is considered
\[
\Delta p - \frac{1}{c^2(z)} \partial^2_t p  = \textbf{F}(t,\bx,z) \qquad (t,\bx,z)\in \R\times \R ^2 \times \R,
\]
equipped with null initial conditions
\begin{equation}\label{eq:cond_init}
 p(t=0,\bx, z) = \partial_t p(t=0,\bx, z) = 0\qquad (\bx,z)\in \R ^2 \times \R,
\end{equation}
meaning that the system is initially at rest. The $z$-coordinate represents the main propagation axis, while the $\bx$-coordinate represents the transverse section (cf. Figure \ref{fig1} for an illustration of the setting). 
\begin{figure}
\begin{center}
\includegraphics[scale=0.25]{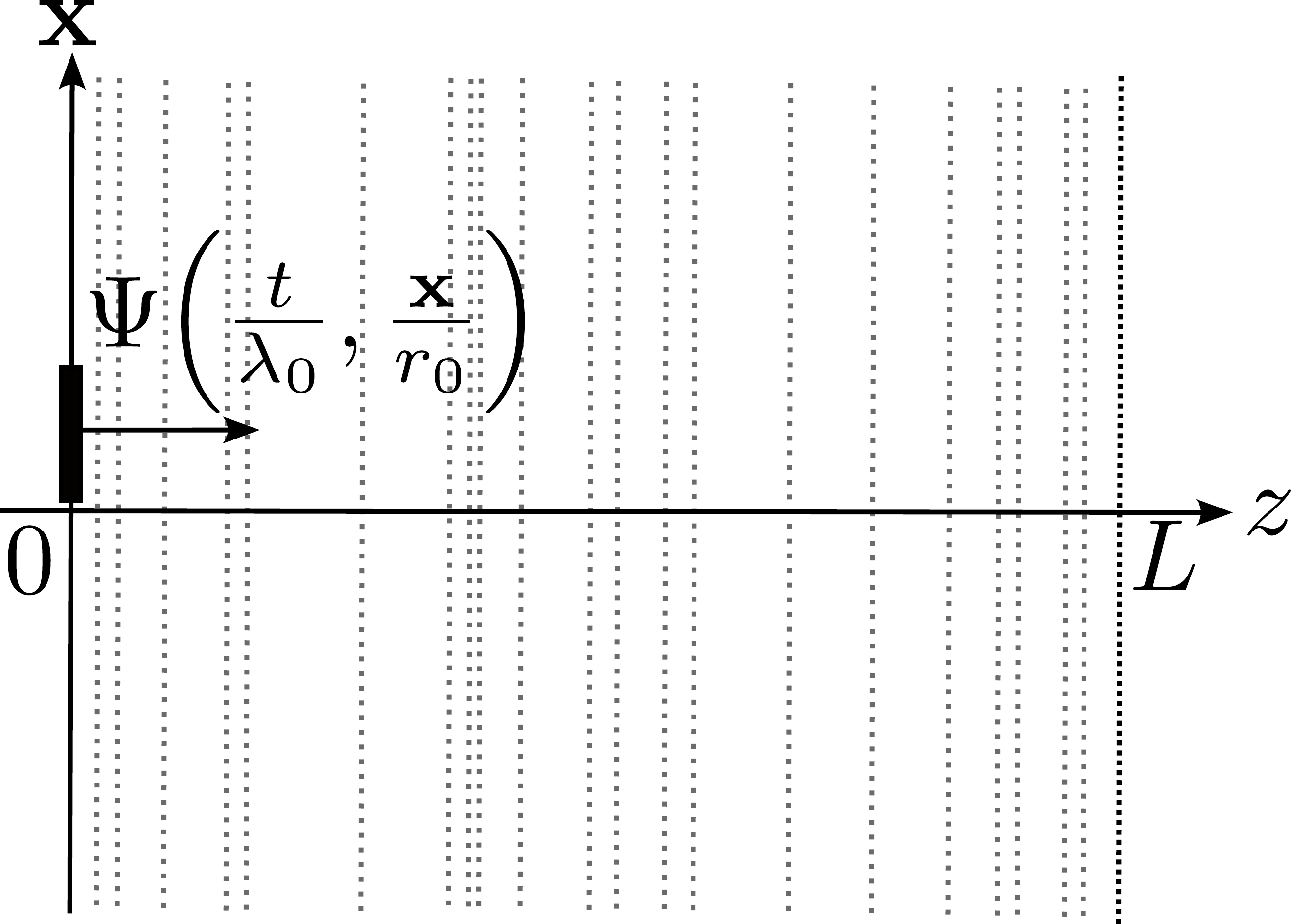}
\end{center}
\caption{\label{fig1} Illustration of the physical setting.}
\end{figure}
The forcing term $\textbf{F} (t,\bx,z)$ is assumed to be of the form
\[ 
\textbf{F} (t,\bx,z):= \Psi\Big(\frac{t}{\lambda_0},\frac{\bx}{r_0}\Big)\delta(z),
\]
and represents a source emitting a pulse located in the plan $z=0$, with central wavelength $\lambda_0$, and beam radius $r_0$. To be consistent with \eqref{eq:cond_init} the source profile $\Psi$ needs to be supported in time in $(0,\infty)$. For mathematical convenience, we also assume that the frequency $\omega=0$ is not supported by the Fourier transform in time of $\Psi$. This assumption will be made more precise later in the paper. Finally, the wave-speed profile is assumed to be of the form
\begin{equation}\label{def:c}
\frac{1}{c^2(z)} := \frac{1}{c^2_0} \Big(1 + \nu\Big(\frac{z}{l_c}\Big)\mathbf{1}_{(0, L)}(z)\Big),
\end{equation}
where $c_0$ stands for the background wave speed, $\nu$ represents the fluctuations of the wave speed, and $l_c$ is the correlation length of these fluctuations. In other words, the correlation length can be seen as the typical scale of variation for the random fluctuations. The wave speed varies only in one direction, here the $z$-direction, providing a \emph{layered} propagation medium. In practice, the wave-speed variations are almost impossible to determine exactly, and it is therefore reasonable to consider these fluctuations as random. The indicator function in \eqref{def:c} indicates that the random perturbations take place only in the slab $z\in(0,L)$. 

Our goal is to provide a description of the wave at $z=L$ through a paraxial wave equation. This wave is referred to as the \emph{transmitted wave}, in contrast with the one observed at $z=0$, which is referred to as the \emph{reflected wave}.

\subsection{The random fluctuations}

In this paper, the random fluctuations are assumed to be given by
\[\nu(z) := \Theta(\sigma V(z)),\]
where $\Theta$ is an odd smooth bounded function with 
\[
\theta'_0:=\Theta'(0)\neq 0 \qquad\text{and}\qquad \sup |\Theta| < 1,
\]
and $V$ is a mean-zero stationary Gaussian random process. Here, the function $\Theta$ plays no significant role. Gaussian random processes being not bounded, this function just guarantees, for a modelization purpose, that $c^2(z)$ is actually always positive (see \eqref{def:c}). Also, for $\sigma$ small enough, the Taylor expansion
\[
\Theta(\sigma V(z)) = \sigma V(z) \Theta'(0) +\calO(\sigma^3)
\]
($\Theta''(0)=0$ since $\Theta$ is odd) indicates that the medium fluctuations are driven by $V$.

To derive the fractional behavior of the paraxial wave equation, or equivalently the frequency-dependent power-law attenuation, we consider random fluctuations with \emph{long-range correlations}, or in other words with \emph{slowly decaying correlations}. Behind this terminology, we assume that the two-point correlation function is not integrable:
\[
\int_0^\infty |\E[V(z + s)V(z)]| \, ds = \infty.
\]
\begin{figure}
\begin{center}
\includegraphics[scale=0.3]{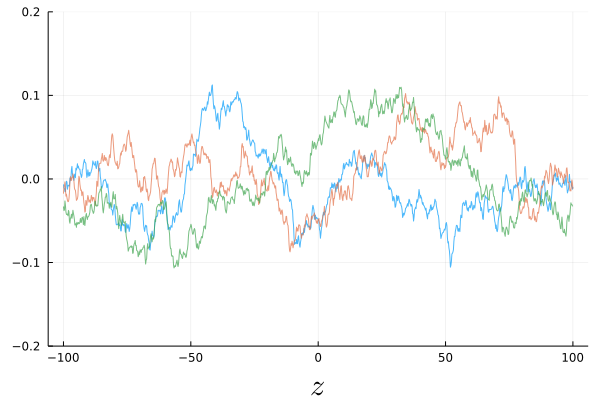}
\includegraphics[scale=0.3]{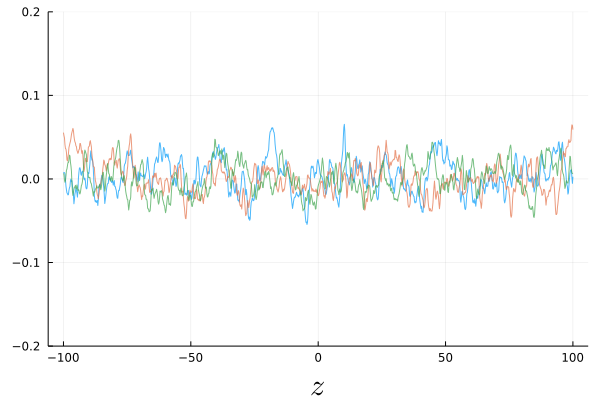}
\end{center}
\caption{\label{fig_LR} Illustration of three realizations of the random process defined by \eqref{def:V} with long-range correlations for the left-picture, and short-range correlations for the right-picture. Here, $\mu=1$, $\alpha=1/4$, $a(p)=\1_{(-10,10)}(p)$, $\beta=1/2$ (that is $\gamma=1/2$) for the left-picture illustrating the long-range correlations, and $\beta=1/6$ (that is $\gamma=3/2$) for the right-picture illustrating short-range correlations.}
\end{figure}
In Figure \ref{fig_LR}, we illustrate the difference of statistical behavior between long-range correlations and \emph{short-range correlations} (or in other words with \emph{rapidly decaying correlations}), the latter having an integrable two-point correlation function. From these pictures, one can observe that slowly decaying correlations produce longer excursions of the random trajectories, due to the persistence of the correlations, than for rapidly decaying correlations. In this latter case, the trajectories cannot really produce correlation patterns and look almost like the ones of a white-noise. These simulations are random trajectories of a stochastic process we now introduce precisely.  

The method we propose to analyze the transmitted and reflected waves is based on the perturbed-test-function method \cite{kushner, gomez3}, and to use this technique under the context of long-range correlations we need some assumptions on $V$. The following construction has already been used to study the impact of random fluctuations with long-range correlations on the Schr\"odinger equation \cite{bal1, gomez, gomez0} and nonlinear oscillators \cite{gomez3}. We consider $V$ as being a linear superposition of Ornstein-Uhlenbeck type processes by setting
\begin{equation}\label{def:V}
V(z) := \int_{-\infty}^z \int_S e^{- \mu |p|^{2\beta} (z-u) } B(du, dp)=  \int_S V(z,dp),
\end{equation}
with
\begin{equation}\label{def:Vdp}
V(z,dp):= \int_{-\infty}^z e^{- \mu |p|^{2\beta} (z-u) } B(du, dp),
\end{equation}
where $\mu, \beta>0$, and $S=(-r_S, r_S)\subset \R$ is a bounded interval containing $0$ for $r_S >0$. Also, $B$ is a Gaussian random measure with covariance function
\[
\E[B(du, dp)B(dv, dq)] := 2\mu \,r(p) |p|^{2\beta}\,\delta(u-v)\,\delta(p-q)\,du\,dv\,dp\,dq,
\] 
with
\[
r(p) := \frac{a(p)}{|p|^{2\alpha}},\qquad \alpha<1/2,\qquad\text{and}\qquad p\in S\setminus\{0\},
\]
where $a$ is a nonnegative smooth bounded function with $a(0)>0$. From this definition, we obtain the autocorrelation function for $V$:
\begin{equation}\label{def:R}
R(z) := \E[V(z_0 + z)V(z_0)] = \int_S e^{-\mu|p|^{2\beta} |z| }r(p) dp.
\end{equation}
Note that we need $\alpha<1/2$ for the process $V$ to be well-defined. After some algebra, we can exhibit the asymptotic behavior of the autocorrelation function:
\begin{equation}\label{eq:decay_R}
R(z) \underset{|z|\to \infty}{\sim} \frac{R_0}{|z|^\gamma},
\end{equation}
with 
\begin{equation}\label{def:R0}
R_0 := a(0)\int_{-\infty}^{\infty} \frac{e^{-\mu |p|^{2\beta}}}{|p|^{2\alpha}}dp,\qquad\text{and}\qquad \gamma:=\frac{1-2\alpha}{2\beta}>0.
\end{equation}
We refer to Figure \ref{fig_LR} for illustrations of this random process. An odd bounded function being of Hermite-rank one, we have (see \cite{marty1} for more details)
\[
\E[\nu(z_0 + z)\nu(z_0)] \underset{|z|\to \infty}{\sim} \frac{R_0 \Theta_1}{|z|^\gamma},
\]
with
\[
\Theta_1 := \Big(\frac{1}{\sqrt{2\pi}}\int  \Theta(\sigma u) \, u\, e^{-u^2/2}du\Big)^2.
\]
As a result, depending on the value of $\gamma$, the two-point correlation function for $\nu$ can be integrable or not. In other words, the medium fluctuations exhibit short-range correlations for $\gamma>1$ or long-range correlations for $\gamma \in (0,1]$. 

In this paper, for a mathematical tractability purpose, a specific form for $V$ is considered. More general fluctuation models have been considered in \cite{marty1, marty2}, but the method they propose, based on the rough-path theory, does not seem to apply in the scaling regime described below. This method can be used to analyze the competition between randomness and periodicity in random differential equations emanating from wave propagation problems \cite{marty0}. However, in our scaling regime, the structure of the periodic components turns out to involve the randomness itself in a way that cannot be controlled easily by this strategy. In \cite{garnier1, garnier2}, the authors use also more general fluctuation models in a scaling regime similar to the one presented here. Nevertheless, the method they use to analyze the problem cannot be applied for 3D propagation media, it is designed for 1D propagation media, that is without the transverse variable $\bx$.

\subsection{The scaling regime}

The asymptotic analysis we provide here is based on a separation of the characteristic scales of the problem. The scales of interest are the propagation distance $L$ into the random medium, the central wavelength $\lambda_0$, the correlation length $l_c$ of the medium fluctuations, the beam radius $r_0$, and the fluctuation strength $\sigma$. Our scaling regime is based on the four following assumptions. First, we consider a high-frequency regime, that is the central wavelength is small compared to the propagation distance:
\[\eps:=\frac{\lambda_0}{L} \ll 1.\]
Second, we assume that the correlation length is of order the central wavelength
\[
l_c \sim \lambda_0,
\]
providing a full interaction between the random fluctuations and the propagating wave. Third, we assume that the beam width $r_0$ satisfies 
\[ \frac{r^2_0}{\lambda_0} \sim L,\]
so that the Rayleigh length is of order the propagation distance, which is crucial to obtain the  paraxial approximation. In fact, the Rayleigh length
is defined as the distance from the beam waist to the place where its cross-section is doubled by diffraction, and in homogeneous media it is of order $r_0^2/\lambda_0$. Finally, the strength of the fluctuations is assumed to be small, so that we place ourselves in a weak-coupling regime:
\[\sigma \ll 1.\]
To fix the ideas, we set 
\[L\sim 1,\qquad \lambda_0 = l_c = \eps, \qquad\text{and}\qquad r_0 = \sigma=\seps.\]
The choice of $\sigma$ allows us to derive a nontrivial limit for both short-range and long-range correlations.

Our choice on the parameter scalings leads to the system
\begin{equation}\label{eq:waveeq_eps}
\Delta p_\eps - \frac{1}{c^2_\eps(z)} \partial^2_{tt} p_\eps  = \Psi\Big(\frac{t}{\eps},\frac{\bx}{\seps}\Big)\delta(z) \qquad (t,\bx,z)\in \R\times \R ^2 \times \R,
\end{equation}
with
\[
\frac{1}{c^2_\eps(z)} = \frac{1}{c^2_0} \Big(1 + \nu_\eps\Big(\frac{z}{\eps}\Big) \mathbf{1}_{(0, L)}(z)\Big)\qquad\text{and}\qquad \nu_\eps(z):=\Theta\big(\seps V(z)\big).
\]

\section{The main results}\label{sec:main_results}

To state our main result, we follow the strategy of \cite{garnier1, garnier2}, and introduce the random travel time 
\begin{equation}\label{def:RTT}
T^0_\eps(L):= \frac{L}{c_0} + \frac{1}{2 c_0} \int_0^L  \nu_\eps(z/\eps) dz,
\end{equation} 
corresponding to the expected travel time $L/c_0$ with a random correction, and the wave-front 
\begin{equation}\label{def:p_tr_eps}
p^L_{tr,\eps}(s,\by) := p_\eps\big( T^0_\eps(L) + \eps\, s, \seps\, \by, L \big)\qquad (s,\by)\in\R\times \R^2.
\end{equation}
This wave-front corresponds to the wave observed at the end of the random section ($z=L$), on a time window corresponding to the pulse width $\eps$, and centered at the random travel time $T^0_\eps(L)$.

Before stating our first result, which is proved in Section \ref{proof:th_main}, we introduce some notations. We consider the following Fourier transform convention,
\[
\hat f(\omega,\kappa) := \iint f(s,\by)e^{i \omega (s-\kappa \cdot \by)} ds\, d\by,
\]
and
\[
f(s,\by) := \frac{1}{(2\pi)^3}\iint \hat f(\omega,\kappa)e^{-i \omega (s-\kappa \cdot \by)} \omega^2 d\omega \, d\kappa,
\]
which is convenient to study space-time problems. Denoting 
\[
\calS_{0}(\R\times\R^2)=\Big\{\psi\in \calS(\R\times\R^2):\quad \int \phi(s,\by)ds =0,\quad\forall \by \in\R^2\Big\},
\]
where $\calS(\R\times\R^2)$ stands for the Schwartz class, $\calS'_{0, s,\by}(\R\times\R^2)$ denotes the set of tempered distributions restricted to $\calS_{0}(\R\times\R^2)$ w.r.t. the variables $s$ and $\by$. This restriction to $\calS_{0}(\R\times\R^2)$ is required for the paraxial wave equation \eqref{eq:parax1} to be well-posed. One can remark that our source term $\Psi$ belongs to $\calS_{0}(\R\times\R^2)$, since we assume the frequency $\omega=0$ to not be supported by the source. Below, $\calC^0_z$ (reps. $\calC^1_z$) stands for the set of $\calC^0$-functions (resp. $\calC^1$-functions) w.r.t. the $z$-variable. 

\begin{theorem}\label{th:main}
The family $(p^L_{tr,\eps})_\eps$ converges in probability in $\calC(\R\times \R^2)$ to 
\begin{equation}\label{def:ptr0}
p^L_{tr}(s,\by) = \frac{1}{2}\calK(\cdot,\cdot,L) \ast \Psi(s,\by),\qquad(s,\by)\in \R\times\R^2,
\end{equation}
where, in the Fourier domain,
\begin{equation}\label{def:K}
\hat \calK(\omega,\kappa,z) := e^{-\theta'^2_0 \omega^2 ( \Gamma_c(\omega)+i  \Gamma_s(\omega)) z/(8 c_0^2)}  e^{-i\omega c_0|\kappa|^2 z/2},
\end{equation}
with
\begin{equation}\label{def:Gamma_coef}
\Gamma_c(\omega):=2\int_0^\infty R(s)\cos\Big(\frac{2 \omega s}{c_0}\Big) ds\qquad\text{and}\qquad \Gamma_s(\omega):=2\int_0^\infty R(s)\sin\Big(\frac{2 \omega s}{c_0}\Big) ds.
\end{equation}
Here, $R$ is the correlation function of the medium fluctuations \eqref{def:R}. The convolution kernel $\calK$ is the unique solution in $\calC^0_z([0,\infty),\calS'_{0,s,\by}(\R\times\R^2)) \cap \calC^1_z((0,\infty),\calS'_{0,s,\by}(\R\times\R^2))$ to the paraxial wave equation 
\begin{equation}\label{eq:parax1}
\partial^2_{sz} \calK + \frac{c_0}{2} \Delta_{\by} \calK - \calI(\calK) = 0, 
\end{equation}
with $\calK(s,\by,z=0)=\delta(s)\delta(\by)$, and
\begin{equation}\label{def:I}
\calI (\psi)(s) := \frac{\theta'^2_0}{8c_0^2} \int_{-\infty}^s R\Big(\frac{c_0 (s-\tau)}{2}\Big) \partial^3_{sss} \psi(\tau) d\tau\qquad s\in \R.
\end{equation}
\end{theorem}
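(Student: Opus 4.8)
The plan is to pass to the Fourier domain and reduce the 3D problem \eqref{eq:waveeq_eps} to a family of one-dimensional random boundary-value problems in $z$, parametrised by the frequency $\omega$ and the rescaled transverse wavevector $\kappa$. Applying the Fourier transform in $t$ and $\bx$ turns the perturbed Helmholtz operator into an ordinary differential operator in $z$ whose only random coefficient is $\nu_\eps(z/\eps)$. Diagonalising the unperturbed part into right- and left-going modes (the reformulation carried out in Section~\ref{sec:modal_dec}) yields coupled-mode equations driven by $\nu_\eps(z/\eps)$ through a propagator $\bP_\eps(\omega,\kappa,z)$, and the transmitted wave-front $p^L_{tr,\eps}$ is expressed, mode by mode, via the transmission entry of $\bP_\eps$ acting on $\hat\Psi$, the factor $\tfrac12$ in \eqref{def:ptr0} being the transmitted share of the source emitted at $z=0$.

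The second step is to absorb the leading random phase through the random travel-time centering. Under the scaling $\sigma=\seps$, $l_c=\lambda_0=\eps$, the dominant contribution of the fluctuations to the forward mode at physical frequency $\omega/\eps$ is a pure phase $\exp\big(\frac{i\omega}{2c_0\eps}\int_0^z \nu_\eps(u/\eps)\,du\big)$, which is exactly what the definition \eqref{def:RTT} of $T^0_\eps(L)$ removes once the field is evaluated at $T^0_\eps(L)+\eps s$. This recentering is essential here: because of the long-range correlations the random integral $\int_0^L\nu_\eps(z/\eps)\,dz$ has anomalously large, fractional-Brownian-type fluctuations (standard deviation exceeding the pulse width $\eps$), so it cannot be treated as a lower-order correction and must be factored out before any averaging. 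After this change of frame the centered propagator is governed by diffusive, second-order effects.

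The heart of the proof is then to pass to the limit in the centered propagator using the approximation-diffusion theorem of Section~\ref{sec:propagator}. Applying it to a suitable test function of the mode amplitudes produces an effective generator whose coefficients are the correlations of $\nu_\eps$ against the fast backscattering carriers $e^{\pm 2i\omega z/(c_0\eps)}$; after linearising $\nu_\eps\approx\seps\,\theta'_0 V$ these reduce to $\theta'^2_0\int_0^\infty R(s)e^{\pm 2i\omega s/c_0}\,ds=\frac{\theta'^2_0}{2}(\Gamma_c(\omega)\pm i\Gamma_s(\omega))$, with $\Gamma_c,\Gamma_s$ as in \eqref{def:Gamma_coef}. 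The delicate point — and the main obstacle — is precisely this averaging: since $R$ is not integrable, the usual mixing-based justification of the diffusion limit fails, and one must exploit the oscillation of the carrier to obtain (conditionally) convergent integrals and to control the perturbed test functions despite the slow decay of the correlations. This is where the exclusion of $\omega=0$ and the restriction on the decay exponent $\gamma$ enter, guaranteeing $\Gamma_c,\Gamma_s<\infty$. The outcome is that, for each fixed $(\omega,\kappa)$, the centered transmission coefficient converges in probability to the deterministic transfer function $\hat\calK(\omega,\kappa,L)$ of \eqref{def:K}, its real part $\Gamma_c$ producing the power-law attenuation and its imaginary part $\Gamma_s$ the phase modulation, while the factor $e^{-i\omega c_0|\kappa|^2 z/2}$ comes from the diffractive part of the unperturbed operator.

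Finally, I would reassemble the modes and upgrade this pointwise-in-$(\omega,\kappa)$ convergence to convergence in probability in $\calC(\R\times\R^2)$. Since the limit is deterministic, it suffices to combine the mode-wise convergence with uniform moment bounds on $\bP_\eps$: these yield tightness via a Kolmogorov criterion and control the Fourier integral, so that the inverse transform of $\frac12\hat\calK\,\hat\Psi$ is well defined and the convergence is uniform in $(s,\by)$, the assumptions $\Psi\in\calS_{0}(\R\times\R^2)$ and the decay of $\hat\calK$ in $\omega$ supplying the needed integrability. It then remains to check that $\hat\calK$ solves $\partial_z\hat\calK=-\frac{\theta'^2_0\omega^2}{8c_0^2}(\Gamma_c+i\Gamma_s)\hat\calK-\frac{i\omega c_0|\kappa|^2}{2}\hat\calK$ with $\hat\calK(\cdot,\cdot,0)=1$, and to recognise, upon inverting the transform, the nonlocal term $\frac{\theta'^2_0}{8c_0^2}\int_{-\infty}^s R\big(c_0(s-\tau)/2\big)\partial^3_{sss}\psi(\tau)\,d\tau$ as the operator $\calI$ of \eqref{def:I}. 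The causal (lower-limit) form of this convolution reflects that $\Gamma_c+i\Gamma_s$ is the Fourier transform of a kernel supported on the half-line, and well-posedness in $\calC^0_z\cap\calC^1_z$ with values in $\calS'_{0,s,\by}(\R\times\R^2)$ follows directly from the explicit multiplier in \eqref{def:K}.
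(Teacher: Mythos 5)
Your overall architecture --- the modal decomposition in $(\omega,\kappa)$, the compensation of the anomalously large random phase through the travel-time centering \eqref{def:RTT} and \eqref{def:AB}, and the use of the approximation-diffusion theorem for the centered amplitudes --- coincides with the paper's. However, your third step contains a genuine error that breaks the proof. You assert that for each fixed $(\omega,\kappa)$ the centered transmission coefficient converges \emph{in probability to the deterministic} transfer function $\hat \calK(\omega,\kappa,L)$. This is false, and it is not what Theorem \ref{th:asymp} provides: that theorem gives convergence \emph{in distribution} of $(A_\eps,B_\eps)$ to the solution $X_0(\omega,\cdot)$ of the stochastic differential equation \eqref{eq:EDS}, which is genuinely random (it is driven by the two Brownian motions $W_1,W_2$). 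The quantity $\hat\calK(\omega,\kappa,L)$ is only the \emph{expectation} of the limiting reciprocal amplitude,
\[
\lim_{\eps\to 0}\E\Big[\frac{1}{\overline{A_\eps(\omega,\kappa,L)}}\Big] = \calE(\omega) := e^{-\theta'^2_0\omega^2(\Gamma_c(\omega)+i\Gamma_s(\omega))L/(8c_0^2)},
\]
computed from \eqref{eq:EDS} via the It\^o formula (the martingale part has zero mean). At a single frequency the transmission coefficient keeps fluctuating in the limit; there is no pointwise stabilization, exactly as in the classical O'Doherty-Anstey setting.

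Consequently your fourth step begs the question: ``since the limit is deterministic, it suffices to combine the mode-wise convergence with uniform moment bounds'' presupposes precisely what must be proved. The mechanism that actually makes the limiting wave-front deterministic is \emph{self-averaging across frequencies}, and this is the idea missing from your proposal. In the paper (Proposition \ref{prop:asymp}) one bounds $\E[\sup_{s,\by}|\cdot|^2]$ of the error by an integral over $(\omega_1,\kappa_1,\omega_2,\kappa_2)$ of the cross-moment $\bE_\eps$ built from $1/\overline{A_\eps(\omega_1,\kappa_1,L)}$ and $1/A_\eps(\omega_2,\kappa_2,L)$, and two ingredients make it vanish: (i) the uniform bound $|1/\overline{A_\eps}|\leq 1$ coming from the conservation relation \eqref{eq:conservation}, which upgrades convergence in distribution to convergence of expectations; and (ii) crucially, the statistical independence, stated in Theorem \ref{th:asymp}, of the limiting amplitudes at \emph{distinct} frequencies, which yields $\E\big[1/\big(\overline{A_\eps(\omega_1,\kappa_1,L)}\,A_\eps(\omega_2,\kappa_2,L)\big)\big]\to \calE(\omega_1)\overline{\calE(\omega_2)}$ and hence the vanishing of the variance of the frequency integral. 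It is the superposition over a continuum of decorrelating frequencies --- not any fixed-frequency convergence in probability --- that produces the deterministic kernel \eqref{def:K}. Your concluding remarks on the identification of the paraxial equation, the causal operator $\calI$, and uniqueness are consistent with the paper, but the core convergence argument as you propose it would fail.
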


The asymptotic transmitted wave-front $p^L_{tr}$, at the end of the random section ($z=L$), can be written in term of a convolution where $\calK$ represents the pulse deformation. From the explicit formulation of $\calK$ in the Fourier domain, the pulse shape is affected in a way which is consistent with the standard ODA theory (see \cite[Chapter 8]{fouque} and references therein for more details) even if we are not considering mixing fluctuations. Typically, according to this theory, the propagating pulse exhibits a deterministic spreading characterized by a frequency-dependent attenuation and phase modulation. In our context, we observe these effects through $\omega^2\Gamma_c(\omega)$ (which is positive thanks to the Bochner theorem) and $\omega^2\Gamma_s(\omega)$ respectively. Here, these two terms are similar to the ones obtained in \cite{garnier1, garnier2}, and are well defined even for slowly decaying correlations thanks to the oscillatory functions. We refer to Figure \ref{fig:pulsespread} for illustrations regarding the influence of the kernel $\calK$ on the propagating pulse spreading.
\begin{figure}
\begin{center}
\includegraphics[scale=0.3]{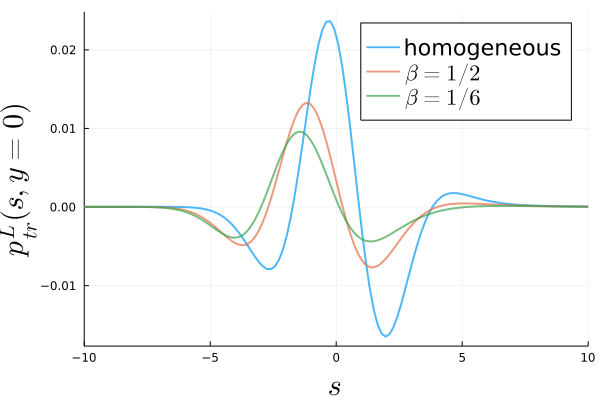}
\includegraphics[scale=0.3]{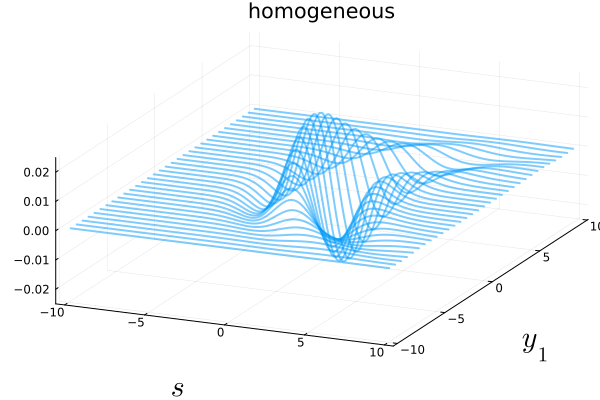}\\

\includegraphics[scale=0.3]{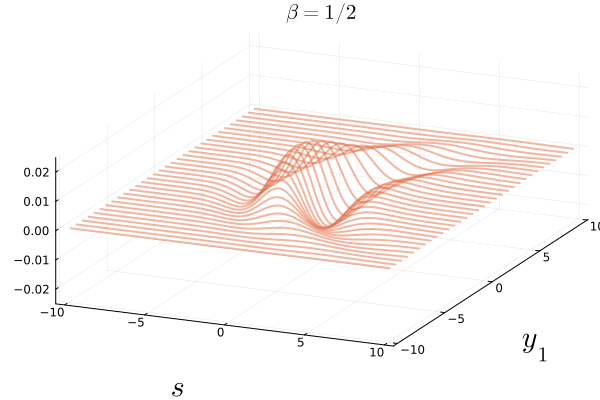}\includegraphics[scale=0.3]{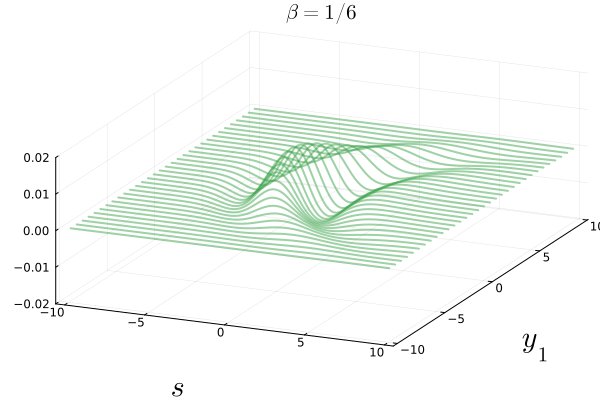}
\end{center}
\caption{\label{fig:pulsespread} Illustrations of the profile $p^L_{tr}(s,y_1,y_2=0)$ in the homogeneous case (blue lines) and for two values of $\beta$ ($\beta=1/2$ for the orange curves and $\beta=1/6$ for the green curves). We take $\alpha=1/4$, $\mu=2$, $a(p)=\1_{(-10,10)}(p)$, $L=5$, $c_0=\theta'_0=1$ 
and a source profile given by $\hat \Psi(\omega,\kappa)=2\omega^2 e^{-\omega^2(1+\kappa^2)}$ centered at $s=0$ in the time domain for simplicity.}
\end{figure}

The ODA theory for mixing fluctuations also provides a random time-shift driven by a standard Brownian motion, meaning that the transmitted pulse exhibits a random arrival time at $z=L$ of order the pulse width. In the context of long-range correlations the situation is more delicate. The aforementioned random time-shift is already compensated in Theorem \ref{th:main} by considering the random travel time $T^0_\eps$ in the definition of the transmitted wave-front \eqref{def:p_tr_eps}. This allows to remove pathological behaviors when studying the asymptotic of the transmitted wave. As described in Section \ref{sec:travel_time}, for rapidly decaying correlations, $T^0_\eps$ can be approximated by a Brownian motion with mean $L/c_0$ and a standard deviation of order the pulse width, which is consistent with the standard ODA theory. In the case of long-range correlations, terms that lead to an effective random time-shift in the context of rapidly decaying correlations would now blow up. As we will see in Section \ref{sec:travel_time}, $T^0_\eps$ can be approximated by a fractional Brownian motion, with a Hurst index ranging from $1/2$ to $1$, and a standard deviation very large w.r.t. the pulse width. This is the reason we compensate this term in \eqref{def:p_tr_eps}, and then avoid this blow up in the derivation of the pulse spreading. These facts will be made more precise in Section \ref{sec:travel_time}, in which we show that $T^0_\eps$ is a convenient approximation of the travel time along random characteristics to reach depth $z=L$
\[
\int_0^L \frac{dz}{c_\eps(z)}. 
\]     

Regarding the backscattered signal at $z=0$, it can be shown that
\[
p_{bk,\eps}(s,\by) := p_\eps\big(\eps s, \seps \by, 0 \big)
\]
converges in probability to $0$, in $\calC(\R\times\R^2)$, as $\eps\to 0$. This is a consequence of Theorem \ref{th:asymp} and it is consistent with \cite[Chapter 9]{fouque}, in which the authors show that the backscatter wave is made of a \emph{small} incoherent signal that can be described through a random field. We will not go in this direction here since it requires involved mathematical developments that are beyond the scope of this paper.

In the time domain, the pulse deformation is described by the paraxial wave equation \eqref{eq:parax1} involving an integral term similar to the one obtained in \cite{garnier1, garnier2}. It is interesting to note that this integro-differential operator preserves the causality, since at a fixed time $s$ it involves only the knowledge of $\calK(\tau,\cdot,\cdot)$ for $\tau\leq s$. Also, in the Fourier domain, \eqref{eq:parax1} can be written as the following Schr\"odinger equation
\begin{equation}\label{eq:schrodinger}
i\frac{\omega}{c_0}\partial_z \check \calK(\omega,\by, z) + \frac{1}{2} \Delta_\by \check\calK(\omega,\by, z)  + i\frac{\theta'^2_0 \omega^3}{8c_0^3}  ( \Gamma_c(\omega)  + i \Gamma_s(\omega) ) \check \calK(\omega,\by) = 0,
\end{equation}
where
\begin{equation}\label{def:Fs}
\check \calK(\omega,\by,z) := \int \calK(s, \by, z)e^{i \omega s} ds.
\end{equation}
Moreover, from \eqref{def:K}, one can see that the effects of the propagation medium in \eqref{eq:schrodinger}, a frequency-dependent attenuation and dispersion, satisfy the Kramers-Kronig relations \cite{kramers, kronig}. These relations are stated more precisely in the following proposition, and proved in Appendix \ref{proof:KK}. Both the causality and the Kramers-Kronig relations contribute to the physical relevance of our paraxial wave equation. These properties are not surprising for \eqref{eq:schrodinger} since this equation is obtained from first principles of physics. 
\begin{proposition}\label{prop:KK}
The effective frequency-dependent attenuation $\omega^2 \Gamma_c(\omega)$ and dispersion $\omega^2 \Gamma_s(\omega)$ are analytic functions w.r.t. $\omega$ on the complex upper half-plane, and satisfy the following Kramers-Kronig relations in $\calS'(\mathbb{R})$,
\[
\calH(\omega'^2 \Gamma_c(\omega') )(\omega) = \omega^2 \Gamma_s(\omega)\qquad \text{and}\qquad \calH(\omega'^2 \Gamma_s(\omega') )(\omega) = - \omega^2 \Gamma_c(\omega),
\]
where $\calH$ stands for the Hilbert transform, and $\calS'(\mathbb{R})$ is the set of tempered distributions on $\R$.
\end{proposition}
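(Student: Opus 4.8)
The plan is to reduce the two relations and the analyticity claim to a single statement about the complex ``susceptibility''
\[
\Phi(\omega) := \omega^2\big(\Gamma_c(\omega)+i\,\Gamma_s(\omega)\big) = 2\,\omega^2\int_0^\infty R(s)\,e^{2i\omega s/c_0}\,ds,
\]
which is, up to the constant $-\theta'^2_0/(8c_0^2)$, exactly the exponent governing attenuation and dispersion in $\hat\calK$ of \eqref{def:K}. On the real axis $\omega^2\Gamma_c=\mathrm{Re}\,\Phi$ and $\omega^2\Gamma_s=\mathrm{Im}\,\Phi$, so the two Kramers--Kronig identities are precisely the real and imaginary parts of the single relation $\calH\Phi=-i\Phi$, and the analyticity assertion is the analyticity of $\Phi$ on $\{\mathrm{Im}\,\omega>0\}$, whose real and imaginary boundary values are $\omega^2\Gamma_c$ and $\omega^2\Gamma_s$.

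First I would establish the analyticity. Although $R$ is not integrable (see \eqref{eq:decay_R}), for $\mathrm{Im}\,\omega>0$ the factor $e^{2i\omega s/c_0}$ decays exponentially in $s>0$ and dominates the bounded, vanishing tail of $R$; hence the integral converges absolutely and locally uniformly and, by differentiation under the integral sign (or Morera's theorem), defines a holomorphic function on the open upper half-plane, which the entire factor $\omega^2$ preserves. I would also record that the $\omega^2$ prefactor removes the low-frequency singularity of $\Gamma$ at $\omega=0$ (of integrable type $O(|\omega|^{\gamma-1})$ for $\gamma\in(0,1)$ by \eqref{eq:decay_R}), so that $\Phi$ has a well-behaved boundary value there.

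For the Kramers--Kronig relations the structural fact is causality: $\Phi$ is, up to the entire factor $\omega^2$, the Fourier--Laplace transform of the one-sided kernel $s\mapsto 2R(s)\mathbf{1}_{(0,\infty)}(s)$. Concretely, after the rescaling $\tau=2s/c_0$ one writes $\Phi=-\calF[g'']$, where $g(\tau):=c_0R(c_0\tau/2)\mathbf{1}_{(0,\infty)}(\tau)$ and $\calF$ is the time Fourier transform \eqref{def:Fs}; thus $\calF^{-1}\Phi$ is supported in $[0,\infty)$. I would then invoke the distributional boundary-value (Titchmarsh-type) correspondence: a tempered distribution that is the boundary value of a function holomorphic and of polynomial growth on the upper half-plane has inverse Fourier transform supported in $[0,\infty)$, and its Hilbert transform satisfies $\calH\Phi=-i\Phi$; separating real and imaginary parts then yields $\calH(\omega'^2\Gamma_c)=\omega^2\Gamma_s$ and $\calH(\omega'^2\Gamma_s)=-\omega^2\Gamma_c$ in $\calS'(\R)$.

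The main obstacle is that $\Phi$ does not decay at infinity, so the classical $L^2$ Titchmarsh theorem does not apply verbatim. Integrating by parts twice in the defining integral gives the exact decomposition
\[
\Phi(\omega)=-\frac{c_0^2}{2}R'(0^+)+i\,c_0 R(0)\,\omega-\frac{c_0^2}{2}\int_0^\infty R''(s)\,e^{2i\omega s/c_0}\,ds,
\]
so $\omega^2\Gamma_c$ tends to the positive constant $-\tfrac{c_0^2}{2}R'(0^+)$ while $\omega^2\Gamma_s$ grows linearly like $c_0R(0)\,\omega$; equivalently, $\calF^{-1}\Phi$ carries Dirac masses proportional to $R'(0^+)\delta$ and $R(0)\delta'$ at the origin, i.e.\ an ``instantaneous'' part of the response. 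These origin-supported terms are exactly where the singular-integral Hilbert transform is ambiguous, and a naive application of the relations to them produces a spurious polynomial correction. The crux is therefore to treat this part rigorously in $\calS'$: I would split off the regular remainder $-\tfrac{c_0^2}{2}\int_0^\infty R''(s)e^{2i\omega s/c_0}ds$, which, since $R''\in L^1(0,\infty)$ by the bounds following from \eqref{def:R}--\eqref{eq:decay_R}, is the Fourier transform of a causal integrable kernel and hence obeys the classical Kramers--Kronig relations, and then handle the instantaneous polynomial part using the distributional Hilbert transform defined through the multiplier $-i\,\mathrm{sgn}$ consistently with the one-sided support of $\calF^{-1}\Phi$ (equivalently, via a contour argument in which the arc at infinity is controlled by this decomposition). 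Verifying that the assignment of $\mathrm{sgn}\cdot\delta^{(k)}$ at the origin compatible with causality is the one for which the two contributions combine to give exactly $\calH\Phi=-i\Phi$ is the delicate point of the proof; the analyticity of $\Phi$ on the upper half-plane guarantees that this is the correct choice.
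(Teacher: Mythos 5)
Your reduction to the single complex relation $\calH\Phi=-i\Phi$ for $\Phi=\omega^2(\Gamma_c+i\Gamma_s)$, your analyticity argument, and your splitting of $\Phi$ into a regular causal part plus an instantaneous (origin-supported) part are all sound, and up to that point you are following the paper's own route, only more carefully. Indeed, the paper's proof rests on the identity
\[
\omega^2\left(\Gamma_c(\omega)+i\,\Gamma_s(\omega)\right)=-\frac{c_0^2}{2}\int_0^\infty R''(s)\,e^{2i\omega s/c_0}\,ds,
\]
presented as the result of two integrations by parts with no boundary terms, whereas your computation correctly keeps them: the true identity carries the extra terms $-c_0^2R'(0^+)/2+i\,c_0R(0)\,\omega$, and for the model \eqref{def:R} both $R(0)=\int_S r(p)\,dp>0$ and $R'(0^+)=-\mu\int_S|p|^{2\beta}r(p)\,dp<0$ are nonzero. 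Your treatment of the remainder (the kernel $R''$ is in $L^1(0,\infty)$, causal, and its transform is in $L^2$, so the classical Titchmarsh/Kramers--Kronig relations apply) is also correct.

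The genuine gap is your last step, and it cannot be closed in the form you propose. No assignment of $\mathrm{sign}\cdot\delta^{(k)}$ does both things you need at once. The causal choice $\mathrm{sign}\cdot\delta^{(k)}:=\delta^{(k)}$ does yield $\calH\Phi=-i\Phi$, but it makes the extended operator non-real (it gives $\calH(A)=-iA$ for a real constant $A$), so the final step of separating real and imaginary parts is no longer legitimate. Conversely, any extension of $\calH$ that stays real and commutes with translations and dilations must annihilate both $1$ and $\omega$, and then the componentwise computation gives
\[
\calH\left(\omega'^2\Gamma_c(\omega')\right)(\omega)=\omega^2\Gamma_s(\omega)-c_0R(0)\,\omega\neq\omega^2\Gamma_s(\omega),
\]
while $\calH(\omega'^2\Gamma_s)$ is not even defined by duality, since $\calH\psi$ decays only like $1/|\omega|$ for a generic $\psi\in\calS(\R)$ whereas $\omega'^2\Gamma_s$ grows linearly (this is also why the paper's assertion $\calH(\calS(\R))\subset\calS(\R)$ is false). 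What your argument actually establishes is the subtracted relations, i.e.\ the Kramers--Kronig relations for $\omega^2\Gamma_c+c_0^2R'(0^+)/2$ and $\omega^2\Gamma_s-c_0R(0)\,\omega$, equivalently the stated identities tested only against Schwartz functions whose first two moments vanish (equality modulo polynomials of degree at most one). You should be aware that the paper's own proof establishes no more than this: dropping the boundary terms in the integration by parts is precisely what erases the polynomial discrepancy, so your proposal stalls exactly where the published argument is silently incomplete, and the defensible form of the proposition is the subtracted, or modulo-polynomial, one.
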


In case of long-range correlations, $\calI$ can be approximated by a Weyl fractional derivative whose order depends on the decay rate $\gamma$ of the correlation function $R$ at infinity (see \eqref{eq:decay_R}). Before stating the result, let us briefly introduce the notion of Weyl derivative, which is given for $\gamma\in(0,1)$ by
\[
D^\gamma f(s) := \frac{\gamma}{\Gamma(1-\gamma)} \int_{-\infty}^s \frac{f(s)-f(\tau)}{(s-\tau)^{1+\gamma}} d\tau\qquad s\in\R,
\]
whenever this quantity is well-defined, and $\Gamma$ stands for the Gamma function. For instance, $f$ can be a bounded $\gamma'$-H\"older function with $\gamma<\gamma'$. However, for $\calC^1$-functions with fast enough decay at $-\infty$, the Weyl derivative can be rewritten as
\[
D^\gamma f(s) = \frac{1}{\Gamma(1-\gamma)} \int_{-\infty}^s   \frac{f'(\tau)}{(s-\tau)^{\gamma}} d\tau.
\]   
To define higher order derivatives, one can just set
\begin{equation}\label{def:weyl_dev}
 D^{j+\gamma} f(s) := D^\gamma f^{(j)}(s) = \frac{1}{\Gamma(1-\gamma)} \int_{-\infty}^s   \frac{f^{(j+1)}(\tau)}{(s-\tau)^{\gamma}} d\tau\qquad j \in\mathbb{N}, 
\end{equation}
assuming $f$ smooth enough, with enough decay at $-\infty$ of its derivatives $f^{(j)}$. These latter requirements hold true for the kernel $\calK$ as soon as $z>0$ thanks to the damping term $\omega^2 \Gamma_c(\omega)$ in \eqref{def:K}. Therefore, \eqref{eq:decay_R} and \eqref{def:I} suggests that the integro-differential operator $\calI$ in \eqref{eq:parax1} can be approximated as follows
\[
\calI (\calK) \propto  D^{2+\gamma}_s \calK.
\]
In what follows, we emphasize that the fractional derivative $D^{2+\gamma}$ acts on the $s$-variable with the notation $D^{2+\gamma}_s$. To derive properly this observation we rescale the correlation function as follows. We replace the correlation function $R$ with the following scaled version, 
\begin{equation}\label{def:scaleR}
\sigma(l_0) R(z / l_0),
\end{equation}
where $l_0$ will be sent to $0$, and 
\begin{equation}\label{def:sigmal0}
\sigma(l_0) := \left\{ \begin{array}{ccc}
\displaystyle \frac{1}{l_0^{\gamma}} &\text{if}& \gamma\in (0,1),\\
&&\\
\displaystyle \frac{1}{l_0|\ln(l_0)|} & \text{if} & \gamma=1, \\
&&\\
\displaystyle \frac{1}{l_0} & \text{if} & \gamma >1. \\
\end{array}\right.
\end{equation}
In other words, we assume that the correlation length $l_0$ is small compared to the pulse duration. Under this scaling, the attenuation and dispersion coefficients read
\[
\Gamma_c(\omega,l_0):=2\sigma(l_0)\int_0^\infty \E[V(0)V(s/l_0)]\cos(2 \omega s/c_0) ds,
\]
and
\[ 
\Gamma_s(\omega,l_0):=2\sigma(l_0)\int_0^\infty \E[V(0)V(s/l_0)]\sin(2 \omega s/c_0)ds.
\] 
We can define accordingly, following \eqref{def:ptr0} and \eqref{def:K}, the transmitted wave-front $p^L_{tr,l_0}$ for which we have the following result proved in Section \ref{proof:th_asympt}. 
\begin{theorem}\label{th:asympt}
The family $(p^L_{tr,l_0})_{l_0}$ converges in $\calC(\R\times\R^2)$, as $l_0 \to 0$, to 
\[
p^L_{tr,0}(s,\by) := \frac{1}{2} \calK_0(\cdot,\cdot,L) \ast \Psi(s,\by)\qquad(s,\by)\in \R\times\R^2.
\]
Here, $\calK_0$ is defined in the Fourier domain by
\begin{equation}\label{def:hatK0}
\hat \calK_0(\omega,\kappa,z) :=
\left\{ 
\begin{array}{ccc}
\displaystyle e^{-\theta'^2_0 \omega^2 \Gamma_0 z/(8 c_0^2)}  e^{-i\omega c_0|\kappa|^2 z/2} & \text{if} & \gamma \geq 1, \\
&&\\
&&\\
\displaystyle  e^{-\theta'^2_0 R_0 |\omega|^{1+\gamma} ( \Gamma_{c,0}(\omega)+i  \Gamma_{s,0}(\omega)) z/(8 c_0^2)}  e^{-i\omega c_0|\kappa|^2 z/2} &\text{if} & \gamma \in (0,1),
\end{array}
\right.
\end{equation}
with
\begin{align*}
\Gamma_{c,0}(\omega) &= \Gamma(1-\gamma)\cos\Big(\frac{(1-\gamma)\pi}{2}\Big)\Big(\frac{2}{c_0}\Big)^{\gamma-1}, \\
 \Gamma_{s,0}(\omega) &= \Gamma(1-\gamma)\sin\Big(\frac{(1-\gamma)\pi}{2}\Big)\Big(\frac{2}{c_0}\Big)^{\gamma-1}\emph{sign}(\omega),
\end{align*}
and 
\begin{equation}\label{def:Gamma0}
\Gamma_{0} := \left\{
\begin{array}{ccc}
\displaystyle \frac{2a(0)}{\mu\,\beta} &\text{if}&\gamma=1,\\
&&\\
\displaystyle  \Gamma_c(0)&\text{if}&\gamma>1.
\end{array}
\right.
\end{equation}
Moreover, $\calK_0$ is the unique solution in $\calC^0_z([0,\infty),\calS'_{0,s,\by}(\R\times\R^2)) \cap \calC^1_z((0,\infty),\calS'_{0,s,\by}(\R\times\R^2))$ to the paraxial wave equation
\begin{equation}\label{eq:parax2}
\partial^2_{sz} \calK_0 + \frac{c_0}{2} \Delta_{\by} \calK_0 - \calI_0(\calK_0) = 0,
\end{equation}
with $\calK_0(s,\by,z=0)=\delta(s)\delta(\by)$, and 
\[
\calI_0(\psi):=\left\{ 
\begin{array}{ccc}
\displaystyle \frac{\theta'^2_0 R_0}{8 c_0^2} \partial^3_{sss} \psi & \text{if} & \gamma \geq 1, \\
&&\\
&&\\
\displaystyle  \frac{\theta'^2_0 R_0 \Gamma(1-\gamma)}{2^{3-\gamma}c_0^{1+\gamma}}  D^{2+\gamma}_s \psi  &\text{if} & \gamma \in (0,1).
\end{array}
\right.
\]
\end{theorem}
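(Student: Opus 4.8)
The plan is to exploit that, exactly as in Theorem~\ref{th:main}, the rescaled wave-front $p^L_{tr,l_0}$ is given in closed form in the Fourier domain through the kernel $\calK_{l_0}$ whose symbol is $\exp\big(-\theta'^2_0\omega^2(\Gamma_c(\omega,l_0)+i\Gamma_s(\omega,l_0))z/(8c_0^2)\big)\,e^{-i\omega c_0|\kappa|^2z/2}$. The whole statement then reduces to two steps: (i) the convergence of the scaled symbol $\omega^2(\Gamma_c(\omega,l_0)+i\Gamma_s(\omega,l_0))$ as $l_0\to0$, and (ii) a passage to the limit in the inverse Fourier transform. As a preliminary normalization I would perform the change of variables $u=s/l_0$ to rewrite
\[
\Gamma_c(\omega,l_0)=2\,\sigma(l_0)\,l_0\int_0^\infty R(u)\cos\!\Big(\frac{2\omega l_0 u}{c_0}\Big)du,\qquad \Gamma_s(\omega,l_0)=2\,\sigma(l_0)\,l_0\int_0^\infty R(u)\sin\!\Big(\frac{2\omega l_0 u}{c_0}\Big)du,
\]
so that the oscillation scale $a:=2\omega l_0/c_0$ tends to $0$ while the prefactor $\sigma(l_0)l_0$ is tuned through \eqref{def:sigmal0} to compensate exactly the growth of the integral in each correlation regime.

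Then I would treat the three regimes separately. For $\gamma>1$ the autocorrelation $R$ is integrable and $\sigma(l_0)l_0=1$, so dominated convergence gives $\Gamma_c(\omega,l_0)\to2\int_0^\infty R(u)\,du=\Gamma_c(0)=\Gamma_0$ and $\Gamma_s(\omega,l_0)\to0$. For $\gamma=1$ the integral diverges logarithmically; inserting the explicit form \eqref{def:R} of $R$ and integrating in $u$ first turns it into $\int_S r(p)\,\mu|p|^{2\beta}/(\mu^2|p|^{4\beta}+a^2)\,dp$, which behaves like $(a(0)/(\beta\mu))\,|\ln l_0|$ as $a\to0$; this is precisely cancelled by $\sigma(l_0)l_0=1/|\ln l_0|$, leaving $\Gamma_0=2a(0)/(\mu\beta)$, while the corresponding $\sin$-integral is only bounded (Dirichlet), so $\Gamma_s\to0$. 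The delicate case $\gamma\in(0,1)$ rests on the Abelian-type asymptotics for oscillatory integrals with a power-law tail: since $R(u)\sim R_0 u^{-\gamma}$,
\[
\int_0^\infty R(u)\cos(au)\,du\ \underset{a\to0^+}{\sim}\ R_0\,a^{\gamma-1}\!\int_0^\infty v^{-\gamma}\cos v\,dv=R_0\,a^{\gamma-1}\,\Gamma(1-\gamma)\cos\!\Big(\frac{(1-\gamma)\pi}{2}\Big),
\]
together with the companion $\sin$-formula; with $a=2|\omega|l_0/c_0$ and $\sigma(l_0)l_0=l_0^{1-\gamma}$ the powers of $l_0$ cancel, and after multiplication by $\omega^2$ the exponent $|\omega|^{\gamma-1}$ becomes the fractional exponent $|\omega|^{1+\gamma}$ carrying the constants $\Gamma_{c,0}(\omega)$ and $\Gamma_{s,0}(\omega)$ of \eqref{def:hatK0} (the $\mathrm{sign}(\omega)$ in $\Gamma_{s,0}$ coming from the parity of the trigonometric kernels).

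With the symbol limit in hand, the convergence of $p^L_{tr,l_0}$ follows from dominated convergence. Since $\omega^2\Gamma_c(\omega,l_0)\geq0$ by Bochner's theorem one has $|\hat\calK_{l_0}(\omega,\kappa,L)|\leq1$ uniformly in $l_0$; as $\Psi\in\calS_0(\R\times\R^2)$, the integrand $\tfrac12\hat\calK_{l_0}\hat\Psi\,e^{-i\omega(s-\kappa\cdot\by)}\omega^2$ of the inverse transform is dominated by the $(s,\by)$-independent integrable envelope $\tfrac12|\hat\Psi|\,\omega^2$ (the vanishing of $\hat\Psi$ at $\omega=0$ ensuring integrability near the origin). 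Pointwise convergence of the symbol then gives uniform convergence of $p^L_{tr,l_0}$ to $p^L_{tr,0}$, hence convergence in $\calC(\R\times\R^2)$. To identify $\calK_0$ as the solution of \eqref{eq:parax2} I would compute the Fourier multiplier of the Weyl derivative from \eqref{def:weyl_dev}: writing $D^{2+\gamma}f=\tfrac{1}{\Gamma(1-\gamma)}\,(u^{-\gamma}\1_{u>0})\ast f^{(3)}$ and using $\widehat{u^{-\gamma}\1_{u>0}}=\Gamma(1-\gamma)(-i\omega)^{\gamma-1}$ gives the multiplier $(-i\omega)^{2+\gamma}$; matching the symbol of $\calI_0$ against the limiting symbol read off from \eqref{def:hatK0} fixes the constant $\theta'^2_0 R_0\Gamma(1-\gamma)/(2^{3-\gamma}c_0^{1+\gamma})$, and the well-posedness asserted in the statement identifies $\calK_0$ as the unique solution.

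The main obstacle is the regime $\gamma\in(0,1)$. The Abelian asymptotics only capture the contribution of the power-law tail of $R$, so the error must be quantified: splitting $R(u)=R_0 u^{-\gamma}\1_{u>1}+\big(R(u)-R_0 u^{-\gamma}\1_{u>1}\big)$, the remainder is integrable and its oscillatory integral stays bounded as $a\to0$, hence negligible after multiplication by $l_0^{1-\gamma}\to0$. Making this bound uniform enough in $\omega$ to justify the dominated-convergence step of the previous paragraph is the crux, since the symbol convergence is \emph{not} uniform near $\omega=0$ nor as $|\omega|\to\infty$. Here the restriction $\Psi\in\calS_0$, which kills the $\omega=0$ contribution where the limit degenerates, together with the Schwartz decay of $\hat\Psi$, which controls large frequencies, are exactly what is needed to close the argument.
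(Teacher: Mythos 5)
Your overall architecture mirrors the paper's proof exactly: step (i) is the paper's Lemma \ref{lem:Gamma_scatt}, step (ii) is the dominated-convergence passage in Section \ref{proof:th_asympt} (with the same Bochner bound $|\hat\calK_{l_0}|\leq 1$ and the same use of the support assumptions on $\hat\Psi$), and the multiplier identification of $\calI_0$ is the paper's argument via the kernel $\phi(s)\propto s^{-\gamma}\1_{(0,\infty)}(s)$. Your treatment of $\gamma>1$ (rescale, dominated convergence) and of $\gamma=1$ (integrate in $s$ first, extract the $|\ln l_0|$ from the spectral integral, kill the sine part) is in substance the computation of Appendix \ref{proof:lemma_coef_scat}. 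The one place where you genuinely depart from the paper is the regime $\gamma\in(0,1)$: you stay in the time domain and invoke Abelian asymptotics for oscillatory integrals against the tail $R(u)\sim R_0u^{-\gamma}$, whereas the paper uses the explicit spectral form \eqref{def:R}, integrates in $s$ first, rescales $p\to l_0^{1/(2\beta)}p$, applies dominated convergence in $p$, and only then identifies the limit with the tabulated integrals $\int_0^\infty s^{-\gamma}e^{2i\omega s/c_0}ds$. Your route would be more general (it only uses the tail of $R$), but it is precisely the step that your proposal does not close.

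The gap is your error control: you split $R(u)=R_0u^{-\gamma}\1_{u>1}+\big(R(u)-R_0u^{-\gamma}\1_{u>1}\big)$ and claim the remainder is integrable, hence has a bounded oscillatory integral. That claim is false for admissible parameters. Substituting $q=pu^{1/(2\beta)}$ in \eqref{def:R} gives, up to exponentially small terms, $R(u)-R_0u^{-\gamma}=O\big(u^{-\gamma-1/(2\beta)}\big)$, and $\gamma+1/(2\beta)=(2-2\alpha)/(2\beta)>1$ if and only if $\alpha+\beta<1$, which the standing assumptions ($\alpha<1/2$, $2\beta>1-2\alpha$) do not guarantee: for $\alpha=1/4$, $\beta=1$ one has $\gamma=1/4$ and a remainder of size $u^{-3/4}$, which is not integrable. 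So the remainder cannot be disposed of by an $L^1$ bound; one must either (a) use the quantitative rate above together with an integration by parts and the companion estimate $|R'(u)+\gamma R_0u^{-\gamma-1}|\lesssim u^{-\gamma-1-1/(2\beta)}$ to show its cosine/sine transforms are $O(1)+O(a^{\gamma+1/(2\beta)-1})=o(a^{\gamma-1})$, which is indeed negligible after multiplication by $l_0^{1-\gamma}$, or (b) invoke the Abelian theorem for monotone regularly varying functions (here $R$ is decreasing), or (c) follow the paper's spectral-domain computation, which bypasses the issue entirely. Two further remarks: your worry about uniformity in $\omega$ is a non-issue, since dominated convergence needs only pointwise convergence of the symbol for $\omega\neq0$ plus the envelope $\tfrac12|\hat\Psi|\,\omega^2$; on the other hand, the uniqueness assertion is part of the theorem and cannot be settled by appealing to "the well-posedness asserted in the statement" --- that is circular. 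The paper proves it by a duality argument (testing any solution vanishing at $z=0$ against $\calK_0(\cdot,\cdot,z-Z)\ast\psi$), see Appendix \ref{sec:uniqueness}; some such argument must be supplied.
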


In this result, we can easily observe the difference between the case $\gamma\geq 1$ and $\gamma \in(0,1)$. In the former case, $\calI_0$ is a classical third order differential operator, while for $\gamma \in (0,1)$, we have a fractional derivative of order $2+\gamma\in (2,3)$. Also, even if the case $\gamma=1$ corresponds to slowly decaying correlations, the kernel $\calK_0$ behaves as for $\gamma>1$, and this case plays somehow  the role of a continuity point w.r.t. the order of derivation in $\calI_0$. Moreover, as we will see in the next section, the random travel time \eqref{def:RTT} has a very large standard deviation w.r.t. the pulse width for $\gamma=1$. Therefore, the case $\gamma=1$ do have the behavior of long-range correlations. 

Another remark, for $\gamma\geq 1$, there is no effective dispersion anymore in the limit $l_0\to 0$, it remains only an effective frequency-dependent attenuation in $\omega^2$. Nevertheless, the effective dispersion is still present for $\gamma\in(0,1)$, and as $\gamma \nearrow 1$, this dispersion remains of order $1$, while the attenuation becomes strong. For long-range correlations, one can observe in \eqref{def:hatK0} the frequency-dependent attenuation given by the power law
\[|\omega|^{1+\gamma}\qquad\gamma\in(0,1],\]
with exponent depending on the decay rate of the correlation function of the medium fluctuations \eqref{eq:decay_R}. 

Unfortunately, our choice of random field $V$ does not allow finer results for short-range correlations as the ones obtained in \cite{garnier2}, with $\lambda\in(0,1)$ in \eqref{eq:powerdecay}. In the case of short-range correlations ($\gamma>1$ in \eqref{eq:decay_R}), we would need the additional requirement 
\[
\int_{-\infty}^\infty R(z) dz = 0,
\]
which cannot be satisfied in our context since $R$ is a positive function.

Finally, due to technical reasons, our approach does not allow to derive the result of Theorem \ref{th:asympt} directly from the wave equation with a proper scaling in $\eps$. Such an approach would require $a(\eps^{1/(2\beta)} p)$ and $S/ \eps^{1/(2\beta)}$ in the definition \eqref{def:V} of $V$. However, in this case, the key technical estimate \eqref{eq:prop1V} would not be valid anymore. This is the reason why the second limit in $l_0$ is introduced.

\section{Travel time analysis}\label{sec:travel_time}

In this section, we discuss the asymptotic behavior of the random travel time \eqref{def:RTT} and its consequences. This analysis has already been carried out in \cite{garnier1, garnier2} under more general fluctuation models. Here, we work out the main lines, under our setting, to provide a complete picture regarding the impact of long-range correlations on the propagating pulse.

As already noticed, the travel time $T^0_\eps(L)$, for the stable wave-front to reach the plan $z=L$, is random. Its precise behavior can be described through the following result, which is proved in Appendix \ref{proof:travel_time}.

\begin{proposition}\label{prop:travel_time}
Let us defined 
\[
W_\eps(L):= \frac{1}{\sigma_\eps}\Big(T_\eps^0(L) - \frac{L}{c_0}\Big),
\]
where
\[
\sigma_\eps := \left\{ \begin{array}{ccc}
\displaystyle \eps^{(1+\gamma)/2} &\text{if}& \gamma\in (0,1),\\
&&\\
\displaystyle \eps|\ln(\seps)|^{1/2} & \text{if} & \gamma=1, \\
&&\\
\displaystyle \eps & \text{if} & \gamma >1. \\
\end{array}\right.
\]
The family $(W_\eps(L))_\eps$ converges in distribution to a limit $W_0(L)$, where:
\begin{itemize}
\item for $\gamma\in(0,1)$, $W_0$ is a fractional Brownian motion with Hurst index
\[H = 1-\frac{\gamma}{2} \in (1/2,1),\]
and
\[
\E[W_0(L)^2] = L^{2H} \frac{\theta'^2_0 R_0}{H(2H-1)},
\]
with $R_0$ defined by \eqref{def:R0};
\item for $\gamma \geq 1$, $W_0$ is a Brownian motion with
\[
\E[W_0(L)^2] = L \, \theta'^2_0 \Gamma_0,
\]
and $\Gamma_0$ defined by \eqref{def:Gamma0}.
\end{itemize} 
\end{proposition}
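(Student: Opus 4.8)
The plan is to reduce $W_\eps$ to a linear (hence Gaussian) functional of $V$ through a Hermite expansion, and then to recognize the limiting covariance as that of fractional Brownian motion. Writing
$W_\eps(L)=\frac{1}{2c_0\sigma_\eps}\int_0^L\Theta(\seps\,V(z/\eps))\,dz=\frac{\eps}{2c_0\sigma_\eps}\int_0^{L/\eps}\Theta(\seps\,V(u))\,du$,
I would expand the odd bounded function $\Theta(\seps\,\cdot)$ in Hermite polynomials relative to the law of $V(u)$, $\Theta(\seps\,v)=\sum_{k\ \mathrm{odd}}c_k(\seps)H_k(v)$, where the rank-one coefficient satisfies $c_1(\seps)\sim\theta'_0\seps$ and $c_k(\seps)=\calO(\eps^{k/2})$ for $k\ge 3$. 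This splits $W_\eps=W_\eps^{(1)}+\text{(remainder)}$, with $W_\eps^{(1)}(L)=\frac{c_1(\seps)\eps}{2c_0\sigma_\eps}\int_0^{L/\eps}V(u)\,du$ genuinely Gaussian and the remainder collecting the terms $k\ge 3$. This is precisely the rank-one reduction underlying the non-central limit theorem invoked in the paper.

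First I would dispose of the remainder. Using orthogonality of Hermite polynomials and $\E[H_k(V(u))H_k(V(v))]\propto R(u-v)^k$, the squared $\calS'$-$L^2$ norm of the rank-$k$ contribution is controlled by $\frac{c_k(\seps)^2\eps^2}{\sigma_\eps^2}\int_0^{L/\eps}\int_0^{L/\eps}|R(u-v)|^k\,du\,dv$. Since $R(z)\sim R_0|z|^{-\gamma}$, the rank-$k$ correlation decays like $|z|^{-k\gamma}$, so this double integral grows at most like $(L/\eps)^{2-k\gamma}$ when $k\gamma<1$ and like $L/\eps$ (up to a logarithm when $k\gamma=1$) otherwise. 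Combining with $c_k(\seps)^2=\calO(\eps^{k})$ and the normalizations $\sigma_\eps$, one checks that the resulting power of $\eps$ is strictly positive for every $k\ge 3$, so each such term vanishes in $L^2$ uniformly for $L$ in compacts; hence $W_\eps=W_\eps^{(1)}+o_P(1)$ and it suffices to analyse the Gaussian part.

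Since $W_\eps^{(1)}$ is centered Gaussian, its law is fixed by its covariance, and I would compute, for $L_1\le L_2$,
$\E[W_\eps^{(1)}(L_1)W_\eps^{(1)}(L_2)]=\frac{c_1(\seps)^2\eps^2}{4c_0^2\sigma_\eps^2}\int_0^{L_1/\eps}\int_0^{L_2/\eps}R(u-v)\,du\,dv$.
Replacing $R$ by its tail $R_0|s|^{-\gamma}$ and rescaling $u=\tilde u/\eps$, $v=\tilde v/\eps$ turns the double integral into $\eps^{\gamma-2}R_0\int_0^{L_1}\int_0^{L_2}|\tilde u-\tilde v|^{-\gamma}\,d\tilde u\,d\tilde v$, and with $c_1(\seps)^2\sim\theta'^2_0\eps$ and $\sigma_\eps=\eps^{(1+\gamma)/2}$ all powers of $\eps$ cancel. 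For $\gamma\in(0,1)$ the remaining integral equals $\frac{1}{(1-\gamma)(2-\gamma)}\big(L_1^{2-\gamma}+L_2^{2-\gamma}-|L_1-L_2|^{2-\gamma}\big)$, which is exactly, up to the overall prefactor, the fractional Brownian covariance with $H=1-\gamma/2$; evaluating at $L_1=L_2=L$ and collecting the prefactor produces the announced $\E[W_0(L)^2]$. For $\gamma>1$, $R$ is integrable and the double integral is asymptotic to $\min(L_1,L_2)/\eps$ times $\int_{\R}R=\Gamma_c(0)$, giving Brownian motion; for the marginal $\gamma=1$ the growth becomes $T\ln T$ with constant $\frac{2a(0)}{\mu\,\beta}$ arising from the $p\to 0$ region of $R$, which is precisely compensated by the $|\ln\seps|$ factor in $\sigma_\eps$.

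Finally, covariance convergence together with Gaussianity gives convergence of the finite-dimensional distributions to the announced centered Gaussian process — fractional Brownian motion for $\gamma\in(0,1)$ and Brownian motion for $\gamma\ge 1$ — and the $o_P(1)$ remainder does not affect this limit. If process-level convergence in $\calC([0,\infty))$ is intended, I would add tightness via the Kolmogorov criterion, estimating $\E[|W_\eps(L_2)-W_\eps(L_1)|^{2m}]$ through the Gaussian (hypercontractive) moment bounds on the increments. The hard part will be the covariance asymptotics: one must show that the non-tail part of $R$ contributes only lower-order terms, treat the marginal logarithmic case $\gamma=1$ where the dominant contribution comes from $p\to 0$ rather than from the tail of $R$, and control the rank-$k$ remainders uniformly despite the nonintegrability of the correlation function.
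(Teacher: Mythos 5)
Your proposal is correct in substance but follows a genuinely different route from the paper. The paper's proof is much shorter: instead of a Hermite expansion it uses a plain Taylor expansion of $\Theta$ at $0$ (exploiting that $\Theta$ is odd, so $\Theta''(0)=0$), which gives $|\nu_\eps(u/\eps)-\seps\,\theta'_0\,V(u/\eps)|\leq C\,\eps^{3/2}|V(u/\eps)|^3$, and then bounds $\E[|W_\eps(L)-\tilde W_\eps(L)|]$ by $C\,\eps^{3/2}/\sigma_\eps\to 0$ using the moment bound \eqref{eq:prop2V}; your higher-rank Hermite terms are exactly this remainder, so the Taylor route dispenses with the rank-by-rank bookkeeping entirely. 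For the surviving Gaussian part, the paper computes nothing: it cites an invariance principle for $\gamma>1$ (\cite{marty1}) and \cite{gomez3} for $\gamma\in(0,1]$, whereas you identify the limit by an explicit covariance computation plus Gaussianity (with Kolmogorov tightness if process-level convergence is wanted). Your version is self-contained and actually recovers the constants, including the delicate case $\gamma=1$, where your identification of the $T\ln T$ constant $2a(0)/(\mu\beta)$ is consistent with the tail constant, since $2R_0=2a(0)/(\mu\beta)=\Gamma_0$ exactly when $2\alpha+2\beta=1$ (this is in effect the content of Lemma \ref{lem:Gamma_scatt} for $\gamma=1$); the price is that you must justify the replacement of $R$ by its tail, which you correctly flag as the hard step.

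Two minor points. First, in the remainder estimate, "each rank-$k$ term vanishes" is not by itself sufficient since there are infinitely many ranks; however, orthogonality of distinct Hermite ranks together with the bound $|R(u-v)|^{k}\leq R(0)^{k-3}|R(u-v)|^{3}$ for $k\geq 3$ converts your term-by-term estimate into a bound on the full remainder variance of the form
\begin{equation*}
C\,\frac{\eps^{3}\,\eps^{2}}{\sigma_\eps^{2}}\iint_{[0,L/\eps]^2}|R(u-v)|^{3}\,du\,dv,
\end{equation*}
which tends to $0$ in all three regimes, so this gap is cosmetic. Second, your covariance retains the factor $1/(4c_0^2)$ inherited from the $1/(2c_0)$ in the definition \eqref{def:RTT} of $T^0_\eps$, while the variances displayed in the proposition (and the paper's own $\tilde W_\eps$, which is defined without the $1/(2c_0)$) drop this factor; your constant therefore differs from the stated one by $1/(4c_0^2)$. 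This is an inconsistency internal to the paper rather than an error in your argument, but you should not assert that your prefactor "produces the announced" variance without remarking on it.
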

In other words, the random travel time for the wave-front can be formally expressed as follows for $\gamma>1$,
\[
T^0_\eps(L) = \frac{L}{c_0} + \eps W_0(L) + o(\eps).
\] 
We can observe an effective random time-shift, w.r.t. the expected travel time $L/c_0$, given by a Brownian motion of order the pulse width $\eps$. This observation is consistent with the standard ODA theory. For $\gamma=1$, we now have
\[
T^0_\eps(L) = \frac{L}{c_0} + \eps\, |\ln(\seps)|^{1/2} W_0(L) + o(\eps),
\] 
with still a random time-shift given by a Brownian motion, but with a standard deviation (sd) larger, by a factor $|\ln(\seps)|^{1/2}$, than the pulse width. In other words, we have
\begin{equation}\label{eq:sd_LR1}
\text{sd}\Big[ \frac{T^0_\eps(L)}{\eps} \Big] \propto  |\ln(\seps)|^{1/2} \gg 1.
\end{equation}
This becomes more significant for slowly decaying correlations, with $\gamma\in(0,1)$, since we have
\[
T^0_\eps(L) = \frac{L}{c_0} + \eps \cdot \eps^{-(1-\gamma)/2} W_0(L) + o(\eps^{(1+\gamma)/2}),
\] 
where the random time-shift is now given by a fractional Brownian motion. The standard deviation of this random time-shift is larger than the pulse width by a factor $\eps^{-(1-\gamma)/2}\gg 1$, that is
\begin{equation}\label{eq:sd_LR2}
\text{sd}\Big[ \frac{T^0_\eps(L)}{\eps} \Big] \propto  \eps^{-(1-\gamma)/2} \gg 1.
\end{equation}
To sum up, for short-range correlations we observe a time-shift, w.r.t. the expected travel time $L/c_0$, of order the pulse width. But for long-range correlations, even if this shift remains small compared to the   expected travel time $L/c_0$, it becomes very large compared to the pulse width.

One can also remark that the random travel time \eqref{def:RTT} for the stable wave-front does not correspond exactly to the travel time along random characteristics
\[
T_\eps(L) := \int_0^L \frac{du}{c_\eps(u)} = \frac{1}{c_0} \int_0^L \sqrt{1+\nu_\eps(u/\eps)} du,
\]
representing the arrival time at $z=L$. The random time $T^0_\eps(L)$ provides a convenient approximation to $T_\eps(L)$ for the analysis developed in this paper. Hence, we observe an arrival delay 
\[\Delta T_\eps(L):= T^0_\eps(L) - T_\eps(L)\]
for the wave-front, which can be roughly expressed, after some algebra, as
\[
\Delta T_\eps(L) =  \frac{1}{8 c_0} \int_0^L  \nu^2_\eps(z/\eps) dz + o(\eps).
\]
From this expression, one can see that this arrival delay is positive, for $\eps$ small enough, meaning that compared to the travel time $T_\eps(L)$ the stable wave-front exhibits a delay to reach the plan $z=L$. The comparison of the arrival times w.r.t. the pulse width $\eps$ can be characterized precisely as follows.
\begin{proposition}\label{prop:delay}
We have
\[\lim_{\eps\to 0} \frac{\Delta T_\eps(L)}{\eps} = \frac{\theta'^2_0 R(0) L}{8c_0}\]
in probability.
\end{proposition}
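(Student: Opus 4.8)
The plan is to combine the explicit quadratic expansion of the arrival delay with a self-averaging argument for the quadratic functional of the Gaussian field $V$. First I would make the heuristic expansion $\Delta T_\eps(L) = \frac{1}{8c_0}\int_0^L\nu_\eps^2(z/\eps)dz + o(\eps)$ rigorous. Since $\sup|\Theta|<1$, the fluctuation $\nu_\eps=\Theta(\seps V)$ stays in a fixed compact subinterval of $(-1,1)$, so Taylor's formula for $\sqrt{1+x}$ at $x=0$ gives $\sqrt{1+\nu_\eps}=1+\tfrac12\nu_\eps-\tfrac18\nu_\eps^2+r(\nu_\eps)$ with a uniform remainder bound $|r(x)|\le C|x|^3$. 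Subtracting $T_\eps(L)=c_0^{-1}\int_0^L\sqrt{1+\nu_\eps(u/\eps)}\,du$ from $T^0_\eps(L)$, the linear terms cancel and
\[
\Delta T_\eps(L) = \frac{1}{8c_0}\int_0^L\nu_\eps^2(z/\eps)\,dz - \frac{1}{c_0}\int_0^L r(\nu_\eps(z/\eps))\,dz.
\]
Using $|\Theta(x)|\le\|\Theta'\|_\infty|x|$ I would bound $|r(\nu_\eps)|\le C\eps^{3/2}|V|^3$ pointwise, so by stationarity and finiteness of $\E|V(0)|^3$ the last integral is $O(\eps^{3/2})$ in $L^1$; divided by $\eps$ it vanishes in probability. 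It then remains to analyze $F_\eps:=\eps^{-1}\int_0^L\nu_\eps^2(z/\eps)\,dz$ and show $F_\eps\to\theta'^2_0 R(0)L$.

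I would establish this convergence in $L^2$ by treating the mean and the variance separately. For the mean, stationarity gives $\E[F_\eps]=(L/\eps)\,\E[\Theta^2(\seps V(0))]$; since $\eps^{-1}\Theta^2(\seps v)=(\Theta(\seps v)/\seps)^2\le\|\Theta'\|_\infty^2 v^2$ uniformly in $\eps$ while $\eps^{-1}\Theta^2(\seps v)\to\theta'^2_0 v^2$ pointwise, dominated convergence against the law $\mathcal N(0,R(0))$ of $V(0)$ yields $\E[F_\eps]\to\theta'^2_0 R(0)L$. For the variance I would write
\[
\mathrm{Var}(F_\eps) = \frac{1}{\eps^2}\int_0^L\int_0^L\mathrm{Cov}\big(\Theta^2(\seps V(u/\eps)),\Theta^2(\seps V(u'/\eps))\big)\,du\,du',
\]
and exploit that $\Theta^2$ is even, hence of Hermite rank at least two: the standard covariance inequality for functionals of a stationary Gaussian field (Mehler's formula) bounds the covariance of $\Theta^2$ at two variables with correlation $\rho=R((u-u')/\eps)/R(0)$ by $\rho^2\,\mathrm{Var}(\Theta^2(\seps V(0)))$. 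The same dominated-convergence argument applied to the squares shows $\mathrm{Var}(\Theta^2(\seps V(0)))=O(\eps^2)$, so after substituting $s=u-u'$ and then $s=\eps w$ I obtain
\[
\mathrm{Var}(F_\eps) \le \frac{C}{R(0)^2}\,L\,\eps\int_{-L/\eps}^{L/\eps} R(w)^2\,dw.
\]

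The decisive point, and the only place where the long-range nature of the correlations enters, is to check that this bound still vanishes when $R^2$ fails to be integrable. Using the decay $R(w)^2\sim R_0^2|w|^{-2\gamma}$ from \eqref{eq:decay_R}, the integral $\int_{-L/\eps}^{L/\eps}R(w)^2\,dw$ is $O(1)$ for $\gamma>1/2$, $O(\ln(1/\eps))$ for $\gamma=1/2$, and $O(\eps^{-(1-2\gamma)})$ for $\gamma<1/2$; multiplied by the prefactor $\eps$ this gives $O(\eps)$, $O(\eps\ln(1/\eps))$, and $O(\eps^{2\gamma})$ respectively, all tending to $0$ since $\gamma>0$. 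Hence $\mathrm{Var}(F_\eps)\to0$, so $F_\eps\to\theta'^2_0 R(0)L$ in $L^2$, and combining this with the negligible cubic remainder yields $\Delta T_\eps(L)/\eps\to\theta'^2_0 R(0)L/(8c_0)$ in probability. I expect this last step to be the main obstacle: a priori one might fear that the non-summability of the correlations prevents the quadratic functional from self-averaging, but the diffusive rescaling $z\mapsto z/\eps$ supplies an extra factor $\eps$ that beats the slow growth of $\int R^2$ for every $\gamma>0$, so $V^2$ concentrates at its mean even though $V$ itself does not — its integral being precisely what produces the fractional-Brownian travel-time fluctuations of Proposition~\ref{prop:travel_time}.
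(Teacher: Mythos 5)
Your proof is correct, and its skeleton is the paper's: Taylor-expand $\sqrt{1+u}$ to isolate $\frac{1}{8c_0\eps}\int_0^L\nu_\eps^2(z/\eps)\,dz$, show the cubic remainder is negligible, then prove self-averaging of the quadratic functional by a mean/variance argument and conclude in probability. The implementation differs in two genuine ways. For the remainder, the paper conditions on the event $\seps\sup_{z\in[0,L/\eps]}|V(z/\eps)|\le\eta'$ and invokes \eqref{eq:prop1V}, whereas you exploit $\sup|\Theta|<1$ to get a uniform Lagrange bound $|r(x)|\le C|x|^3$ and finish with stationarity and $\E[|V(0)|^3]<\infty$; your version is more elementary and avoids the truncation. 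For the concentration step, the paper performs a second Taylor expansion, replacing $\Theta^2(\seps V)$ by $\eps\,\theta'^2_0 V^2$ (error controlled via \eqref{eq:prop2V}), so that Gaussianity gives the covariance exactly by Wick's formula, $\mathrm{Cov}(V^2(z_1),V^2(z_2))=2R^2(z_1-z_2)$, and then concludes \emph{softly}: since $R(z)\to0$ at infinity and $R^2\le R(0)^2$, dominated convergence kills the double integral with no rate and no case distinction in $\gamma$. You instead keep $\Theta^2(\seps V)$ as is, handle the mean by dominated convergence, and replace the exact Wick computation by the Gebelein/Mehler covariance inequality, using that $\Theta^2$ is even and hence of Hermite rank at least two; you then estimate $\eps\int_{-L/\eps}^{L/\eps}R^2(w)\,dw$ explicitly. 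What your route buys is quantitative rates, $O(\eps)$, $O(\eps\ln(1/\eps))$, $O(\eps^{2\gamma})$ according as $\gamma>1/2$, $\gamma=1/2$, $\gamma<1/2$, and a transparent account of why long-range correlations never destroy the self-averaging; what the paper's route buys is economy of tools (only Wick and dominated convergence, no Hermite machinery). Two small remarks: the Hermite-rank-two refinement, while correct, is not actually needed, since even the rank-one bound $|\rho|\,\mathrm{Var}(\Theta^2(\seps V(0)))$ yields $\mathrm{Var}(F_\eps)=O(\eps^{\gamma\wedge 1})$ up to a logarithm, which vanishes for every $\gamma>0$; and your asymptotics for $\int_{-L/\eps}^{L/\eps}R^2$ implicitly use that $R$ is bounded near the origin, which holds here since $R(0)<\infty$.
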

The details of the proof are provided in Appendix \ref{proof:delay}, and this result shows that the wave-front exhibits a deterministic delay of order the pulse width w.r.t. the travel time $T_\eps(L)$.

\section{Modal decomposition} \label{sec:modal_dec}

The stochastic analysis provided in this paper is based on a modal decomposition of the wave field in the space-time frequency domain, which follows the lines of \cite[Chapter 14]{fouque}.    

To study \eqref{eq:waveeq_eps}, we introduce the following specific Fourier transform
\[
\hat f_\eps(\omega,\kappa) := \iint f(t,\bx)e^{i \omega (t/\eps-\kappa \cdot \bx/\seps)} dt d\bx,
\]
and its corresponding inverse formulation
\begin{equation}\label{def:iFT}
f(t,\bx) := \frac{1}{(2\pi)^3\eps^{2}}\iint \hat f_\eps(\omega,\kappa)e^{-i \omega (t/\eps-\kappa \cdot \bx/\seps)} \omega^2 d\omega d\bx,
\end{equation}
which are scaled according to the source term. Applying this Fourier transform to \eqref{eq:waveeq_eps} gives
\begin{equation}\label{eq:Fwaveeq_eps}
\partial^2_{zz} \hat p_\eps + \frac{\omega^2\lambda^2_\eps(\kappa)}{\eps^2}\hat p_\eps + \frac{\omega^2}{\eps^2 c^2_0}\nu_\eps\Big(\frac{z}{\eps}\Big)\mathbf{1}_{(0, L)}(z) \hat p_\eps  = \eps^{2} \hat \Psi(\omega,\kappa) \delta(z) \qquad (t,\bx,z)\in \R\times \R ^2 \times \R.
\end{equation}
with 
\[
\hat \Psi(\omega,\kappa) :=\iint \Psi(t,\bx)e^{i \omega (t-\kappa \cdot \bx)} dt d\bx,
\]
the unscaled Fourier transform of the source profile $\Psi$, and 
\begin{equation}\label{def:lambda_eps_k}
\lambda_\eps(\kappa) := \frac{\sqrt{1 - \eps c^2_0 |\kappa|^2 }}{c_0}.
\end{equation}
Throughout this paper, for simplicity, we assume that $\hat \Psi$ is compactly supported within a ball centered at $0$ and radius of order $1$, that is not depending on $\eps$. We also assume for technical reasons that $\omega=0$ does not belong to the support of the source:
\[
supp_\omega \hat \Psi \subset (-\infty ,-\omega_c)\cup (\omega_c,\infty),
\]
for some cutoff frequency $\omega_c>0$. This assumption allows to avoid unnecessary complications to define \eqref{eq:schrodinger} and in the proof of Theorem \ref{th:asympt}. Therefore, we have
\[
\hat p_\eps(\omega,\kappa,z) = 0\qquad\text{for}\qquad |\kappa| \geq \frac{1}{c_0 \, \seps}.
\]
These assumptions are not restrictive and do not change the overall result, but simplify greatly the presentation. In fact, for $|\kappa| < 1/(c_0 \, \seps)$, we only deal with the oscillatory components of the solution to \eqref{eq:Fwaveeq_eps}. The components associated to $|\kappa| > 1/(c_0 \, \seps)$ correspond to the evanescent modes that decay exponentially w.r.t the $z$-variable. Due to this exponential decay, the evanescent modes do not contribute in a significant way in the limit $\eps\to 0$, and are therefore considered as negligible.        

Note also that the source term in \eqref{eq:Fwaveeq_eps} produces the following jump conditions at the source location $z=0$, that are used below to determine the initial amplitudes of the modal decomposition:
\begin{equation}\label{eq:jump_cond}
\begin{split}
\hat p_\eps(\omega,\kappa,z=0^+)- \hat p_\eps(\omega,\kappa,z=0^-) & = 0,  \\
\partial_z\hat p_\eps(\omega,\kappa,z=0^+)- \partial_z\hat p_\eps(\omega,\kappa,z=0^-) & = \eps^2 \hat \Psi(\omega,\kappa).
\end{split}
\end{equation}

\subsection{Mode coupling in random media}

In the random section, that is for $z \in (0,L)$, we decompose the solutions to the second order equation \eqref{eq:Fwaveeq_eps} as right- and left-going modes,
\begin{equation}\label{def:dec_p}
\hat p_\eps(\omega,\kappa,z):= \frac{1}{\sqrt{\omega \lambda_\eps(\kappa)}}\Big( \hat a_\eps(\omega,\kappa, z) e^{i\omega \lambda_\eps(\kappa) z/ \eps} + \hat b_\eps(\omega,\kappa, z) e^{-i\omega \lambda_\eps(\kappa) z/\eps}\Big),
\end{equation}
with the additional condition
\[
\frac{d}{dz} \hat a_\eps(\omega,\kappa, z) e^{i\omega \lambda_\eps(\kappa) z/ \eps} + \frac{d}{dz} \hat b_\eps(\omega,\kappa, z) e^{-i\omega \lambda_\eps(\kappa) z/\eps} =0,
\]
so that 
\[
\partial_z \hat p_\eps(\omega,\kappa,z)= \frac{i\sqrt{\omega \lambda_\eps(\kappa)} }{\eps}\Big(\hat a_\eps(\omega,\kappa, z) e^{i\omega \lambda_\eps(\kappa) z/ \eps} - \hat b_\eps(\omega,\kappa, z) e^{-i\omega \lambda_\eps(\kappa) z/\eps}\Big).
\]
Hence, both $\hat a_\eps$ and $\hat b_\eps$ can be expressed in terms of $\hat p_\eps$ and $\partial_z \hat p_\eps$:
\begin{align*}
\hat a_\eps(\omega,\kappa, z) &= \frac{1}{2}\Big( \sqrt{\omega \lambda_\eps(\kappa)} \hat p_\eps(\omega,\kappa,z) + \frac{\eps}{i\sqrt{\omega \lambda_\eps(\kappa)}}\partial_z\hat p_\eps(\omega,\kappa,z)\Big)e^{-i\omega \lambda_\eps(\kappa) z/ \eps},\\
\hat b_\eps(\omega,\kappa, z) & = \frac{1}{2}\Big( \sqrt{\omega \lambda_\eps(\kappa)}\hat p_\eps(\omega,\kappa,z) - \frac{\eps}{i\sqrt{\omega \lambda_\eps(\kappa)}}\partial_z\hat p_\eps(\omega,\kappa,z)\Big)e^{i\omega \lambda_\eps(\kappa) z/ \eps}.
\end{align*}

Here, $\hat a_\eps$ represents the amplitudes of the right-going modes, while $\hat b_\eps$ the ones of the left-going modes. Differentiating in $z$, these two last expressions, and using \eqref{eq:Fwaveeq_eps} give

\begin{equation}\label{eq:coupledmode}
\frac{d}{dz}\begin{pmatrix}
\hat a_\eps(\omega,\kappa, z) \\
\hat b_\eps(\omega,\kappa, z)
\end{pmatrix}
= \frac{1}{\eps}\nu_\eps\Big(\frac{z}{\eps}\Big) H_\eps\Big(\omega,\kappa,\frac{z}{\eps}\Big)
\begin{pmatrix}
\hat a_\eps(\omega,\kappa, z) \\
\hat b_\eps(\omega,\kappa, z)
\end{pmatrix},
\end{equation}
where
\begin{equation}
\label{def:Hmat}
H_\eps(\omega,\kappa,z) = \frac{i\omega}{2\lambda_\eps(\kappa)c_0^2} \begin{pmatrix}
1 & e^{-2i\omega \lambda_\eps(\kappa)z} \\
- e^{2i\omega \lambda_\eps(\kappa)z} & -1
\end{pmatrix}.
\end{equation} 
This differential equation describes how a wave is affected while going through the slab $(0,L)$. More precisely, it describes how the medium fluctuations produce the scattering effects on the propagating wave through the exchange between the right- and left-going modes. Note that there is no coupling between any two distinct $\kappa$-modes since we consider a randomly layered propagation medium. Moreover, the scattering slab is surrounded by two homogeneous half-spaces, so that we need to complement this system with boundary conditions representing the incoming waves in the slab and the outgoing waves from the slab.

\subsection{Boundary conditions}

In this section, we depict the propagation mechanism in absence of random fluctuations of the wave-speed profile, which corresponds to the situation for $z<0$ and $z>L$.  
\begin{figure}
\begin{center}
\includegraphics[scale=0.25]{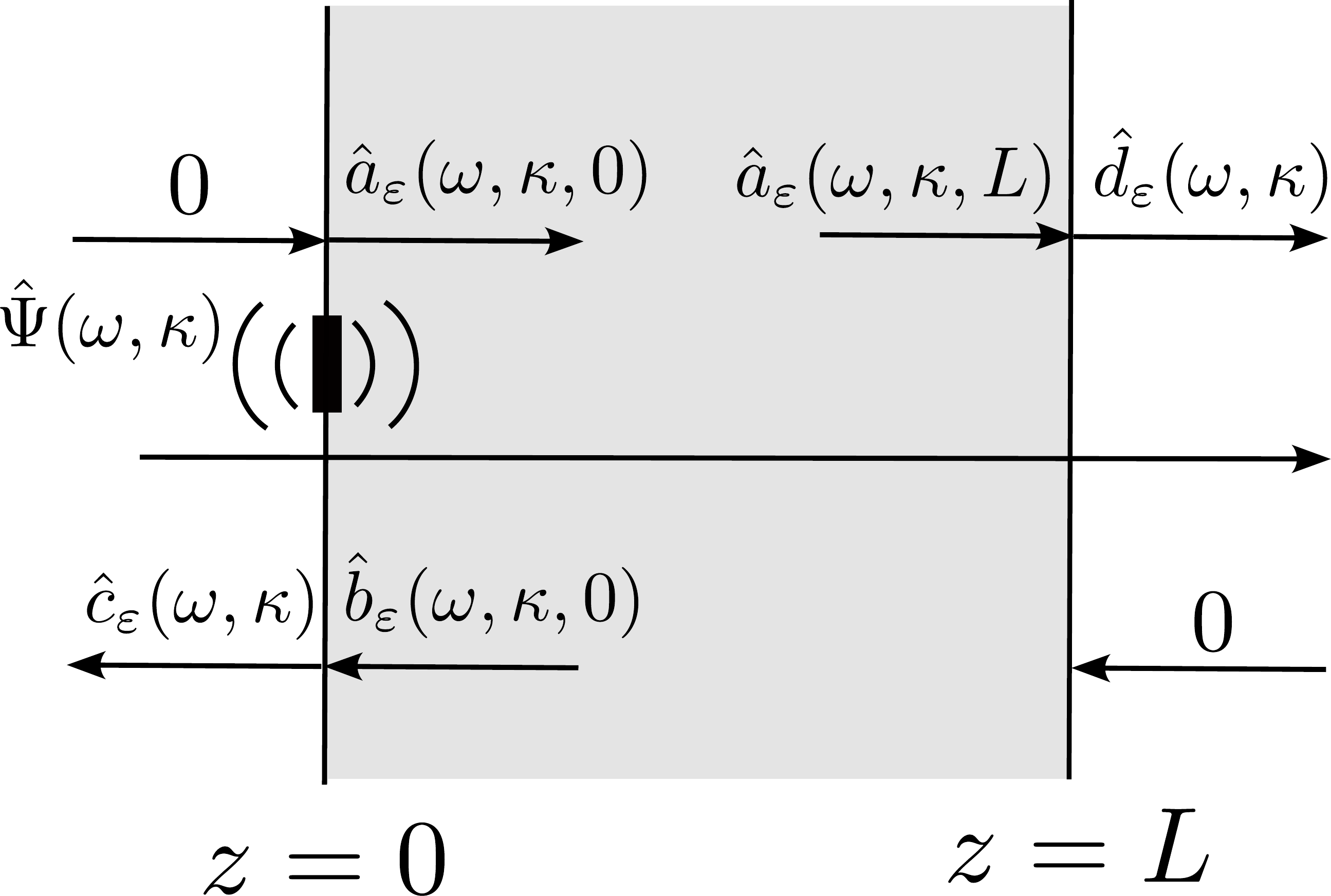}
\end{center}
\caption{\label{fig2} Illustration of the mode amplitudes associated to the incoming and outgoing waves at $z=0$ and $z=L$.}
\end{figure}
Considering a decomposition similar to \eqref{def:dec_p}, in absence of random fluctuations, leads to constant mode amplitudes in view of \eqref{eq:coupledmode}. Therefore, if we assume that no waves are coming from the left- and the right-hand side of the random slab $z\in(0,L)$ (see Figure \ref{fig2} for an illustration), we necessarily have
\[
\hat p_\eps(\omega,\kappa,z):= \frac{\hat c_\eps(\omega,\kappa)}{\sqrt{\omega \lambda_\eps(\kappa)}} e^{-i\omega \lambda_\eps(\kappa) z/ \eps}\qquad\text{for} \quad z<0,
\]
and
\[
\hat p_\eps(\omega,\kappa,z):= \frac{\hat d_\eps(\omega,\kappa)}{\sqrt{\omega \lambda_\eps(\kappa)}} e^{i\omega \lambda_\eps(\kappa) z/ \eps}\qquad\text{for} \quad z>L.
\]
As we will see below, on the left-hand-side of the source term $(z<0)$, we only have a left-going mode produced by the source and the backscattered field at $z=0$ with amplitude $\hat b_\eps(\omega,\kappa,0)$, no right-going mode coming from the left. In the same way, on the right-hand-side of the random section ($z>L$), we only have a right-going mode produced by the wave outgoing the random section at $z=L$, and no left-going mode coming from the right. To be more precise, reminding the expression of $\hat p_\eps$ at each side of the interface $z=L$, 
\begin{align*}
\hat p_\eps(\omega,\kappa,z) &= \frac{1}{\sqrt{\omega \lambda_\eps(\kappa)}}\Big( \hat a_\eps(\omega,\kappa, z) e^{i\omega \lambda_\eps(\kappa) z/ \eps} + \hat b_\eps(\omega,\kappa, z) e^{-i\omega \lambda_\eps(\kappa) z/\eps}\Big)\1_{(0,L)}(z)\\
&+\frac{\hat d_\eps(\omega,\kappa)}{\sqrt{\omega \lambda_\eps(\kappa)}} e^{i\omega \lambda_\eps(\kappa) z/ \eps}\1_{(L,\infty)}(z),
\end{align*}
and by continuity of $\hat p_\eps$ and $\partial_z \hat p_\eps$ at $z=L$, we have
\[
\hat d_\eps(\omega,\kappa) := \hat a_\eps(\omega,\kappa, z=L)\qquad \text{and} \qquad \hat b_\eps(\omega,\kappa, z=L)=0.
\]
To see how the source term charges the modes, we remind the expression of $\hat p_\eps$ at each side of the source position $z=0$,
\begin{align*}
\hat p_\eps(\omega,\kappa,z)&= \frac{\hat c_\eps(\omega,\kappa)}{\sqrt{\omega \lambda_\eps(\kappa)}} e^{-i\omega \lambda_\eps(\kappa) z/ \eps}\1_{(-\infty,0)}(z)\\ 
&+\frac{1}{\sqrt{\omega \lambda_\eps(\kappa)}}\Big( \hat a_\eps(\omega,\kappa, z) e^{i\omega \lambda_\eps(\kappa) z/ \eps} + \hat b_\eps(\omega,\kappa, z) e^{-i\omega \lambda_\eps(\kappa) z/\eps}\Big)\1_{(0,L)}(z),
\end{align*}
and use the jump conditions \eqref{eq:jump_cond} to obtain
\[
\hat a_\eps(\omega,\kappa, z=0) = \frac{\eps^{2}\sqrt{\omega \lambda_\eps(\kappa)}}{2}\hat \Psi(\omega,\kappa),
\]
and
\[
\hat c_\eps(\omega,\kappa) = \hat b_\eps(\omega,\kappa, z=0) + \frac{\eps^{2}\sqrt{\omega \lambda_\eps(\kappa)}}{2}\hat \Psi(\omega,\kappa).
\]

\subsection{Wave propagation in homogeneous media}

In this section, we provide a derivation of the paraxial approximation in the case of a homogeneous propagation medium ($\nu_\eps\equiv 0$). In this context, $\hat b_\eps \equiv 0$ from \eqref{eq:coupledmode}, and we simply have  
\[
\hat p_\eps(\omega,\kappa,z) = \frac{\eps^{2}}{2}\hat \Psi(\omega,\kappa)e^{i\omega \lambda_\eps(\kappa) z/ \eps} \qquad\text{for}\quad z>0,
\]
so that taking the inverse Fourier transform \eqref{def:iFT} gives
\[
p_\eps(t,\bx,z) = \frac{1}{2(2\pi)^3} \int \hat \Psi(\omega,\kappa)e^{i\omega \lambda_\eps(\kappa)z/\eps}e^{-i\omega (t/\eps - \kappa\cdot \bx/\seps)}\omega^2 d\omega d\kappa.
\]
Let us remark that
\begin{equation}\label{eq:lambda_asymp}
\lambda_\eps(\kappa) = \frac{1}{c_0} - \frac{\eps\, c_0}{2}  |\kappa|^2 + \calO(\eps^{2}),
\end{equation}
thanks to \eqref{def:lambda_eps_k}. Looking at the wave in the frame of the source term, by setting
\[
t = \frac{z}{c_0} + \eps \,s \qquad\text{and}\qquad \bx = \seps \,\by,
\]
we obtain 
\begin{align*}
\lim_{\eps\to 0} p_\eps\Big(\frac{z}{c_0} + \eps s, \seps \by, z\Big) &= \psi(s,\by,z)\\
&:= \frac{1}{2(2\pi)^3} \int \hat \Psi(\omega,\kappa) e^{-i\omega s} e^{-i\omega ( c_0 |\kappa|^2 z /2 - \kappa\cdot \by)}\omega^2 d\omega d\kappa.
\end{align*}
Here, $\psi$ satisfies the paraxial wave equation
\[\partial^2_{s z} \psi(s,\by,z) +\frac{c_0}{2}\Delta_{\by}\psi(s,\by,z) =0, \qquad\text{with}\qquad \psi(s,\by,z=0)=\frac{1}{2}\Psi(s,\by).\]
In other words, the pulse front can be described through the so-called paraxial wave equation, with a condition at $z=0$ given by half the source profile. Note that if we take the Fourier transform of $\psi$ w.r.t. time (the $s$-variable) we obtain the following Schrödinger equation
\[i\frac{\omega}{c_0} \partial_{z} \check\psi(\omega,\by,z) + \frac{1}{2}\Delta_{\by}\check \psi(\omega,\by,z) =0, \qquad\text{with}\qquad \check \psi(\omega,\by,z=0)=\frac{1}{2}\check\Psi(\omega,\by),\]
where
\[\check \psi(\omega,\by,z):= \int \psi(s,\by,z) e^{i\omega s}ds.\]

\section{Propagator matrix} \label{sec:propagator}

The system \eqref{eq:coupledmode} is a boundary value problem, with 
\begin{equation}\label{eq:bound_cond_ab}
\hat a_\eps(\omega,\kappa, z=0) = \frac{\eps^{2}\sqrt{\omega \lambda_\eps(\kappa)}}{2}\hat \Psi(\omega,\kappa)\qquad\text{and}\qquad\hat b_\eps(\omega,\kappa, z=L)=0, 
\end{equation}
which is not convenient for our analysis based on martingale techniques and diffusion processes corresponding to initial value problems. In this section, we introduce initial value problems that can be related to \eqref{eq:coupledmode}. First, we introduce the associated propagator matrix $\bP_\eps$, which is the solution to 
\[
\frac{d}{dz} \bP_\eps(\omega,\kappa, z)
= \frac{1}{\eps}\nu_\eps\Big(\frac{z}{\eps}\Big) H_\eps\Big(\omega,\kappa,\frac{z}{\eps}\Big)
\bP_\eps(\omega,\kappa, z),\qquad\text{with}\qquad \bP_\eps(\omega,\kappa, z=0)=\mathbf{I}_2,
\]
where $\mathbf{I}_2$ is the $2\times 2$ identity matrix. The relation between the left- and right-going modes with the propagator is given by
\begin{equation}\label{eq:prop_mat_ab}
\begin{pmatrix}
\hat a_\eps(\omega,\kappa, z) \\
\hat b_\eps(\omega,\kappa, z)
\end{pmatrix} = \bP_\eps(\omega,\kappa, z)\begin{pmatrix}
\hat a_\eps(\omega,\kappa, 0) \\
\hat b_\eps(\omega,\kappa, 0)
\end{pmatrix}. 
\end{equation}
From the symmetries of $H_\eps$, given by \eqref{def:Hmat}, the propagator matrix can be recast as 
\[
\bP_\eps(\omega,\kappa, z) = \begin{pmatrix}
\alpha_\eps(\omega,\kappa, z) & \overline{\beta_\eps(\omega,\kappa, z)}\\
\beta_\eps(\omega,\kappa, z) & \overline{\alpha_\eps(\omega,\kappa, z)}
\end{pmatrix},
\] 
where $(\alpha_\eps, \beta_\eps)$ being the solution to
\begin{equation}\label{eq:coupledmode2}
\frac{d}{dz}\begin{pmatrix}
\alpha_\eps(\omega,\kappa, z)\\
\beta_\eps(\omega,\kappa, z)
\end{pmatrix}
= \frac{1}{\eps}\nu_\eps\Big(\frac{z}{\eps}\Big) H_\eps\Big(\omega,\kappa,\frac{z}{\eps}\Big)
\begin{pmatrix}
\alpha_\eps(\omega,\kappa, z) \\
\beta_\eps(\omega,\kappa, z)
\end{pmatrix},
\end{equation}
with
\[
\begin{pmatrix}
\alpha_\eps(\omega,\kappa, 0)\\
\beta_\eps(\omega,\kappa, 0)
\end{pmatrix} = \begin{pmatrix}
1\\
0
\end{pmatrix}.
\]
From this equation, using that $H_\eps$ has null trace, the determinant of the propagator is then constant in $z$,
\[\det \bP_\eps(\omega,\kappa, z) = \det \bP_\eps(\omega,\kappa, 0)= 1, \]
yielding the conservation relation 
\begin{equation}\label{eq:cons1}
|\alpha_\eps(\omega,\kappa, z)|^2 - |\beta_\eps(\omega,\kappa, z)|^2 = 1.
\end{equation}
From these new variables, $\alpha_\eps$ and $\beta_\eps$, one can describe the transmitted mode amplitudes at $z=L$, and the reflected mode amplitudes at $z=0$ using \eqref{eq:bound_cond_ab} and \eqref{eq:prop_mat_ab}:
\[
\hat a_\eps(\omega,\kappa, z=L) = \frac{1}{\overline{\alpha_\eps(\omega,\kappa, L)}}\hat a_\eps(\omega,\kappa, z=0),
\]
and
\[
\hat b_\eps(\omega,\kappa, 0) = -\frac{\beta_\eps(\omega,\kappa, L)}{\overline{\alpha_\eps(\omega,\kappa, L)}}\hat a_\eps(\omega,\kappa, z=0).
\]
One can also remark from \eqref{eq:cons1}, that we have the following conservation relation for the right- and left-going modes
\[
|\hat a_\eps(\omega,\kappa, L)|^2 + |\hat b_\eps(\omega,\kappa, 0)|^2 = |\hat a_\eps(\omega,\kappa, 0)|^2,
\]
telling us that the input energy at $z=0$ equals the sum of the transmitted energy at $z=L$ and the reflected one at $z=0$. 

As already discussed in Section \ref{sec:travel_time}, the random travel time of the wave-front produces terms that can blow up in our scaling regime. To overcome this difficulty in our analysis, we reformulate \eqref{eq:coupledmode2} through the variables
\begin{equation}\label{def:AB}
A_\eps(\omega,\kappa,z) := \alpha_\eps(\omega,\kappa, z)e^{-i\omega \phi_\eps(\kappa,z)/\eps}\quad\text{and}\quad B_\eps(\omega,\kappa,z) := \beta_\eps(\omega,\kappa, z)e^{i\omega \phi_\eps(\kappa,z)/\eps},
\end{equation}
where
\begin{equation}\label{def:phi}
\phi_\eps(\kappa,z) := \frac{1}{2\lambda_\eps(\kappa)c_0^2} \int_0^z \nu_\eps(s/\eps^2)ds.
\end{equation}
While this quantity provides an effective limit as $\eps\to 0$ for mixing fluctuations, it blows up in case of long-range correlations, and this is the reason why we single out its contribution from \eqref{eq:coupledmode2}. In this latter context, the term \eqref{def:phi} is responsible of the large standard deviation of the random travel time $T^0_\eps$ w.r.t. to the pulse width (\ref{eq:sd_LR1}, \ref{eq:sd_LR2}).

The new variables $(A_\eps, B_\eps)$ satisfy the system 
\begin{equation}\label{eq:coupledmodefilt}
\frac{d}{dz}\begin{pmatrix}
A_\eps(\omega,\kappa, z) \\
B_\eps(\omega,\kappa, z)
\end{pmatrix}
= \frac{1}{\eps}\nu_\eps\Big(\frac{z}{\eps}\Big) \calH_\eps \Big(\omega,\kappa,\frac{\tau_\eps(\kappa,z)}{\eps}\Big)
\begin{pmatrix}
A_\eps(\omega,\kappa, z) \\
B_\eps(\omega,\kappa, z)
\end{pmatrix},
\end{equation}
with
\[
\calH_\eps(\omega,\kappa,\tau) = \frac{i\omega}{2\lambda_\eps(\kappa)c_0^2} \begin{pmatrix}
0 & e^{-i\omega \tau} \\
- e^{i\omega \tau} & 0
\end{pmatrix},
\]
and 
\begin{equation}\label{def:tau}
\tau_\eps(\kappa,z):= 2\lambda_\eps(\kappa)z + \phi_\eps(\kappa,z).
\end{equation}
Note that from \eqref{eq:cons1} and \eqref{def:AB}, we still have the conservation relation
\begin{equation}\label{eq:conservation}
|A_\eps(\omega,\kappa, z)|^2 - |B_\eps(\omega,\kappa, z)|^2 = 1.
\end{equation}

Let us remark that it is not clear how the strategy proposed by \cite{marty0}, based on the rough-path theory, could be applied to the system \eqref{eq:coupledmodefilt} in case of long-range correlations. Compared to \cite{marty0}, we have the additional random blowing term $\phi_\eps(\kappa,z)$ in the periodic component of $\calH_\eps$, which makes the coupling matrix in \eqref{eq:coupledmodefilt} nonlinear in $\nu_\eps$. This nonlinear behavior and the long-range correlation property make difficult the evaluation of key quantities allowing the use of the Terry-Lyons continuity theorem (see \cite{marty0}). This is the reason why we do not follow this route in this paper. 

Finally, to study the asymptotic behavior of the mode amplitudes $\hat a_\eps(\omega,\kappa,L)$ and $\hat b_\eps(\omega,\kappa,0)$, one can study the one of $(A_\eps,B_\eps)$ as $\eps\to 0$, which is given by the following result.
\begin{theorem}\label{th:asymp}
Let $n\geq 1$, and set
\[
X_\eps(\omega,\kappa, z)=\begin{pmatrix}
A_\eps(\omega,\kappa, z) \\
B_\eps(\omega,\kappa, z)
\end{pmatrix}.
\]
For any $(\omega_1,\dots,\omega_n)$ and $(\kappa_1,\dots,\kappa_n)$, the process $\calX_\eps$, defined by 
\[
\calX_\eps(z) := \big(X_\eps(\omega_1,\kappa_1, z),\dots,X_\eps(\omega_n,\kappa_n, z)\big), 
\]
converges in distribution in $\calC((0,\infty),\mathbf{C}^{2n})$ to a process 
\begin{equation}\label{def:calX0}
\calX_0(z) := \big(X_0(\omega_1, z),\dots,X_0(\omega_n, z)\big), 
\end{equation}
independent of the $\kappa$-variables, and where all its components are statistically independent. Here, for each $\omega$, $X_0(\omega,\cdot)$ is solution to the following stochastic differential equation
\begin{equation}\label{eq:EDS}
\begin{split}
d X_0(\omega, z) &= -\sqrt{\frac{\theta'^2_0 \omega^2 \Gamma_c(\omega)}{ 4c_0^2}}\begin{pmatrix} 0 & 1 \\ 1 & 0 \end{pmatrix}X_0(\omega, z) \circ dW_1(z) \\
&- i \sqrt{\frac{\theta'^2_0\omega^2 \Gamma_c(\omega)}{ 4c_0^2}}\begin{pmatrix} 0 & 1 \\ -1 & 0 \end{pmatrix}X_0(\omega, z) \circ dW_2(z) \\
& - i\frac{\theta'^2_0 \omega^2 \Gamma_s(\omega)}{8c_0^2}\begin{pmatrix} 1 & 0 \\ 0 & 1 \end{pmatrix}X_0(\omega, z) dz,
\end{split}
\end{equation}
where $W_1$ and $W_2$ are two independent real-valued standard Brownian motions, $\circ$ stands for the Stratonovich integral,
\[
\Gamma_c(\omega):=2\int_0^\infty R(s)\cos\Big(\frac{2 \omega s}{c_0}\Big) ds,\qquad\text{and}\qquad \Gamma_s(\omega):=2\int_0^\infty R(s) \sin\Big(\frac{2 \omega s}{c_0}\Big) ds,
\]
with $R$ given by \eqref{def:R}.
\end{theorem}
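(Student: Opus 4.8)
The plan is to prove the convergence through a martingale-problem formulation combined with the perturbed-test-function method, which is the natural tool here because the specific construction of $V$ as a superposition of Ornstein--Uhlenbeck-type processes \eqref{def:V}--\eqref{def:Vdp} makes the augmented process $\big(\calX_\eps(z), V(\cdot/\eps,\cdot)\big)$ Markovian: each mode $V(z,dp)$ solves an OU equation whose infinitesimal generator $\calL_V$ is explicit. I would therefore work with the generator $\calL_\eps$ of the joint fast--slow system driven by \eqref{eq:coupledmodefilt}, and aim to show that for smooth bounded test functions $\varphi$ on $\mathbf{C}^{2n}$ the limiting generator is the one associated with the decoupled system \eqref{eq:EDS}. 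Writing the coupling coefficient as $\eps^{-1}\nu_\eps(z/\eps)=\eps^{-1/2}\theta'_0 V(z/\eps)+\calO(\seps)$, the leading singular term in $\calL_\eps\varphi$ is of order $\eps^{-1/2}$, which sets the scaling.

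The core of the identification is to construct a perturbed test function $\varphi^\eps=\varphi+\seps\,\varphi_1+\eps\,\varphi_2$, where $\varphi_1$ is a functional of the OU state solving the Poisson equation $\calL_V\varphi_1=-(\text{order-}\eps^{-1/2}\text{ term})$, expressed mode-by-mode through the resolvent of the OU semigroup, and $\varphi_2$ removes the residual $\eps^{-1/2}$ contribution. Collecting the $\calO(1)$ terms then yields the candidate limit generator. The diffusion and drift coefficients emerge as the oscillatory integrals $\int_0^\infty R(s)\cos(2\omega s/c_0)\,ds=\Gamma_c(\omega)/2$ and $\int_0^\infty R(s)\sin(2\omega s/c_0)\,ds=\Gamma_s(\omega)/2$, produced by the rapid phase $e^{\pm i\omega\tau_\eps/\eps}$ in $\calH_\eps$: the real and imaginary parts of the off-diagonal coupling generate the two independent Brownian motions $W_1,W_2$ carrying the $\Gamma_c$ coefficient, while the sine integral $\Gamma_s$ produces the deterministic imaginary phase drift. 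Since $\lambda_\eps(\kappa)\to 1/c_0$ by \eqref{def:lambda_eps_k} and the contribution of $\phi_\eps$ in \eqref{def:tau}--\eqref{def:phi} can be controlled, the effective oscillation frequency tends to $2\omega/c_0$ uniformly in $\kappa$ on compacts, which is exactly what makes the limit independent of the $\kappa$-variables.

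The hard part will be that $R$ is \emph{not} integrable at infinity, so neither $\varphi_1$ nor the coefficient integrals converge absolutely; they are finite only because of the oscillatory phases, and must be controlled through them rather than through decay. I would exploit the explicit OU-mode structure to factorize the conditional correlation of the future increments over the spectral variable $p$, and use the key decorrelation estimate that the paper isolates for the modes $V(z,dp)$ to guarantee that the phase-weighted integrals $\int_0^\infty R(s)e^{\pm 2i\omega s/c_0}\,ds$ are finite even though $\int_0^\infty|R(s)|\,ds=\infty$. Establishing that these bounds are uniform in $\eps$, that $\E\big[\sup_{z}|\varphi^\eps-\varphi|\big]\to 0$, and that the remainder in the approximate martingale identity vanishes, is the technical heart of the argument and the place where the long-range correlation genuinely forces care.

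It then remains to run the standard machinery. Tightness of $(\calX_\eps)_\eps$ in $\calC((0,\infty),\mathbf{C}^{2n})$ follows from the conservation relation \eqref{eq:conservation} together with the increment moment bounds furnished by the perturbed test functions. Every limit point solves the martingale problem for the generator of \eqref{eq:EDS}; since the coefficients of \eqref{eq:EDS} are smooth with at most linear growth, this martingale problem is well-posed, so the limit law is unique and equals that of $\calX_0$. Finally, I would obtain the statistical independence of the components $X_0(\omega_i,\cdot)$ by checking that in the limiting generator all cross-terms between distinct pairs vanish: for $\omega_i\neq\omega_j$ the products of phases $e^{\pm i\omega_i\tau_\eps/\eps}e^{\pm i\omega_j\tau_\eps/\eps}$ oscillate at a nonzero frequency and average to zero, so the limiting diffusion matrix is block-diagonal and the joint law factorizes into independent factors, completing the proof.
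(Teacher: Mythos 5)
Your overall skeleton (pseudogenerator plus perturbed test functions, identification of a limiting martingale problem, well-posedness by linearity of the coefficients, independence of the components from averaging of non-resonant phases) is the same as the paper's, but your plan has a genuine gap at the identification step, exactly where the long-range correlations bite. After the first corrector removes the $\eps^{-1/2}$ singular term, the $\calO(1)$ part of the generator contains \emph{resonant} products $V(z/\eps)\,V(z/\eps,dp)$ that carry no oscillatory phase (the terms the paper collects in $\calB^\eps_0$). Your second corrector $\varphi_2$, built by solving a Poisson equation through the resolvent of the OU generator, fails precisely on these terms: with no phase present, the resolvent produces denominators $\fg(p)=\mu|p|^{2\beta}$ (or $\fg(p)+\fg(q)$) with no imaginary offset, hence quantities of the form $\int_S r(p)/\fg(p)\,dp$, which diverge at $p=0$ exactly when $\gamma=(1-2\alpha)/(2\beta)\leq 1$, i.e.\ in the long-range case the theorem is about. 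Oscillation cannot rescue this step --- the conditional convergence of $\int_0^\infty R(s)e^{2i\omega s/c_0}ds$ is not the issue, the divergence of the corrector itself is. The paper circumvents this by \emph{not} using a corrector for $\calB^\eps_0$: Proposition \ref{prop:limB0} averages it by a block-discretization argument, partitioning $[0,z]$ into intervals of length $\seps\eta$, freezing the slow variables on each block, computing the Gaussian second moment of the block averages, and splitting the spectral integral into $|p|\leq \eta'$ (small by integrability of $r$) and $|p|>\eta'$ (where $\fg(p)$ is bounded below). Without this device, or an equivalent substitute, your identification does not go through.

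A second, smaller gap concerns tightness. The conservation relation \eqref{eq:conservation} does \emph{not} bound the process: it only forces $|A_\eps|\geq 1$, and both $|A_\eps|$ and $|B_\eps|$ can be arbitrarily large, so you cannot invoke it to get the uniform bounds that Kushner-type tightness criteria require. The paper addresses this by introducing the truncated process $\calX_{\eps,M}$ (cutoff $F_M$), proving tightness and identification for each fixed $M$, and then recovering the untruncated statement through the stopping times $\eta_M$ and the localization argument of Stroock--Varadhan; moreover, the moment bounds on the perturbed test functions rest on the specific uniform estimates \eqref{eq:prop1V}--\eqref{eq:prop2V} for $V$ over Sobolev balls (Lemma \ref{l:bound}), not merely on a generic decorrelation property of the modes. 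Your proposal needs both of these ingredients to close.
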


This result is consistent with the one of \cite[Section 7.1]{fouque}. The only difference with \cite[Equation 7.16]{fouque} comes from the extra term 
\[
i\sqrt{\frac{\theta'^2_0\omega^2 \Gamma_c(0)}{ 4c_0^2}}\begin{pmatrix} 1 & 0 \\ 0 & -1 \end{pmatrix}X_0(\omega, z) \circ dW_0(z),
\]
which is not in \eqref{eq:EDS}, and where $W_0$ is a real-valued standard Brownian motion independent of $(W_1,W_2)$. This extra term is responsible for the Brownian arrival time-shift in the standard ODA theory, as exhibited through the random travel time of the wave-front in Section \ref{sec:travel_time}. In Theorem \ref{th:asymp}, even under short-range correlations ($\gamma>1$ in \eqref{eq:decay_R}), this term has disappeared since we have considered the compensated mode amplitudes \eqref{def:AB}. Note also that in \cite[Section 7.1]{fouque} the Brownian time-shift is the same for each frequencies $\omega_j$, which then correlates all the mode amplitudes at different frequencies. This is in contrast with our result where the contribution of the random travel time has been compensated. In our context, the statistical independence for any distinct frequencies is responsible for the pulse stabilization, that is the convergence in probability toward a deterministic limit in Theorem \ref{th:main}. In the context of long-range correlations ($\gamma \in (0,1]$), the compensations \eqref{def:AB} are mandatory due to the fact that $\Gamma_c(0)=\infty$ in this case. 

Before going into the proof of Theorem \ref{th:asymp}, which is given in Section \ref{proof:th_asymp}, we show how this result plays a role in the proof of Theorem \ref{th:main}.

\section{The transmitted waves and proof of Theorem \ref{th:main}}\label{proof:th_main}

The transmitted wave, on the time frame of the random travel time $ T^0_\eps(L) $ and scaled according to the source profile \eqref{eq:waveeq_eps}, is given by
\begin{align*}
p^L_{tr,\eps}(s,\by)& := p_\eps\big( T^0_\eps(L) + \eps s, \seps \by, L \big)\\
&= \frac{1}{(2\pi)^3 \eps^2} \int \frac{\hat a_\eps(\omega,\kappa,L)}{\sqrt{\omega \lambda_\eps(\kappa)}} e^{i\omega (\lambda_\eps(\kappa) L - T^0_\eps(L))/\eps }e^{-i\omega ( s - \kappa \cdot\by)} \omega^2 d\omega \,d\kappa \\
& =  \frac{1}{2(2\pi)^3} \int \frac{\hat \Psi(\omega,\kappa)}{\overline{A_\eps(\omega,\kappa,L)}} e^{i\omega \Phi_\eps(\kappa, L)/\eps }e^{-i\omega ( s - \kappa \cdot\by)} \omega^2 d\omega \,d\kappa,
\end{align*}
where
\[
\Phi_\eps(\kappa, L) := \lambda_\eps(\kappa) L - T^0_\eps(L) + \phi_\eps(\kappa,L),
\]
with $T^0_\eps(L)$ defined by \eqref{def:RTT}, and $\phi_\eps(\kappa,L)$ defined by \eqref{def:phi}. For $\Phi_\eps(\kappa, L)$ we have the following result.
\begin{lemma}\label{lem:Phi}
We have
\[
\lim_{\eps\to 0}\frac{1}{\eps}\E\Big[ \Big|\Phi_\eps(\kappa,L) + \frac{L c_0}{2} |\kappa|^2 \Big| \Big] =0.
\]
\end{lemma}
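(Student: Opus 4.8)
The plan is to split $\Phi_\eps(\kappa,L)=\lambda_\eps(\kappa)L-T^0_\eps(L)+\phi_\eps(\kappa,L)$ into a purely deterministic part and a mean‑zero random part, and then to exploit the fact that the random contribution of $T^0_\eps$ and the compensating phase $\phi_\eps$ are built from the \emph{same} fast‑scale integral of $\nu_\eps$, so that only the difference of their deterministic prefactors survives. Recalling that $\phi_\eps(\kappa,z)$ is precisely the diagonal phase removed in \eqref{def:AB} (so that $\partial_z\phi_\eps=\nu_\eps(z/\eps)/(2\lambda_\eps(\kappa)c_0^2)$, the coefficient appearing in \eqref{eq:coupledmode2}), and that $T^0_\eps(L)-L/c_0=\frac{1}{2c_0}\int_0^L\nu_\eps(z/\eps)\,dz$ by \eqref{def:RTT}, I would write
\[
\Phi_\eps(\kappa,L)=\Big(\lambda_\eps(\kappa)L-\frac{L}{c_0}\Big)+\Big(\frac{1}{2\lambda_\eps(\kappa)c_0^2}-\frac{1}{2c_0}\Big)\int_0^L\nu_\eps(z/\eps)\,dz=:d_\eps(\kappa)+g_\eps(\kappa)\,N_\eps,
\]
where $d_\eps(\kappa),g_\eps(\kappa)$ are deterministic and $N_\eps:=\int_0^L\nu_\eps(z/\eps)\,dz$ has zero mean, since $\Theta$ is odd and $V$ is centered Gaussian.

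For the deterministic skeleton I would insert the expansion \eqref{eq:lambda_asymp}, $\lambda_\eps(\kappa)=1/c_0-\eps c_0|\kappa|^2/2+\calO(\eps^2)$, which gives $d_\eps(\kappa)=-\frac{\eps c_0 L}{2}|\kappa|^2+\calO(\eps^2)$ and $g_\eps(\kappa)=\frac{\eps c_0}{4}|\kappa|^2+\calO(\eps^2)=\calO(\eps)$. In particular $\Phi_\eps(\kappa,L)/\eps\to-\frac{c_0 L}{2}|\kappa|^2$, consistent with the paraxial phase $e^{-i\omega c_0|\kappa|^2 L/2}$ in \eqref{def:K}; after subtracting this limiting skeleton the deterministic remainder is $\calO(\eps^2)$, hence contributes $\calO(\eps)$ to $\eps^{-1}\E[\,|\cdot|\,]$ and vanishes.

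The heart of the argument is the moment bound on $N_\eps$. Using stationarity and the Cauchy--Schwarz inequality $\E[\,|N_\eps|\,]\le(\E[N_\eps^2])^{1/2}$, with
\[
\E[N_\eps^2]=\int_0^L\!\!\int_0^L\E[\nu_\eps(z/\eps)\,\nu_\eps(z'/\eps)]\,dz\,dz'
\]
and $\E[\nu_\eps(a)\nu_\eps(b)]=\eps\theta'^2_0 R(a-b)+\calO(\eps^2)$ (the Hermite rank‑one expansion of $\Theta$). Passing to $w=z-z'$ and rescaling $w=\eps u$ reduces the double integral to $\eps^2\theta'^2_0\int_{-L/\eps}^{L/\eps}(L-\eps|u|)R(u)\,du$, and the slow decay \eqref{eq:decay_R} then yields $\E[N_\eps^2]\asymp\sigma_\eps^2$, i.e. $\eps^{1+\gamma}$ for $\gamma\in(0,1)$, $\eps|\ln\eps|$ for $\gamma=1$, and $\eps$ for $\gamma>1$; this is exactly the computation underlying Proposition \ref{prop:travel_time}. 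Since $|g_\eps(\kappa)|/\eps=\calO(1)$, the random part contributes $\eps^{-1}|g_\eps(\kappa)|\,\E[\,|N_\eps|\,]\lesssim\sigma_\eps\to0$, and combining with the deterministic $\calO(\eps)$ estimate gives the claim.

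I expect the main obstacle to be precisely this last estimate, for two linked reasons. First, because $R$ is \emph{not integrable}, $\E[N_\eps^2]$ grows strictly faster than $\eps^2$, so that $N_\eps/\eps$ (and hence the raw travel‑time fluctuation $T^0_\eps/\eps$) blows up as $\sigma_\eps/\eps\to\infty$; the proof only closes because the designed cancellation between $T^0_\eps$ and $\phi_\eps$ multiplies $N_\eps$ by the factor $g_\eps(\kappa)=\calO(\eps)$, which exactly absorbs this anomalous growth. Second, one must justify that the non‑leading Hermite terms and the $\calO(\eps^2)$ correction in the two‑point function are genuinely of smaller order after integration against the non‑integrable $R$ on the growing window $(-L/\eps,L/\eps)$, which requires the decay estimate \eqref{eq:decay_R} to be used with care rather than the mere boundedness of $R$.
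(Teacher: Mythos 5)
Your proof is correct and begins exactly as the paper's does: the same splitting
\[
\Phi_\eps(\kappa,L)=L\Big(\lambda_\eps(\kappa)-\frac{1}{c_0}\Big)+\frac{1}{2c_0}\Big(\frac{1}{c_0\lambda_\eps(\kappa)}-1\Big)\int_0^L\nu_\eps(z/\eps)\,dz,
\]
the same use of \eqref{eq:lambda_asymp} for the deterministic part, and the same key observation that the random integral carries an $\calO(\eps)$ prefactor. (Like the paper, you silently read the lemma as $\E\big[\big|\Phi_\eps(\kappa,L)/\eps+\tfrac{Lc_0}{2}|\kappa|^2\big|\big]\to0$; that is indeed the form used in Section \ref{proof:th_main}.) The difference lies in how you bound $\E[|N_\eps|]$ for $N_\eps=\int_0^L\nu_\eps(z/\eps)\,dz$. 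The paper is cruder and quicker: it moves the absolute value inside the integral and uses only $|\Theta(\seps V)|\le\seps\sup|\Theta'|\,|V|$ together with stationarity, so that $\E[|N_\eps|]=\calO(\seps)$ and the random contribution to $\eps^{-1}\E[|\cdot|]$ is $\calO(\seps)$; the correlation function never enters, integrable or not. You instead use Cauchy--Schwarz and compute the variance through the Hermite rank-one expansion and the decay \eqref{eq:decay_R}, obtaining the sharp fluctuation size $\E[|N_\eps|]=\calO(\sigma_\eps)$ of Proposition \ref{prop:travel_time}. That is a genuine refinement, but it is not needed here, and the ``main obstacle'' you flag largely dissolves on inspection: since $|R|\le R(0)$, your own second-moment computation already gives $\E[N_\eps^2]\le\eps\,\theta'^2_0R(0)L^2+\calO(\eps^2)=\calO(\eps)$ with no appeal to \eqref{eq:decay_R} and no delicate handling of the non-integrable tail or of the Hermite remainders; the resulting $\E[|N_\eps|]=\calO(\seps)$ suffices because the $\calO(\eps)$ factor $g_\eps(\kappa)$ absorbs it, and the sharp rate $\sigma_\eps$ is never used downstream. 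In short, your route buys the optimal rate and makes the connection with the travel-time analysis explicit; the paper's route buys brevity and complete independence from the long-range structure, which is precisely why the lemma holds uniformly over all $\gamma>0$.
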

\begin{proof}
By definition we have
\[
\Phi_\eps(\kappa, L)=L\Big(\lambda_\eps(\kappa) - \frac{1}{c_0}\Big) + \frac{1}{2c_0}\Big( \frac{1}{c_0\lambda_\eps(\kappa)} - 1 \Big)\int_0^{L} \nu_\eps\Big(\frac{s}{\eps}\Big)ds,
\]
with from \eqref{def:lambda_eps_k}
\[
\lambda_\eps(\kappa) = \frac{1}{c_0} - \eps \frac{c_0 |\kappa|^2 }{2} + \calO(\eps^2),
\]
where $\calO$ is uniform in $\kappa$ since $\hat \Psi$ is compactly supported in both $\omega$ and $\kappa$ in a ball with radius of order 1 w.r.t. $\eps$. As a result,
\[\frac{1}{\eps}\E\Big[ \Big|\Phi_\eps(\kappa, L) + \frac{L c_0}{2} |\kappa|^2 \Big| \Big] = \seps \sup_{s\in[0,L]}\E[|V(s/\eps)|] \sup |\Theta'| + \calO(\eps),\]
which concludes the proof of the lemma according to \eqref{eq:prop1V}.
\end{proof} 
The conservation relation \eqref{eq:conservation} implies that $1/\overline{A_\eps}$ is uniformly bounded by $1$ in all its variables, so that thanks to Lemma \ref{lem:Phi},
\[
\lim_{\eps\to 0}\E\Big[\sup_{s,\by}|p^L_{tr,\eps}(s,\by) - q^L_{tr,\eps}(s,\by)| \Big] = 0,
\] 
with
\[
q^L_{tr,\eps}(s,\by) := \frac{1}{2(2\pi)^3} \int \frac{\hat \Psi(\omega,\kappa)  }{\overline{A_\eps(\omega,\kappa,L)}}  e^{-i\omega ( s + L c_0|\kappa|^2/2 -\kappa \cdot\by)} \omega^2 d\omega \,d\kappa. 
\]
Hence, it is enough to prove the convergence for $q^L_{tr,\eps}$ to obtain the one of $p^L_{tr,\eps}$ according to \cite[Theorem 3.1 pp. 27]{billingsley}.

\begin{proposition}\label{prop:asymp}
The family $(q^L_{tr,\eps})_\eps$ converges in probability in $\calC(\R\times \R^2)$ to
\[
p^L_{tr}(s,\by):=\frac{1}{2(2\pi)^3} \int \hat \Psi(\omega,\kappa)  e^{-\theta'^2_0\omega^2( \Gamma_c(\omega) + i\Gamma_s(\omega)) L/(8c_0^2)}  e^{-i\omega ( s + L c_0|\kappa|^2/2 -\kappa \cdot\by)} \omega^2 d\omega \,d\kappa. 
\]
\end{proposition}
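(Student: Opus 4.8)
The plan is to prove convergence to the \emph{deterministic} limit $p^L_{tr}$ by a first- and second-moment argument, then to upgrade the resulting pointwise convergence to uniform convergence. The two structural inputs are Theorem \ref{th:asymp}, which gives, for any finite family of frequencies, the convergence in distribution of the transmission amplitudes $(A_\eps(\omega_j,\kappa_j,L))_j$ to a limit $(A_0(\omega_j,L))_j$ that is independent of the transverse variables $\kappa_j$ and whose distinct-frequency components are statistically independent, and the conservation relation \eqref{eq:conservation}, which yields the uniform a priori bound $|1/\overline{A_\eps(\omega,\kappa,L)}|\le 1$. This bound makes every integrand below bounded and supplies the uniform integrability needed to pass expectations through the convergence in distribution.

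First I would compute the mean. By Fubini and the uniform bound,
\[
\E[q^L_{tr,\eps}(s,\by)] = \frac{1}{2(2\pi)^3}\int \hat\Psi(\omega,\kappa)\,\E\Big[\frac{1}{\overline{A_\eps(\omega,\kappa,L)}}\Big]\,e^{-i\omega(s+Lc_0|\kappa|^2/2-\kappa\cdot\by)}\,\omega^2\,d\omega\,d\kappa.
\]
Since $\zeta\mapsto 1/\overline\zeta$ is continuous and bounded on $\{|\zeta|\ge 1\}$, the continuous-mapping theorem together with bounded convergence gives $\E[1/\overline{A_\eps(\omega,\kappa,L)}]\to\E[1/\overline{A_0(\omega,L)}]$ for each fixed $(\omega,\kappa)$, with a limit independent of $\kappa$. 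To evaluate this limit I would apply It\^o's formula to $1/\overline{A_0(\omega,z)}$ using the limiting equation \eqref{eq:EDS}: the martingale part is bounded, the quadratic-variation contribution of the two driving Brownian motions cancels, and one is left with a \emph{closed} linear ODE for $m(z):=\E[1/\overline{A_0(\omega,z)}]$ of the form $m'(z)=-c(\omega)m(z)$, $m(0)=1$. Its solution is precisely the exponential $e^{-c(\omega)z}$ appearing in the integrand of $p^L_{tr}$, with $c(\omega)=\theta'^2_0\omega^2(\Gamma_c(\omega)+i\Gamma_s(\omega))/(8c_0^2)$. Dominated convergence in the $(\omega,\kappa)$-integral (domination by $|\hat\Psi|\,\omega^2$, integrable on the compact support of $\hat\Psi$) then yields $\E[q^L_{tr,\eps}(s,\by)]\to p^L_{tr}(s,\by)$.

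Next I would show that the fluctuations vanish by controlling the second moment. Writing $\E[|q^L_{tr,\eps}(s,\by)|^2]$ as a double integral over $(\omega,\kappa)$ and $(\omega',\kappa')$ of $\E[\,\overline{A_\eps(\omega,\kappa,L)}^{-1}\,A_\eps(\omega',\kappa',L)^{-1}]$, I would invoke Theorem \ref{th:asymp} with $n=2$. For $\omega\ne\omega'$ the two limiting amplitudes are independent, so the joint expectation factorizes into $\E[1/\overline{A_0(\omega,L)}]\cdot\E[1/A_0(\omega',L)]$; the diagonal $\{\omega=\omega'\}$ is Lebesgue-null and the integrand is bounded, hence contributes nothing. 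Bounded/dominated convergence then gives $\E[|q^L_{tr,\eps}(s,\by)|^2]\to|p^L_{tr}(s,\by)|^2=|\lim_\eps\E[q^L_{tr,\eps}(s,\by)]|^2$, so the variance tends to $0$ and $q^L_{tr,\eps}(s,\by)\to p^L_{tr}(s,\by)$ in $L^2$, hence in probability, for each fixed $(s,\by)$.

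Finally I would upgrade this pointwise convergence in probability to convergence in $\calC(\R\times\R^2)$. Because $\hat\Psi$ is supported in a fixed ball in $(\omega,\kappa)$ and $|1/\overline{A_\eps}|\le 1$, one may differentiate $q^L_{tr,\eps}$ under the integral sign and bound it, together with its $s$- and $\by$-derivatives, uniformly in $\eps$ and almost surely on any compact set; the family is thus equicontinuous (indeed equi-Lipschitz) on compacts. Combining this tightness with pointwise $L^2$-convergence on a countable dense set gives uniform convergence in probability on compacts, which is exactly convergence in probability in $\calC(\R\times\R^2)$. I expect the main obstacle to be the second-moment step: rigorously justifying, from the purely distributional statement of Theorem \ref{th:asymp}, the factorization of the limiting joint expectation across distinct frequencies and the negligibility of the diagonal, since it is this cross-frequency decorrelation — the absence of a common random time-shift after the compensation \eqref{def:AB} — that drives the pulse stabilization and hence the deterministic character of $p^L_{tr}$.
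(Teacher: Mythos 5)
Your proposal is correct, and its core coincides with the paper's own proof: both rest on Theorem \ref{th:asymp} together with the uniform bound $|1/\overline{A_\eps}|\leq 1$ from \eqref{eq:conservation} (which turns convergence in distribution into convergence of expectations), on the It\^o formula applied to $1/\overline{A_0}$ under \eqref{eq:EDS} — where the quadratic-variation contributions of $W_1$ and $W_2$ do cancel, leaving a closed linear ODE whose solution is $\calE(\omega)=e^{-\theta'^2_0\omega^2(\Gamma_c(\omega)+i\Gamma_s(\omega))L/(8c_0^2)}$ — and on the cross-pair independence of the limit to factorize the second moment, with the diagonal discarded as a Lebesgue-null set under dominated convergence (a point the paper leaves implicit and you handle explicitly). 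The genuine difference is the final packaging. The paper never passes through pointwise convergence: it estimates directly
\[
\E\Big[\sup_{s,\by}|q^L_{tr,\eps}(s,\by)-p^L_{tr}(s,\by)|^2\Big] \leq \frac{1}{4(2\pi)^6}\int |\bE_\eps(\omega_1,\omega_2,\kappa_1,\kappa_2)|\,|\hat\Psi(\omega_1,\kappa_1)\overline{\hat\Psi(\omega_2,\kappa_2)}|\,\omega_1^2\omega_2^2\,d\omega_1\,d\kappa_1\,d\omega_2\,d\kappa_2,
\]
with $\bE_\eps=\E[(1/\overline{A_\eps(\omega_1,\kappa_1,L)}-\calE(\omega_1))(1/A_\eps(\omega_2,\kappa_2,L)-\overline{\calE(\omega_2)})]$: because the phases $e^{-i\omega(s+Lc_0|\kappa|^2/2-\kappa\cdot\by)}$ are unimodular, the supremum over $(s,\by)$ is absorbed before any limit is taken, so convergence of $\bE_\eps$ to $0$ off the diagonal plus dominated convergence immediately gives uniform convergence in probability over all of $\R\times\R^2$. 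Your route — mean and variance at each fixed $(s,\by)$, then an equi-Lipschitz/dense-set argument to upgrade to local uniform convergence — is valid (the derivative bounds do hold uniformly in $\eps$ and almost surely, since $\hat\Psi$ has compact support and the amplitudes are bounded by $1$, and your mean/variance decomposition is algebraically equivalent to the paper's four-term expansion of $\bE_\eps$), but the equicontinuity machinery is extra work that the paper's observation renders unnecessary, and it yields the slightly weaker statement of uniform convergence on compacts rather than on the whole plane.
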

\begin{proof}
Denoting
\[
\calE(\omega) =  e^{-\theta'^2_0\omega^2( \Gamma_c(\omega) + i\Gamma_s(\omega)) L/(8c_0^2)},
\]
and
\begin{align*}
\bE_\eps(\omega_1,\omega_2,\kappa_1,\kappa_2) = \E\Big[ \Big(  \frac{1}{\overline{A_\eps(\omega_1,\kappa_1,L)}} - \calE(\omega_1)  \Big)\Big(  \frac{1}{A_\eps(\omega_2,\kappa_2,L)} - \overline{\calE(\omega_2)}  \Big) \Big],
\end{align*}
we have
\begin{align*}
\E\Big[\sup_{s,\by}|q^L_{tr,\eps}(s,\by)-p^L_{tr}(s,\by)|^2 \Big] \leq \frac{1}{4(2\pi)^6} & \int  | \bE_\eps(\omega_1,\omega_2,\kappa_1,\kappa_2) | \\
&  \times |\hat \Psi(\omega_1,\kappa_1) \overline{ \hat \Psi(\omega_2,\kappa_2)}| \omega^2_1 \omega^2_2 d\omega_1 \,d\kappa_1 \, d\omega_2 \,d\kappa_2.
\end{align*}
Now, we expand the expectation in $\bE_\eps$ so that
\begin{align*}
\bE_\eps(\omega_1,\omega_2,\kappa_1,\kappa_2) & = \E\Big[\frac{1}{\overline{A_\eps(\omega_1,\kappa_1,L)}A_\eps(\omega_2,\kappa_2,L)}\Big] - \E\Big[\frac{1}{\overline{A_\eps(\omega_1,\kappa_1,L)}}\Big]\overline{\calE(\omega_2)} \\
&- \calE(\omega_1) \E\Big[\frac{1}{A_\eps(\omega_2,\kappa_2,L)}\Big]
+ \calE(\omega_1)\overline{\calE(\omega_2)}.
\end{align*}
From Theorem \ref{th:asymp}, with a single $\omega$ and $\kappa$, together with the It\^o formula \cite[Theorem 3.3 pp. 149]{karatzas} applied to the real and imaginary parts of \eqref{eq:EDS}, we obtain
\[
\lim_{\eps\to 0 }\E\Big[\frac{1}{\overline{A_\eps(\omega,\kappa,L)}}\Big]=\calE(\omega).
\]
Note that the convergence in distribution implies the convergence of the expectation thanks to \cite[Theorem 3.5 pp. 31]{billingsley} and the fact that $1/\overline{A_\eps}$ is uniformly bounded in $\eps$ according to \eqref{eq:conservation}. For similar reasons, but adding the fact that the limit in $\eps$ of $(A_\eps(\omega_1,\kappa_1,L),A_\eps(\omega_2,\kappa_2,L))$ for two distinct pairs of $(\omega,\kappa)$ are independent (as stated in Theorem \ref{th:asymp}), we have
\begin{align*}
\lim_{\eps\to 0}\E\Big[\frac{1}{\overline{A_\eps(\omega_1,\kappa_1,L)}A_\eps(\omega_2,\kappa_2,L)}\Big] &= \lim_{\eps\to 0}\E\Big[\frac{1}{\overline{A_\eps(\omega_1,\kappa_1,L)}}\Big]\lim_{\eps\to 0}\E\Big[\frac{1}{A_\eps(\omega_2,\kappa_2,L)}\Big]\\
& = \calE(\omega_1)\overline{\calE(\omega_2)}.
\end{align*}
Combining these results gives
\[
\lim_{\eps\to 0}\bE_\eps(\omega_1,\omega_2,\kappa_1,\kappa_2) = 0,
\]
and thanks to the dominated convergence theorem, 
\[
\lim_{\eps\to 0}\E\Big[\sup_{s,\by}|q^L_{tr,\eps}(s,\by)-p^L_{tr}(s,\by)|^2 \Big] = 0,
\]
which concludes the proof of the proposition. 
\end{proof} 

The transmitted wave-front at the end of the random slab $z=L$ is then given by
\[
p^L_{tr}(s,\by) = \frac{1}{2}\calK(\cdot, \cdot, L) \ast \Psi(s,\by),
\]
with the pulse deformation given in the Fourier domain by
\[
\hat \calK(\omega, \kappa, z) = e^{-\theta'^2_0 \omega^2 ( \Gamma_c(\omega)+i  \Gamma_s(\omega)) z/(8 c_0^2)}  e^{-i\omega c_0|\kappa|^2 z/2}.
\]
From this formula, it turns out that $\calK$ is the unique solution to
\[
\partial^2_{s z} \calK + \frac{c_0}{2} \Delta_{\by} \calK - \calI (\calK) = 0,
\] 
with
\begin{equation}\label{eq:defI1}
\calI (\psi)(s,\by)  := -\frac{\theta'^2_0 }{8c_0^3} \int  i\omega^3( \Gamma_c(\omega)  +i \Gamma_s(\omega) ) \check \psi(\omega,\by) e^{i\omega s} d\omega,
\end{equation}
and $\calK(s,\by, z=0)=\delta(s)\delta(\by)$. The details for the uniqueness are provided in Appendix \ref{sec:uniqueness}. In the time domain, the operator $\calI$ can be recast as 
\begin{equation}\label{eq:defI2}
\calI (\psi)(s) = \phi \ast \partial^3_{sss} \psi(s) \qquad\text{with}\qquad \phi(s)= \frac{\theta'^2_0}{8c_0^2} R\Big(\frac{c_0 s}{ 2}\Big) \, \1_{(0,\infty)}(s),
\end{equation}
so that 
\[
\calI (\psi)(s) = \frac{\theta'^2_0}{8c_0^2} \int_{-\infty}^s R\Big(\frac{c_0 (s-\tau)}{2}\Big) \partial^3_{sss} \psi(\tau) d\tau.
\] 

Let us finish this section with a comment on the backscattered wave
\[
p_{bk,\eps}(s,\by) := p_\eps\big(\eps \,s, \seps \,\by, 0 \big).
\] 
According to Theorem \ref{th:asymp}, the amplitude $B_\eps/\overline{A_\eps}$ converges in distribution to some limit which has null expectation according to the It\^o formula. Note that we again have the convergence of the expectation, since $B_\eps/\overline{A_\eps}$ is also uniformly bounded in $\eps$ by $1$ according to \eqref{eq:conservation}. Thanks to the independence of the limit w.r.t. to multiple frequencies $\omega$, the limiting expectation of the product of two $B_\eps/\overline{A_\eps}$ at two distinct frequencies is the product of the limiting single expectation, which is then $0$. Following the same lines as the proof of Proposition \ref{prop:asymp}, we obtain that the backscattered wave converges to $0$, as $\eps\to 0$, in probability in $\calC(\R\times\R^2)$. This observation is consistent with \cite[Chapter 9]{fouque} in which the authors describe the backscattered signal as a \emph{"small"} incoherent signal that can be described through a random field. This analysis is beyond the scope of this paper and will not be addressed here.

\section{Proof of Theorem \ref{th:asympt}}\label{proof:th_asympt}

In this section, we describe how the integral operator $\calI$ can be approximated by a fractional derivative, or a standard third order derivative, depending whether $\gamma \in (0,1)$ or $\gamma \geq 1$ respectively. The fractional operator that derives from $\calI$ is given by a Weyl fractional derivative defined by \eqref{def:weyl_dev}. 

In view of \eqref{eq:defI1} and \eqref{eq:defI2}, the operator $\calI$ is completely characterized by the correlation function $R$ through the coefficients $\Gamma_c$ and $\Gamma_s$ (see \eqref{def:Gamma_coef}). Under the scaling \eqref{def:scaleR}, we only need to study the asymptotics to these coefficients, which is done in the following technical lemma, whose proof is provided in Appendix \ref{proof:lemma_coef_scat}.
\begin{lemma}\label{lem:Gamma_scatt}
Setting 
\[ 
\Gamma_c(\omega,l_0):=2\sigma(l_0)\int_0^\infty R\Big(\frac{s}{l_0}\Big) \cos\Big(\frac{2 \omega s}{c_0}\Big) ds,
\]
and
\[
\Gamma_s(\omega,l_0):=2\sigma(l_0)\int_0^\infty R\Big(\frac{s}{l_0}\Big)\sin\Big(\frac{2 \omega s}{c_0}\Big) ds,
\]
where $\sigma(l_0)$ is defined by \eqref{def:sigmal0}, we have:
\begin{enumerate}
\item for $\gamma\in [1,\infty )$,
\[
\lim_{l_0\to 0}\Gamma_c(\omega,l_0) = \Gamma_{0}
\qquad
\text{and}
\qquad
\lim_{l_0\to 0} \Gamma_s(\omega,l_0) = 0,
\]
with $\Gamma_{0}$ defined by \eqref{def:Gamma0};
\item for $\gamma\in(0,1)$,
\[
\lim_{l_0\to 0} \Gamma_c(\omega,l_0) =  R_0 \Gamma_{c,0}(\omega) 
\qquad
\text{and}
\qquad
\lim_{l_0\to 0} \Gamma_s(\omega,l_0) =  R_0  \Gamma_{s,0}(\omega),
\]
with for $\omega\neq 0$
\[
\Gamma_{c,0}(\omega) = 2\int_0^\infty \frac{ds}{s^\gamma} \cos(2\omega s/c_0 ) = 2\Gamma(1-\gamma)\cos\Big(\frac{(1-\gamma)\pi}{2}\Big)\Big(\frac{2|\omega|}{c_0}\Big)^{\gamma-1},
\]
and
\[
 \Gamma_{s,0}(\omega) = 2\int_0^\infty \frac{ds}{s^\gamma} \sin(2\omega s/c_0)= 2\Gamma(1-\gamma)\sin\Big(\frac{(1-\gamma)\pi}{2}\Big)\Big(\frac{2|\omega|}{c_0}\Big)^{\gamma-1}\emph{sign}(\omega),
\]
where $R_0$ is given by \eqref{def:R0}, and $\Gamma$ stands for the Gamma function.
\end{enumerate}
\end{lemma}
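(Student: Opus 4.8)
The plan is to trade the (only conditionally convergent) $s$-integrals defining $\Gamma_c(\omega,l_0)$ and $\Gamma_s(\omega,l_0)$ for absolutely convergent integrals over the spectral variable $p$, using the representation \eqref{def:R}. Writing $R(s/l_0)=\int_S e^{-\mu|p|^{2\beta}s/l_0}r(p)\,dp$ and interchanging the order of integration --- which I would justify by inserting an Abel factor $e^{-\delta s}$ and letting $\delta\to0^+$, or equivalently by one integration by parts in $s$ (needed because for $\gamma\le 1$ the $s$-integral does not converge absolutely) --- the inner integrals become the elementary Laplace transforms $\int_0^\infty e^{-as}\cos(bs)\,ds=a/(a^2+b^2)$ and $\int_0^\infty e^{-as}\sin(bs)\,ds=b/(a^2+b^2)$, with $a=\mu|p|^{2\beta}/l_0$ and $b=2\omega/c_0$. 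After clearing powers of $l_0$ one obtains the Lorentzian representations
\[
\Gamma_c(\omega,l_0)=2\sigma(l_0)\,l_0\int_S r(p)\,\frac{\mu|p|^{2\beta}}{\mu^2|p|^{4\beta}+b^2l_0^2}\,dp,\qquad \Gamma_s(\omega,l_0)=2\sigma(l_0)\,l_0^2\,b\int_S \frac{r(p)\,dp}{\mu^2|p|^{4\beta}+b^2l_0^2},
\]
whose integrands are nonnegative and absolutely integrable for each fixed $l_0>0$. The lemma reduces to the $l_0\to0$ asymptotics of these two families in the three regimes of $\sigma(l_0)$ from \eqref{def:sigmal0}.

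For $\gamma>1$ one has $\sigma(l_0)l_0=1$, and $\gamma>1$ is equivalent to $2\alpha+2\beta<1$, so $a(p)|p|^{-2\alpha-2\beta}$ is integrable on $S$. Bounding $\mu|p|^{2\beta}/(\mu^2|p|^{4\beta}+b^2l_0^2)\le 1/(\mu|p|^{2\beta})$, dominated convergence gives $\Gamma_c\to(2/\mu)\int_S a(p)|p|^{-2\alpha-2\beta}\,dp$, which equals $\Gamma_c(0)=\Gamma_0$ by a plain (Tonelli) interchange in \eqref{def:Gamma_coef} at $\omega=0$. For $\Gamma_s$ the bound $\mu^2|p|^{4\beta}+b^2l_0^2\ge 2|b|l_0\,\mu|p|^{2\beta}$ supplies the $l_0$-independent dominating function $a(p)|p|^{-2\alpha-2\beta}/(2\mu)$, while the integrand vanishes pointwise; hence $\Gamma_s\to0$.

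The regimes $\gamma\le 1$ are handled by the common rescaling $p=l_0^{1/(2\beta)}q$, chosen so that $\mu|p|^{2\beta}\sim l_0$ matches the denominator scale $bl_0$. A power count based on $-\alpha/\beta+1/(2\beta)=\gamma$ shows that all explicit powers of $l_0$ collapse into the common prefactor $\sigma(l_0)l_0^{\gamma}$, giving
\[
\Gamma_c(\omega,l_0)=2\sigma(l_0)l_0^{\gamma}\int_{S/l_0^{1/(2\beta)}} a\big(l_0^{1/(2\beta)}q\big)\,\frac{\mu|q|^{2\beta-2\alpha}}{\mu^2|q|^{4\beta}+b^2}\,dq,\quad \Gamma_s(\omega,l_0)=2\sigma(l_0)l_0^{\gamma}b\int_{S/l_0^{1/(2\beta)}} a\big(l_0^{1/(2\beta)}q\big)\,\frac{|q|^{-2\alpha}}{\mu^2|q|^{4\beta}+b^2}\,dq,
\]
where $\sigma(l_0)l_0^{\gamma}=1$ if $\gamma\in(0,1)$ and $\sigma(l_0)l_0^{\gamma}=1/|\ln l_0|$ if $\gamma=1$. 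For $\gamma\in(0,1)$ the prefactor is $1$, and boundedness of $a$ together with $\alpha<1/2$ (integrability at $q=0$) and $2\alpha+2\beta>1$ (integrability at $q=\infty$) lets dominated convergence replace $a(l_0^{1/(2\beta)}q)$ by $a(0)$ and the domain by $\R$. To identify the limits with $R_0\Gamma_{c,0}(\omega)$ and $R_0\Gamma_{s,0}(\omega)$ I would run Fubini backwards via $a(0)\int_\R|q|^{-2\alpha}e^{-\mu|q|^{2\beta}s}\,dq=R_0\,s^{-\gamma}$ (the identity behind \eqref{eq:decay_R}, \eqref{def:R0}), collapsing the $q$-integrals to $2R_0\int_0^\infty s^{-\gamma}\cos(bs)\,ds$ and $2R_0\int_0^\infty s^{-\gamma}\sin(bs)\,ds$; the stated closed forms then follow from the classical evaluations $\int_0^\infty s^{-\gamma}\cos(bs)\,ds=\Gamma(1-\gamma)\cos((1-\gamma)\pi/2)|b|^{\gamma-1}$ and $\int_0^\infty s^{-\gamma}\sin(bs)\,ds=\Gamma(1-\gamma)\sin((1-\gamma)\pi/2)|b|^{\gamma-1}\mathrm{sign}(b)$.

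The borderline $\gamma=1$ is the delicate step, and the one I expect to cost the most. Here the prefactor is $1/|\ln l_0|$, and the rescaled integrands no longer converge to an integrable limit. For $\Gamma_s$ the $q$-integrand still has the integrable tail $|q|^{-2\alpha-4\beta}$ (since $2\alpha+4\beta>1$), so that integral stays $O(1)$ and the factor $1/|\ln l_0|$ forces $\Gamma_s\to0$. For $\Gamma_c$, however, $2\alpha+2\beta=1$ makes the $q$-integrand behave like $\mu^{-1}|q|^{-1}$ at infinity, so the integral over the expanding domain $|q|<r_S l_0^{-1/(2\beta)}$ diverges logarithmically; the content of this case is to show that this divergence is exactly $\frac{a(0)}{\mu\beta}|\ln l_0|\,(1+o(1))$, whence $\Gamma_c\to 2a(0)/(\mu\beta)=\Gamma_0$. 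I would extract this by splitting at the crossover scale $|p|\sim l_0^{1/(2\beta)}$: the region below contributes $O(1)$, while above it the integrand is $\frac{a(0)}{\mu}|p|^{-1}(1+o(1))$, whose integral up to $|p|=O(1)$ is $\frac{a(0)}{\mu}\cdot\frac{1}{2\beta}|\ln l_0|$ per tail --- the factor $1/(2\beta)$ coming from the lower cutoff $l_0^{1/(2\beta)}$, and a factor $2$ from the two tails. The real work is controlling uniformly that replacing $a(p)$ by $a(0)$ and the Lorentzian by its power-law tail only perturbs the logarithmic coefficient by $o(|\ln l_0|)$.
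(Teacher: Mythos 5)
Your proposal is correct and follows essentially the same route as the paper's proof: pass to the spectral (Lorentzian) representation of the integrals, rescale $p\to l_0^{1/(2\beta)}p$ and apply dominated convergence for $\gamma\in(0,1)$ (identifying the limit by undoing Fubini and invoking the classical evaluations of $\int_0^\infty s^{-\gamma}\cos(bs)\,ds$ and $\int_0^\infty s^{-\gamma}\sin(bs)\,ds$), and for $\gamma=1$ split at the crossover scale $l_0^{1/(2\beta)}$, which is exactly the paper's decomposition into a main logarithmic term plus the three error terms (the $a(p)-a(0)$ replacement, the near-zero region, and the Lorentzian-versus-power-law tail), whose bounds are the elementary estimates you defer. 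The only cosmetic differences are that the paper treats $\gamma>1$ directly in the $s$-domain (change of variable $s\to l_0 s$ plus dominated convergence, using integrability of $R$) where you stay in the spectral domain, and that you are more explicit about justifying the initial interchange of integrals when $\gamma\le 1$, a point the paper leaves implicit.
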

Note that the two latter formulas for $\Gamma_{c,0}(\omega)$ and $\Gamma_{s,0}(\omega)$ are obtained using \cite[3.761-4 and 9 pp. 436--437]{tableint}.

From this lemma, we obtain the following convergence in $\calC(\R\times \R^2)$,
\[
\lim_{l_0\to 0} p_{tr,l_0}(s,\by, L) = p_{tr,0}(s,\by, L)=\frac{1}{2}\calK_0(\cdot,\cdot,L) \ast \Psi(s,\by),
\]
thanks to the dominated convergence theorem, where, in the Fourier domain,
\[
\hat \calK_0(\omega,\kappa,z) :=
\left\{ 
\begin{array}{ccc}
\displaystyle e^{-\theta'^2_0 \omega^2 \Gamma_0 z/(8 c_0^2)}  e^{-i\omega c_0|\kappa|^2 z/2} & \text{if} & \gamma \geq 1, \\
&&\\
&&\\
\displaystyle  e^{-\theta'^2_0 R_0 \omega^2 ( \Gamma_{c,0}(\omega)+i  \Gamma_{s,0}(\omega)) z/(8 c_0^2)}  e^{-i\omega c_0|\kappa|^2 z/2} &\text{if} & \gamma \in (0,1).
\end{array}
\right.
\]
From the definition of $\calK_0$, we obtain for $\gamma\geq 1$ the following paraxial wave equation with a third order derivative in $s$,
\[
\partial^2_{s z} \calK_0 + \frac{c_0}{2} \Delta_{\by} \calK_0 - \frac{\theta'^2_0 \Gamma_0}{8 c_0^2} \partial^3_{sss} \calK_0 = 0.
\] 
For $\gamma\in(0,1)$, following the same strategy as in Section \ref{proof:th_main}, but with now
\[
\phi(s)= \frac{\theta'^2_0 R_0 }{2^{3-\gamma}c_0^{1+\gamma}}  s^{-\gamma} \, \1_{(0,\infty)}(s),
\]
we obtain
\[
\partial^2_{sz} \calK_0 + \frac{c_0}{2} \Delta_{\by} \calK_0 - \frac{\theta'^2_0 R_0 \Gamma(1-\gamma)}{2^{3-\gamma}c_0^{1+\gamma}}  D^{2+\gamma}_s\calK_0 = 0,
\] 
where 
\[
D^{2+\gamma}_s\psi(s) = \frac{1}{\Gamma(1-\gamma)} \int_{-\infty}^s (s-\tau)^{-\gamma} \partial^3_{sss} \psi(\tau) d\tau
\]
stands for the Weyl fractional derivative of order $2+\gamma$.

The details for the uniqueness of this paraxial equation are provided in Appendix \ref{sec:uniqueness}.

\section{Proof of Theorem \ref{th:asymp}}\label{proof:th_asymp}

In this section, we adapt the idea of \cite{gomez3} to our context, and the convergence in distribution of the family $(\calX^{\eps})_\eps$ is proved in two steps. The first step concerns the tightness of the family, and the second one concerns the characterization of all its  accumulation points as being weak solutions to a well-posed stochastic differential equation. Both points are based on the perturbed test function method and the notion of pseudogenerator allowing the use of martingale techniques \cite{kushner}. We introduce first this key notion before going into the detailed analysis of the convergence. Let us remark that tightness criteria like \cite[Theorem 4 pp. 48]{kushner} require uniform bounds in probability that we do not have for $(\calX_{\eps})_\eps$. This appears to be also a problem to identify the subsequence limits. To bypass this problem, we adopt the strategy proposed in \cite[Chapter 11]{stroock} and introduce a \emph{truncated process}. This new process is related to the original one through a family of stopping times that goes to $\infty$ when we remove the truncation.

\subsection{The truncated process}

Let us start by writing down the system satisfied by $\calX_{\eps}$ in a convenient form for the forthcoming analysis,
\[
\frac{d}{dz} \calX_{\eps}(z)  = \frac{1}{\eps} \nu_\eps \Big(\frac{z}{\eps}\Big) \sum_{j=1}^n \sum_{l=0,1} \frac{i\omega_j(-1)^l}{2\lambda_\eps(\kappa_j)c^2_0}e^{i\omega_j(-1)^{l+1} \tau_\eps(\kappa_j,z)/\eps}\calX^{j,1-l}_{\eps}(z) \bfe_{jl}.
\]
Here, $\tau_\eps(\kappa,z)$ is given by \eqref{def:tau},
\[
\calX_{\eps}=(\calX^{1,0}_\eps, \calX^{1,1}_\eps,\dots,\calX^{n,0}_\eps,\calX^{n,1}_\eps)=\sum_{j=1}^n\sum_{l=0,1}\calX^{j,l}_{\eps}(z) \bfe_{jl},
\]
with
\[\calX^{j,0}_{\eps}(z)= A_\eps(\omega_j,\kappa_j,z)\qquad\text{and}\qquad\calX^{j,1}_{\eps}(z)= B_\eps(\omega_j,\kappa_j,z)\qquad j\in\{1,\dots,n\},\]
and $(\bfe_{jl})_{jl}$ is the canonical basis for $\mathbb{C}^{2}\times\dots\times\mathbb{C}^{2}$ $n$ times. 

Denoting $M>0$ the cutoff parameter, which does not dependent on $\eps$ in what follows, the truncated process $\calX_{\eps,M}$ is defined as being the solution to
\begin{equation}\label{eq:calX_M}
\frac{d}{dz} \calX_{\eps,M}(z)  = \frac{1}{\eps} \nu_\eps \Big(\frac{z}{\eps}\Big) \sum_{j=1}^n \sum_{l=0,1} \frac{i\omega_j(-1)^l}{2\lambda_\eps(\kappa_j)c^2_0}e^{i\omega_j(-1)^{l+1} \tau_\eps(\kappa_j,z)/\eps}F_M(\calX^{j,1-l}_{\eps,M}(z)) \bfe_{jl},
\end{equation}
where
\[
F_M(\calX):=\calX \, \phi_M(\calX)\qquad \calX\in \mathbb{C},
\]
with $\phi_M$ a compactly supported smooth function such that $0\leq \phi_M \leq 1$, and
\[
\phi_M(\calX)= \left\{
\begin{array}{ccc}
1 & \text{if} & |\calX|\leq M, \\
0 & \text{if} & |\calX|\geq 2M.
\end{array}
\right. 
\]
Thanks to the cutoff function, $\phi_M$ the process $\calX_{\eps,M}$ is uniformly bounded in $\eps$ by $2M$ with probability one. This property is used in the proof of the tightness and the identification of the limiting martingale problem.   

To relate the truncated process $\calX_{\eps,M}$ with the original one $\calX_{\eps}$, let us introduce $\Omega = \calC([0,\infty),\mathbb{C}^{2n})$, the space of all possible trajectories for $\calX_{\eps}$ and $\calX_{\eps,M}$, associated to its canonical filtration
\[\calM_z=\sigma( f(s),\, 0\leq s \leq z  ),\]
and $\sigma$-field
\[
\calM := \sigma\left(\bigcup_{z\geq 0}\calM_z\right).
\]
We also introduce the stopping times
\[
\eta_M(f) := \inf(z\geq 0:\, \|f(z)\|\geq M) \qquad f\in\Omega,
\]
and the distributions $\Pro_\eps$ and $\Pro_{\eps,M}$ of respectively $\calX_{M}$ and $\calX_{\eps,M}$, which are defined on the measurable space $(\Omega,\calM)$. From the above definitions, it is clear that 
\[
\Pro_\eps=\Pro_{\eps,M}\qquad \text{on }\calM_{\eta_M}.
\] 
The strategy of the proof relies on this latter identity together with \cite[Lemma 11.1.1 pp.262]{stroock}. To prove the convergence in distribution, as $\eps\to 0$, of $(\Pro_{\eps})_\eps$ to $\Pro_0$ in $\Omega$, where $\Pro_0$ is the distribution of the process $\calX_0$ defined by \eqref{def:calX0} (in other words the convergence in distribution of $(\calX_\eps)_\eps$ to $\calX_0$), we just have to prove that for each $M>0$:
\begin{itemize}
\item $(\Pro_{\eps,M})_\eps$ (or equivalently $(\calX_{\eps,M})_\eps$) is tight;
\item for any limit point $\Pro_{0,M}$
\[
\Pro_0=\Pro_{0,M}\qquad \text{on }\calM_{\eta_M}.
\] 
\end{itemize}
If the martingale problems associated to $\Pro_0$ and any $\Pro_{0,M}$ are the same on $\calM_{\eta_M}$, and $\Pro_0$ is associated to a well-posed stochastic differential equation, the latter point is a direct consequence of \cite[Exercise 11.5.1 pp.283]{stroock}. The identification of the limiting martingale problem is carried out in Section \ref{sec:identification}, while the tightness is proved in Section \ref{tightnesssec}. Also, to prove the tightness of $(\Pro_{\eps,M})_\eps$ on $\Omega$, we only have to prove it on $\calC([0,L],\mathbb{C}^{2n})$ for any $L>0$. 

From the boundedness of the truncated process $\calX_{\eps,M}$, together with the expansion 
\begin{equation}\label{eq:expensionV}
\nu_\eps(z/\eps) = \Theta\big(\seps V(z/\eps)\big) =  \seps \Theta'(0) V(z/\eps) + \calO\big( \eps^{3/2} |V(z/\eps)|^3 \big),
\end{equation}
remembering that $\Theta$ is an odd function, and \eqref{eq:calX_M}, one can write
\begin{align*}
\frac{d}{dz} \calX_{\eps,M}(z) &= \frac{\theta'_0}{\seps}  V \Big(\frac{z}{\eps}\Big) \sum_{j=1}^n\sum_{l=0,1} \frac{i\omega_j(-1)^l}{2\lambda_\eps(\kappa_j)c^2_0}e^{i\omega_j(-1)^{l+1} \tau_\eps(\kappa_j,z)/\eps}F_M(\calX^{j,1-l}_{\eps,M}(z)) \bfe_{jl} \\
&+ E_\eps(z).
\end{align*}
Here, $\theta'_0 = \Theta'(0)$, and the error term $E_\eps$ is uniformly bounded in $\eps$ and $z\in[0,L]$ with probability one. This latter term provides a negligible contribution in the limit $\eps\to 0$ and does not play any role in the forthcoming analysis. For the sake of simplicity in the presentation, we ignore this term and consider instead the following system:
\begin{equation}\label{eq:calX_M2}
\frac{d}{dz} \calX_{\eps,M}(z) = \frac{\theta'_0}{\seps}  V \Big(\frac{z}{\eps}\Big)  \sum_{j=1}^n \sum_{l=0,1} \frac{i\omega_j(-1)^l}{2\lambda_\eps(\kappa_j)c^2_0}e^{i\omega_j(-1)^{l+1} \tau_\eps(\kappa_j,z)/\eps}F_M(\calX^{j,1-l}_{\eps,M}(z)) \bfe_{jl}.
\end{equation}

\subsection{Pseudogenerator}

We remind the reader about the notion of pseudogenerator allowing the use of martingale techniques while the underlying process is not a Markov process. Before introducing the notion of pseudogenerator, let us defined the $p-\lim$.

Let us introduce the following $\sigma$-algebras
\[
\mathcal{G}^\eps _z = \sigma(V(s/\eps,dp),\quad 0\leq s\leq z)\qquad 0\leq z\leq L,
\]
and $\calS^\eps $ be the set of all measurable functions $f$, adapted to the filtration $(\mathcal{G}^\eps _z)$, and for which $\sup_{z\leq L} \E[\lvert f(z) \rvert ] < \infty $.  Let $f$ and $f_h$ in $\mathcal{S}^\eps $ for all $h>0$, we say that $f=p-\lim_h f_h$ if
\[
\sup_{z, h }\E[\lvert f_h(z)\rvert]<+\infty\qquad \text{and}\qquad \lim_{h\rightarrow 0}\E[\lvert f_h(z) -f(z)\rvert]=0 \qquad \forall z\geq 0.
\]

Regarding the pseudogenerator itself, denoted by $\mathcal{A}^\eps$, we say that $f\in \mathcal{D}(\mathcal{A}^\eps)$, the domain of $\mathcal{A}^\eps$, and $\mathcal{A}^\eps f=g$ if both $f$ and $g$ are in $\mathcal{S}^\eps$ and 
\[
p-\lim_{h \to 0} \left[ \frac{\E^\eps _z [f(z+h)]-f(z)}{h}-g(z) \right]=0.
\]
Here, $\E^\eps _z$ denotes the conditional expectation given $\mathcal{G}^\eps _z$. The key property to relate the pseudogenerator to the martingale property is the following.

\begin{proposition}\label{martingale}
For any $f\in \mathcal{D}(\mathcal{A}^\eps)$, the process
\begin{equation*}
M_f ^\eps (z)=f(z)-f(0)-\int _0 ^z  \mathcal{A}^\eps f(u)du
\end{equation*}
is a $( \mathcal{G}^\eps _z )$-martingale.
\end{proposition}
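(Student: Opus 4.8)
The plan is to check the two defining features of a martingale separately, the only substantial one being the conditional--expectation identity. Adaptedness is immediate, since $f$ is $(\mathcal{G}^\eps_z)$--adapted by the assumption $f\in\mathcal{S}^\eps$ and $z\mapsto\int_0^z\mathcal{A}^\eps f(u)\,du$ is adapted because $\mathcal{A}^\eps f=g$ is. Integrability of $M_f^\eps(z)$ follows from $f,g\in\mathcal{S}^\eps$ together with Tonelli's theorem, which gives $\E[\int_0^z|g(u)|\,du]\le z\,\sup_{u\le L}\E[|g(u)|]<\infty$. Writing the increment over $[s,z]$ with $0\le s\le z$ and pulling the $\mathcal{G}^\eps_s$--measurable part of $\int_0^z g$ out of $\E^\eps_s$, the martingale property $\E^\eps_s[M_f^\eps(z)]=M_f^\eps(s)$ reduces to
\[
\E^\eps_s\big[f(z)-f(s)\big]=\E^\eps_s\Big[\int_s^z g(u)\,du\Big].
\]

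The core step is a discretization of $[s,z]$ coupled with the tower property and the definition of $\mathcal{A}^\eps$. I would fix $N\ge1$, set $h=(z-s)/N$ and $t_k=s+kh$, telescope $f(z)-f(s)=\sum_{k=0}^{N-1}(f(t_{k+1})-f(t_k))$, apply $\E^\eps_s$, and insert $\E^\eps_{t_k}$ inside each summand via $\E^\eps_s=\E^\eps_s\,\E^\eps_{t_k}$ (legitimate since $s\le t_k$). This yields
\[
\E^\eps_s\big[f(z)-f(s)\big]=\sum_{k=0}^{N-1}\E^\eps_s\Big[\E^\eps_{t_k}[f(t_{k+1})]-f(t_k)\Big].
\]
By the definition of the pseudogenerator each inner bracket equals $h\,g(t_k)+h\,\delta_h(t_k)$, where $\delta_h(t):=h^{-1}\big(\E^\eps_t[f(t+h)]-f(t)\big)-g(t)$ satisfies $p-\lim_{h\to0}\delta_h=0$; hence
\[
\E^\eps_s\big[f(z)-f(s)\big]=\sum_{k=0}^{N-1}h\,\E^\eps_s[g(t_k)]+\sum_{k=0}^{N-1}h\,\E^\eps_s[\delta_h(t_k)].
\]

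It then remains to let $N\to\infty$ in $L^1(\Pro)$. The first sum is a Riemann sum for $\int_s^z\E^\eps_s[g(u)]\,du=\E^\eps_s[\int_s^z g(u)\,du]$, whose convergence follows from the $L^1$--continuity in $u$ of $\E^\eps_s[g(u)]$ inherited from the continuity of the underlying process. The delicate point, and the only place I expect real work, is the error sum $\sum_k h\,\E^\eps_s[\delta_h(t_k)]$: by the conditional Jensen inequality its $L^1$--norm is bounded by $\sum_k h\,\E[|\delta_h(t_k)|]$, and since the number of terms grows like $1/h$ while $\E[|\delta_h(t_k)|]$ only tends to $0$ at a fixed time, one cannot simply add pointwise bounds. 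The clean way to settle this is to view $u(t):=\E^\eps_s[f(t)]$ as an $L^1(\Pro)$--valued function of $t$: the identity above shows that $u$ has an everywhere--defined right derivative equal to $\E^\eps_s[g(t)]$, and this together with the $L^1$--continuity of $u$ and of $t\mapsto\E^\eps_s[g(t)]$ makes $u$ continuously differentiable, so the Banach--space fundamental theorem of calculus gives $u(z)-u(s)=\int_s^z\E^\eps_s[g(t)]\,dt$ directly. Since $u(s)=f(s)$, this is exactly the reduced identity; this is precisely the content of the pseudogenerator machinery of \cite{kushner}, and with it Proposition \ref{martingale} follows.
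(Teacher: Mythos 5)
The paper itself gives no proof of this proposition: it is quoted as the key property of Kushner's pseudogenerator and used as a black box, with \cite{kushner} as the implicit reference, so your attempt has to be measured against that standard argument. Much of your skeleton is exactly right: the reduction to $\E^\eps_s[f(z)-f(s)]=\E^\eps_s\big[\int_s^z g(u)\,du\big]$, the integrability and adaptedness checks, the telescoping/tower computation, and the observation that $t\mapsto\E^\eps_s[f(t)]$ has an everywhere-defined right $L^1$-derivative equal to $\E^\eps_s[g(t)]$ (this uses only the $p$-lim definition and the fact that $\E^\eps_s$ is an $L^1$-contraction). You also correctly diagnose that the naive bound on the error sum fails. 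The genuine gap is in how you close the argument: both the convergence of the Riemann sums $\sum_k h\,\E^\eps_s[g(t_k)]$ to $\int_s^z\E^\eps_s[g(u)]\,du$ and your final ``right derivative everywhere $+$ continuity $\Rightarrow$ $C^1$ $+$ fundamental theorem of calculus'' step rest on the $L^1$-continuity of $t\mapsto\E^\eps_s[g(t)]$, and this is not among the hypotheses: $g=\calA^\eps f$ is only required to lie in $\calS^\eps$, i.e.\ to be measurable, adapted, with $\sup_{z\leq L}\E[|g(z)|]<\infty$. Nothing in the definition of $\mathcal{D}(\calA^\eps)$ lets $g$ ``inherit'' continuity from the paths of $V$: for instance $f(z)=(z-1)^+$ and $g(z)=\1_{[1,\infty)}(z)$ form an admissible pair with $g$ discontinuous, and Riemann sums of a merely bounded measurable function need not converge to its Lebesgue integral. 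So, as written, the limit $N\to\infty$ fails precisely for the class of $f$ the proposition quantifies over.

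The repair, which is the content of the cited argument in \cite{kushner}, needs no continuity at all. Fix $0\le s\le z$ and a bounded $\mathcal{G}^\eps_s$-measurable $Y$, and set $\phi(\tau):=\E[Yf(\tau)]-\int_s^\tau\E[Yg(u)]\,du$ for $\tau\in[s,z]$. Since $Y$ is $\mathcal{G}^\eps_\tau$-measurable for every $\tau\ge s$, the tower property gives $\E[Y(f(\tau+h)-f(\tau))]=h\,\E[Y(g(\tau)+\delta_h(\tau))]$, and the uniform bound $\sup_{t,h}\E[|\delta_h(t)|]<\infty$ built into the definition of the $p$-lim shows that $\phi$ is Lipschitz; moreover the right derivative of the first term exists at \emph{every} $\tau$ and equals $\E[Yg(\tau)]$ (because $|\E[Y\delta_h(\tau)]|\le\|Y\|_\infty\,\E[|\delta_h(\tau)|]\to0$ pointwise in $\tau$), while the integral term is differentiable with derivative $\E[Yg(\tau)]$ at a.e.\ $\tau$ by the Lebesgue differentiation theorem. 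Hence $\phi$ is a real Lipschitz function whose right derivative vanishes a.e.; a Lipschitz function is differentiable a.e., its derivative coincides with the right derivative wherever both exist, and it is the integral of its derivative, so $\phi$ is constant. This gives $\E\big[Y\big(f(z)-f(s)-\int_s^z g(u)\,du\big)\big]=0$ for all such $Y$, which is exactly $\E^\eps_s[M^\eps_f(z)]=M^\eps_f(s)$. In short: your Banach-valued FTC detour becomes valid once ``continuity of $g$'' is replaced by ``Lebesgue points of $g$'' together with the Lipschitz property, and that substitution is the one missing idea in your proposal.
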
 

This last result will allow us to characterize the limiting process of $\calX_{\eps,M}$ through a well-posed martingale problem with generator $\calA$ that has to be determined. Unfortunately, the pseudogenerator $\calA^\eps$ associated to $\calX_{\eps,M}$, at some test function $f$, has a singular term of order $1/\sqrt{\eps}$. The idea of the perturbed test function method is to construct a perturbation $f^\eps$ of $f$ in order to extract an effective statistical behavior from $\calA^\eps f^\eps$. This strategy allows $\calA^\eps f^\eps$ to converge to $\calA f$, where $\calA$ will be the generator describing this asymptotic statistical behavior for $\calX_{0,M}$ (with distribution $\Pro_{0,M}$).

\subsection{Technical lemmas for the fluctuations $V$}

Here, we introduce two results that are used in the forthcoming analysis to analyze the corrections of a test function.  We refer to \cite[Appendices C and D]{gomez3} for detailed proofs of these two lemmas. The first one concerns the conditional expectation and variance.

\begin{lemma}\label{l:mar}
Setting 
\[
\mathcal{G}_z = \sigma(V(s,dp),\quad 0\leq s\leq z)\qquad z\geq 0,
\]
where $V(\cdot,dp)$ is given by \eqref{def:Vdp},  we have for any $z, h\geq 0$
\begin{equation}\label{eq:markovesp}
\E\big[ V(z+h,dp) \vert \mathcal{G}_{z}\big]=e^{-\mu\vert p\vert^{2\beta}h}\,V(z,dp),
\end{equation}
and
\begin{equation}\label{eq:markovvar}\begin{split}
\E\big[V(z+h,dp)V(z+h,dq) \big\vert \mathcal{G}_{z}\big]&-  \E\big[V(z+h,dp) \vert \mathcal{G}_{z}\big]\E\big[V(z+h,dq)\vert \mathcal{G}_{z}\big] \\
&=(1-e^{-2\mu\vert p\vert^{2\beta}h})r(p)\delta(p-q)\,dp\,dq\,.
\end{split}\end{equation}
\end{lemma}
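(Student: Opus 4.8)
The plan is to exploit the Ornstein--Uhlenbeck structure carried by each spectral component $V(\cdot,dp)$ defined in \eqref{def:Vdp}. The key observation is that the defining integral over $(-\infty, z+h)$ splits at $z$ into a $\mathcal{G}_z$-measurable part and an innovation that is independent of the past. Concretely, I would write
\[
V(z+h,dp) = \int_{-\infty}^{z} e^{-\mu|p|^{2\beta}(z+h-u)} B(du,dp) + \int_{z}^{z+h} e^{-\mu|p|^{2\beta}(z+h-u)} B(du,dp),
\]
and note that the first term factors as $e^{-\mu|p|^{2\beta}h}\,V(z,dp)$, hence is $\mathcal{G}_z$-measurable, while the second term, which I denote $N(dp)$, depends only on $B(du,dp)$ for $u\in(z,z+h)$. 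By the white-noise-in-$u$ structure of $B$ (the factor $\delta(u-v)$ in its covariance), $N(dp)$ is centered and independent of $\mathcal{G}_z$.

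The conditional expectation \eqref{eq:markovesp} then follows at once: taking $\E[\,\cdot\mid\mathcal{G}_z]$ in the splitting above leaves the $\mathcal{G}_z$-measurable term untouched and annihilates $N(dp)$, giving $\E[V(z+h,dp)\mid\mathcal{G}_z]=e^{-\mu|p|^{2\beta}h}\,V(z,dp)$.

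For the conditional covariance \eqref{eq:markovvar}, the conditionally deterministic term cancels in the centered product, so the left-hand side reduces to $\E[N(dp)\,N(dq)]$, which I would evaluate with the Itô-type isometry supplied by the covariance kernel of $B$. The factor $\delta(u-v)$ collapses the double integral over $(z,z+h)^2$ to a single integral, the factor $\delta(p-q)$ survives, and the remaining computation is
\[
\int_z^{z+h} e^{-2\mu|p|^{2\beta}(z+h-u)}\,du = \frac{1-e^{-2\mu|p|^{2\beta}h}}{2\mu|p|^{2\beta}}.
\]
Multiplying by the prefactor $2\mu\,r(p)|p|^{2\beta}$ appearing in the covariance of $B$ cancels the denominator $2\mu|p|^{2\beta}$ and leaves exactly $(1-e^{-2\mu|p|^{2\beta}h})\,r(p)\,\delta(p-q)\,dp\,dq$, as asserted.

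The computation is otherwise routine; the only point requiring genuine care is the rigorous handling of the formal Gaussian measure $B(du,dp)$ and of its covariance "density" involving the product of Dirac masses $\delta(u-v)\delta(p-q)$. To make the isometry and the independent-splitting step precise, one should interpret these identities when integrated against test functions in $(u,p)$, viewing $V(z,\cdot)$ as a genuine Gaussian random field and justifying the decomposition at the level of stochastic integrals against $B$. I expect this measure-theoretic bookkeeping, rather than any conceptual difficulty, to be the delicate part, and I would defer it to the construction detailed in \cite[Appendices C and D]{gomez3}, as the paper already indicates.
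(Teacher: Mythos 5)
Your proof is correct and is essentially the intended one: the paper itself contains no argument for this lemma, deferring entirely to \cite[Appendices C and D]{gomez3}, and your decomposition of $V(z+h,dp)$ into the $\mathcal{G}_z$-measurable part $e^{-\mu|p|^{2\beta}h}\,V(z,dp)$ plus a centered innovation independent of the past, followed by the isometry computation in which $2\mu\,r(p)|p|^{2\beta}$ cancels the denominator of $\bigl(1-e^{-2\mu|p|^{2\beta}h}\bigr)/\bigl(2\mu|p|^{2\beta}\bigr)$, is precisely the standard Ornstein--Uhlenbeck argument carried out there. Your closing caveat --- that the identities should be interpreted when integrated against test functions in $p$, i.e.\ at the level of the Gaussian field $V(z,\varphi)$ --- is the right way to make the formal $\delta$-covariance rigorous, so no gap remains.
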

The second result concerns uniform bounds for the fluctuations $V$.   
\begin{lemma}\label{l:bound} Let $L>0$, $M>0$,
  \begin{equation*} 
  D_{k,M} := [0,L]\times L^\infty([0,L],W_{k,M}), 
  \end{equation*}
  with
  \begin{equation*}
  W_{k,M} := \big\{\varphi\in W^{1,k}(S):\quad \|\varphi\|_{W^{1,k}}\leq M \big\},
  \end{equation*}
where $W^{1,k}(S)$ stands for the Sobolev space with $k\in(1,\infty]$. We have
  \begin{equation}\label{eq:prop1V}\E\Big[\sup_{(z,\varphi)\in D_{k,M}} \Big\vert V\Big(\frac{z}{\eps},\varphi(z,\cdot)\Big)\Big\vert\Big]\leq C + \frac{C(\eps)}{\seps}\,,\end{equation}
  and for any $n\in\mathbb{N}^*$
  \begin{equation}\label{eq:prop2V}\sup_{\eps}\sup_{z\in[0,L]}\E\Big[\sup_{\varphi\in W_{k,M}} \Big\vert V\Big(\frac{z}{\eps},\varphi\Big)\Big\vert ^n\Big]\leq C_n\,,\end{equation}
  where $C$, $C_n$ and $C(\eps)$ are three positive constants, and the latter satisfies 
  \[\lim_{\eps\to 0} C(\eps) = 0.\] 
\end{lemma}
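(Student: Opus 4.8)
The plan is to reduce both inequalities to moment estimates for one–parameter families of centered Gaussian processes obtained by testing the random measure $V(z/\eps,\cdot)$ against fixed functions, and then to control those through the covariance structure recorded in Lemma \ref{l:mar}. The starting point is the test–function covariance: from \eqref{def:Vdp} and the covariance of $B$ one computes $\E[V(z,dp)V(z,dq)]=r(p)\,\delta(p-q)\,dp\,dq$, hence
\[
\E\big[V(z,\psi)^2\big]=\int_S \psi(p)^2 r(p)\,dp \le C\,\|\psi\|_\infty^2,
\]
the last bound using that $\alpha<1/2$ makes $\int_S r(p)\,dp<\infty$. To turn the supremum over the infinite–dimensional ball $W_{k,M}$ into a supremum over fixed functions I would integrate by parts in $p$: writing $S=(-r_S,r_S)$ and $\varphi(p)=\varphi(-r_S)+\int_{-r_S}^p\varphi'(t)\,dt$, Fubini gives $V(z,\varphi)=\varphi(-r_S)\,V(z,S)+\int_S\varphi'(t)\,V(z,(t,r_S))\,dt$, so that the one–dimensional Sobolev embedding $W^{1,k}(S)\hookrightarrow C^0(\overline S)$ (valid for $k>1$) together with Hölder's inequality yields, with $1/k+1/k'=1$,
\[
\sup_{\varphi\in W_{k,M}}|V(z,\varphi)|\le CM\Big(|V(z,S)|+\big\|V(z,(\cdot,r_S))\big\|_{L^{k'}(S)}\Big).
\]
Here $V(z,S)$ and the cumulative process $s\mapsto V(z,(s,r_S))$ are centered Gaussians with variances $\int_S r\,dp$ and $\int_s^{r_S}r\,dp$ respectively, both bounded by $\int_S r\,dp$ uniformly in $s$.

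For \eqref{eq:prop2V} this already closes the argument: by stationarity the law of $V(z/\eps,\cdot)$ is independent of $\eps$ at a fixed $z$, so I would bound the $n$-th moment of the right–hand side by Gaussianity, pulling the expectation inside the $L^{k'}$-norm via Jensen and Fubini; each resulting term is a bounded Gaussian moment, which gives the $\eps$- and $z$-uniform constant $C_n$.

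The substantial estimate is \eqref{eq:prop1V}, whose extra supremum over $z\in[0,L]$ amounts, after rescaling, to a supremum of $V$ over the growing window $[0,L/\eps]$. By the reduction above it suffices to bound $\E[\sup_{z\in[0,L]}|G_s(z)|^m]$ for $G_s(z):=V(z/\eps,\psi_s)$ with $\psi_s\in\{\mathbf{1}_S,\mathbf{1}_{(s,r_S)}\}$, uniformly in $s$. From \eqref{eq:markovesp} and stationarity one obtains the exact increment variance
\[
\E\big[(G_s(z_2)-G_s(z_1))^2\big]=2\int_S\big(1-e^{-\mu|p|^{2\beta}|z_2-z_1|/\eps}\big)\psi_s(p)^2 r(p)\,dp,
\]
and the elementary bound $1-e^{-x}\le x^\theta$ ($x\ge0$, $\theta\in(0,1)$) gives $\E[(G_s(z_2)-G_s(z_1))^2]\le C\,(|z_2-z_1|/\eps)^\theta$, with $C$ uniform in $s$ since $\int_S|p|^{2\beta\theta}r(p)\,dp<\infty$ (indeed $2\beta\theta-2\alpha>-2\alpha>-1$). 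By Gaussianity this upgrades to $\E[|G_s(z_2)-G_s(z_1)|^m]\le c_m\,C^{m/2}\,\eps^{-\theta m/2}|z_2-z_1|^{\theta m/2}$, which I would feed into the Garsia--Rodemich--Rumsey inequality with $\theta m/2=1+\delta>1$ (i.e. $m>2/\theta$). This produces $\E[\sup_{z\in[0,L]}|G_s(z)|^m]\le C_1+C_2\,\eps^{-(1+\delta)}$, and Jensen gives $\E[\sup_z|G_s(z)|]\le C+C'\eps^{-\theta/2}$. Pushing this back through the $L^{k'}$-reduction (Jensen and Fubini to exchange $\sup_z$ with the $s$-integral) yields the same rate, and writing $\eps^{-\theta/2}=\eps^{-1/2}\eps^{(1-\theta)/2}$ with $\theta<1$ gives exactly $C+C(\eps)/\seps$ with $C(\eps)=C'\eps^{(1-\theta)/2}\to0$.

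The main obstacle is the simultaneous control of the two suprema in \eqref{eq:prop1V}: the supremum over the expanding $z$-window $[0,L/\eps]$, which forces genuine $\eps$-growth, and the supremum over the infinite–dimensional ball $W_{k,M}$. The delicate part is the bookkeeping in the last paragraph, namely choosing the Hölder exponent $\theta<1$ and the moment order $m$ so that Garsia--Rodemich--Rumsey delivers precisely the rate $\eps^{-\theta/2}$; this is what guarantees $C(\eps)\to0$, and hence that $\seps$ times the bound still vanishes, as is needed in the applications (e.g. Lemma \ref{lem:Phi}). Notably the long–range character of the medium plays no role here, since the increment variance above is computed exactly and never involves the non-integrable tail of $R$.
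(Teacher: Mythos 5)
Your proposal is correct, but there is nothing in the paper to compare it against: the paper does not prove Lemma \ref{l:bound} in-house, it defers entirely to \cite[Appendices C and D]{gomez3}. Taken on its own merits, your argument is sound and self-contained. The covariance computation $\E[V(z,dp)V(z,dq)]=r(p)\,\delta(p-q)\,dp\,dq$ is exact; the integration-by-parts/Sobolev-embedding reduction of the supremum over $W_{k,M}$ to the two concrete processes $V(z,S)$ and $t\mapsto V(z,(t,r_S))$ is valid for every $k\in(1,\infty]$, and it immediately yields \eqref{eq:prop2V} by stationarity and Gaussian moment scaling. For \eqref{eq:prop1V}, your increment identity $\E[(G_s(z_2)-G_s(z_1))^2]=2\int_S\big(1-e^{-\mu|p|^{2\beta}|z_2-z_1|/\eps}\big)\psi_s^2(p)\,r(p)\,dp$ is exact for the Ornstein--Uhlenbeck superposition \eqref{def:Vdp}; the bound $1-e^{-x}\le x^\theta$ combined with $\int_S|p|^{2\beta\theta}r(p)\,dp<\infty$ (indeed $2\beta\theta-2\alpha>-1$ since $\alpha<1/2$) is correct and uniform in $s$; and Kolmogorov/Garsia--Rodemich--Rumsey with $m>2/\theta$ then gives $\E[\sup_{z\in[0,L]}|G_s(z)|]\le C+C'\eps^{-\theta/2}$ uniformly in $s$, which is exactly of the form $C+C(\eps)/\seps$ with $C(\eps)=C'\eps^{(1-\theta)/2}\to0$ because $\theta<1$. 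Two details should be made explicit rather than left implicit: (a) in the $L^{k'}$ step you need the $k'$-th moment $\E[\sup_z|V(z/\eps,(t,r_S))|^{k'}]$, not merely the first moment; your GRR estimate supplies it by taking $m\ge\max(k',\,2/\theta)+1$ and interpolating, after which the $1/k'$ power restores the rate $\eps^{-\theta/2}$; (b) the suprema over uncountable parameter sets should be read on a continuous modification of $(z,t)\mapsto V(z,(t,r_S))$, which the same moment bounds provide. Your closing observations are also apt: the non-integrable tail of $R$ never enters this lemma, and although Gaussian chaining (Dudley or Borell--TIS) would give the sharper $\sqrt{\log(1/\eps)}$ growth for the supremum over the window $[0,L/\eps]$, the statement only requires $o(\eps^{-1/2})$ growth, so your polynomial rate is fully sufficient for every use of \eqref{eq:prop1V} in the paper (Lemma \ref{lem:Phi} and the tightness Lemma \ref{bound2}).
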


\subsection{Tightness}\label{tightnesssec}

In this section, we prove the tightness of $(\calX_{\eps,M})_{\eps }$, which is a family of processes with continuous trajectories. According to \cite[Theorem 13.4]{billingsley}, it is enough to prove its tightness in $\mathcal{D}([0,L],\mathbb{C}^{2n})$, the set of c\`ad-l\`ag functions with values in $\mathbb{C}^{2n}$, and equipped with the Skorohod topology.
	
\begin{proposition}
The family $(\calX_{\eps,M})_{\eps}$ is tight in $\mathcal{D}([0,L],\mathbb{C}^{2n})$.
\end{proposition}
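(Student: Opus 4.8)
The plan is to establish tightness through the perturbed-test-function machinery built around the pseudogenerator $\calA^\eps$ of Proposition \ref{martingale}, exploiting that the cutoff $F_M$ makes $\calX_{\eps,M}$ uniformly bounded by $2M$ with probability one. This uniform bound supplies the compact-containment condition for free, so after the Skorohod reduction already invoked (see \cite[Theorem 13.4]{billingsley}) it suffices to verify a Kurtz/Kushner-type criterion: for each real-valued $f$, smooth with bounded derivatives on the ball of radius $2M$, I would produce a perturbed test function $f^\eps \in \mathcal{D}(\calA^\eps)$ such that $f^\eps - f(\calX_{\eps,M})$ is $p$-small, that is $\sup_{z\leq L}\E|f^\eps(z)-f(\calX_{\eps,M}(z))|\to 0$, and such that $\{\calA^\eps f^\eps(z):\eps>0,\ z\leq L\}$ is bounded in $L^1$ uniformly in $\eps$ and $z$. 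Granting these two facts, the decomposition $f^\eps(z)=f^\eps(0)+\int_0^z\calA^\eps f^\eps(u)\,du + M^\eps_{f^\eps}(z)$ from Proposition \ref{martingale} has a uniformly Lipschitz finite-variation part, and a standard conditional-increment estimate at stopping times yields the Kurtz modulus-of-continuity condition; tightness of each $f(\calX_{\eps,M})$ and, with compact containment, of $(\calX_{\eps,M})_\eps$ follows (cf. \cite{kushner}).

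For the construction, applying $\calA^\eps$ to $f(\calX_{\eps,M}(z))$ and using \eqref{eq:calX_M2} produces a single singular term of order $1/\seps$,
\[
\calA^\eps f(z) = \frac{\theta'_0}{\seps}\, V\Big(\frac{z}{\eps}\Big)\,\nabla f(\calX_{\eps,M}(z))\cdot G_\eps(z,\calX_{\eps,M}(z)),
\]
where $G_\eps$ gathers the bounded coefficients $\tfrac{i\omega_j(-1)^l}{2\lambda_\eps(\kappa_j)c_0^2}e^{i\omega_j(-1)^{l+1}\tau_\eps(\kappa_j,z)/\eps}F_M(\cdot)\bfe_{jl}$. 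I would cancel it with a first corrector
\[
f_1^\eps(z) = \frac{\theta'_0}{\seps}\int_z^{\infty}\E^\eps_z\Big[V\Big(\frac{u}{\eps}\Big)\Big]\cdot\nabla f(\calX_{\eps,M}(z))\,G_\eps(u,\calX_{\eps,M}(z))\,du,
\]
frozen at the state $\calX_{\eps,M}(z)$, whose $z$-derivative reproduces $-\calA^\eps f$ up to lower-order terms. The conditional expectation is evaluated mode-by-mode from the representation \eqref{def:Vdp}: by \eqref{eq:markovesp} one has $\E^\eps_z[V(u/\eps,dp)]=e^{-\mu|p|^{2\beta}(u-z)/\eps}V(z/\eps,dp)$, so that, once the phase $e^{2i\omega_j\lambda_\eps(\kappa_j)(u-z)/\eps}$ coming from $\tau_\eps$ (see \eqref{def:tau}) is inserted, the $u$-integral is $\eps$ times the resolvent-type expression $\int_S r(p)\big(\mu|p|^{2\beta}-2i\omega_j\lambda_\eps(\kappa_j)\big)^{-1}V(z/\eps,dp)$. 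This shows $f_1^\eps=\calO(\seps)$, hence $p$-small.

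It then remains to check that $\calA^\eps(f+f_1^\eps)$ is bounded in $L^1$ uniformly in $\eps$ and $z$. The singular part cancels by construction, and the surviving contributions are of two kinds: the dynamics hitting the $V(z/\eps)$ frozen inside $f_1^\eps$, producing quadratic expressions in $V(z/\eps)$ of order one, and the explicit $z$-derivatives of the phases and frozen coefficients, again of order one. Each such term carries at most two powers of $V(z/\eps,\varphi)$ against a bounded weight, so its $L^1$-norm is controlled uniformly by the truncation together with the moment bound \eqref{eq:prop2V} (and \eqref{eq:prop1V} for the supremum estimates). The hard part, and the place where long-range correlations would otherwise be fatal, is the uniform-in-$\eps$ (and uniform-in-$\kappa$) boundedness of the resolvent integrals $\int_S r(p)\big(\mu|p|^{2\beta}-2i\omega\lambda_\eps(\kappa)\big)^{-1}dp$: since $R$ is not integrable these would diverge at $p=0$ without the oscillatory phase, but for $|\omega|\geq\omega_c>0$ the denominator stays bounded away from zero near $p=0$ while $r(p)=a(p)/|p|^{2\alpha}$ with $\alpha<1/2$ is integrable there, giving an $\eps$-independent bound. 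This is precisely where the exclusion of $\omega=0$ from the source support and the constraint $\alpha<1/2$ enter, and it is the crux of the estimate; the remaining bookkeeping of lower-order correctors and the $\kappa$-uniformity, using that $\hat\Psi$ is compactly supported so that $\lambda_\eps(\kappa)$ stays in a fixed range, is then routine.
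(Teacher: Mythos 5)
Your proposal is correct and follows essentially the same route as the paper's proof: the same corrector $f_1^\eps$ with frozen state and extended phases, the same evaluation of $\E^\eps_z[V]$ via \eqref{eq:markovesp} turning the $u$-integral into $\calO(\seps)$ resolvent terms, and the same appeal to the bounds \eqref{eq:prop1V}--\eqref{eq:prop2V} to control $f_1^\eps$ and to make $\calA^\eps(f+f_1^\eps)$ uniformly integrable (the paper verifies uniform $L^2$ bounds, which is what your moment argument also delivers and what Kushner's criterion actually requires, rather than mere $L^1$-boundedness). Two harmless slips: the resolvent expression should read $\int_S \big(\fg(p)\pm 2i\omega_j\lambda_\eps(\kappa_j)\big)^{-1}V(z/\eps,dp)$ without the factor $r(p)$ (the density $r$ enters only through the covariance once moments are taken), and the uniformity over $z\in[0,L]$ in probability rests on the kernels $\varphi_{1,j,\eps}$ lying in a fixed Sobolev ball $W_{k,C}$ so that the maximal bound \eqref{eq:prop1V} applies, a point your write-up uses implicitly.
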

This proposition can be proved using the perturbed test function method by applying \cite[Theorem 4 pp. 48]{kushner}. Throughout the forthcoming analysis, we make use of the complex derivatives that are defined, for $\calX=u+iv$, as
\[
\partial_\calX := \frac{1}{2}(\partial_u - i\partial_v)\qquad \text{and}\qquad \partial_{\overline{\calX}} := \frac{1}{2}(\partial_u + i\partial_v).
\]
These tools allow us to keep working with complex quantities, and avoid working with \eqref{eq:calX_M2} rewritten in terms of real and imaginary parts.

In what follows, let $f$ be a smooth bounded function on $\mathbb{C}^{2n}$ with successive bounded derivatives, and set 
\[f_0 ^\eps (z):=f(\calX_{\eps,M}(z)).\]
In order to prove the tightness, we make use of the pseudogenerator and associated martingale techniques. The pseudogenerator for $\calX_{\eps,M}$ at $f_0^\eps$ is given by  
\begin{align*}
\mathcal{A}^\eps f_0 ^\eps (z) &= \frac{\theta'_0}{\seps}  V \Big(\frac{z}{\eps}\Big) \\
&\times \Big( \sum_{j=1}^n \sum_{l=0,1} \frac{i\omega_j(-1)^l}{2\lambda_\eps(\kappa_j)c^2_0}e^{i\omega_j(-1)^{l+1} \tau_\eps(\kappa_j,z)/\eps}F_M(\calX^{j,1-l}_{\eps,M}(z))  \partial_{\calX^{j,l}} f(\calX_{\eps,M}(z))\\
&+ \overline{\sum_{j=1}^n \sum_{l=0,1} \frac{i\omega_j(-1)^l}{2\lambda_\eps(\kappa_j)c^2_0}e^{i\omega_j(-1)^{l+1} \tau_\eps(\kappa_j,z)/\eps}F_M(\calX^{j,1-l}_{\eps,M}(z)) }\, \partial_{\overline{\calX^{j,l}}} f(\calX_{\eps,M}(z))\Big)\\
& =: \frac{\theta'_0}{\seps}  V \Big(\frac{z}{\eps}\Big) \\
&\times \Big(  \sum_{j=1}^n \sum_{l=0,1} \frac{i\omega_j(-1)^l}{2\lambda_\eps(\kappa_j)c^2_0}e^{i\omega_j(-1)^{l+1} \tau_\eps(\kappa_j,z)/\eps}F_M(\calX^{j,1-l}_{\eps,M}(z))  \partial_{\calX^{j,l}} f(\calX_{\eps,M}(z))\\
& \hspace{3cm}+c.c.\Big),
\end{align*}
where $c.c.$ stands for \emph{complex conjugate}, and will be used throughout the remaining of this proof instead of rewriting quantities that only have to be conjugated.

The tightness of $(\calX_{\eps,M})_{\eps }$ is proved through \cite[Theorem 4 pp. 48]{kushner} by Lemmas \ref{bound2} and \ref{A1} below. The main tool behind these technical requirements is the martingale property provided by Proposition \ref{martingale}, which involves the pseudogenerator $\calA^\eps$. Therefore, to prove the tightness, we need to remove the singular term produced by $V(\cdot/\eps)/\seps$ in $\mathcal{A}^\eps f_0 ^\eps$, and then in the corresponding martingale itself. To this end, we construct a small perturbation $f_1^{\eps}$ to $f_0^{\eps}$ (Lemma \ref{bound2}) so that the pseudogenerator $\mathcal{A}^\eps(f^\eps _0 +f^\eps _1)$ will become of order one w.r.t. $\eps$ (Lemma \ref{A1}), as well as the martingale associated to $f^\eps _0 +f^\eps _1$ in Proposition \ref{martingale}. 

Following the strategy of \cite[Chapter 4]{kushner}, we set
\begin{align*}
f^\eps _{1} (z) = \frac{\theta'_0}{\seps} & \int_{z}^{\infty}  ds \, \E^\eps_z \Big[V\Big(\frac{s}{\eps}\Big)\Big] \\
&\times \Big(\sum_{j=1}^n\sum_{l=0,1} \frac{i\omega_j(-1)^l}{2\lambda_\eps(\kappa_j)c^2_0}e^{i\omega_j(-1)^{l+1} (\tau_\eps(\kappa_j,z)+2\lambda_\eps(\kappa_j)(s-z))/\eps}\\
&\hspace{5cm}\times F_M(\calX^{j,1-l}_{\eps,M}(z))  \partial_{\calX^{j,l}} f(\calX_{\eps,M}(z))\\
&\hspace{1cm}+ c.c. \Big),
\end{align*}
for which we have the two following results.
\begin{lemma}\label{bound2}  For any $\eta>0$
\[
\lim_{\eps\to 0} \Pro\Big(\sup_{z \in[0, L]} \lvert f^\eps _1 (z)\rvert>\eta\Big) =0,\qquad \text{and} \qquad \lim_{\eps\to 0} \sup_{z \in[0, L]}\E[\lvert f^\eps _1 (z) \rvert ]=0.
\]
\end{lemma}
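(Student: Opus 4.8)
The plan is to bound the perturbation $f_1^\eps$ by controlling the conditional expectation $\E^\eps_z[V(s/\eps)]$ and then exploiting the oscillatory factors $e^{i\omega_j(-1)^{l+1}(\cdots)/\eps}$ to gain a factor of $\seps$ that cancels the singular prefactor $\theta'_0/\seps$. First I would use the decomposition \eqref{def:V}, $V(s/\eps)=\int_S V(s/\eps,dp)$, together with Lemma \ref{l:mar}: since $\calG^\eps_z$ is the $\sigma$-algebra generated by $V(\cdot/\eps,dp)$ up to $z$, the Markov-type identity \eqref{eq:markovesp} gives
\[
\E^\eps_z\Big[V\Big(\frac{s}{\eps}\Big)\Big] = \int_S e^{-\mu|p|^{2\beta}(s-z)/\eps}\,V\Big(\frac{z}{\eps},dp\Big),\qquad s\geq z.
\]
Substituting this into the definition of $f_1^\eps$ and interchanging the $s$-integral with the $p$-integral, the inner $s$-integral becomes
\[
\int_z^\infty e^{-\mu|p|^{2\beta}(s-z)/\eps}\,e^{i\omega_j(-1)^{l+1}2\lambda_\eps(\kappa_j)(s-z)/\eps}\,ds = \frac{\eps}{\mu|p|^{2\beta} - i\omega_j(-1)^{l+1}2\lambda_\eps(\kappa_j)}.
\]
This is the key structural gain: the integral produces an explicit factor $\eps$. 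The denominator stays bounded away from $0$ because $\omega_j\neq 0$ on the support of $\hat\Psi$ (recall $\omega_c>0$) and $\lambda_\eps(\kappa_j)$ is bounded below, so the modulus of the resulting expression is $\calO(\eps)$ uniformly in $p\in S$, $z\in[0,L]$, and the finitely many $(\omega_j,\kappa_j)$.

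Next I would collect the prefactors. After the $s$-integration, $f_1^\eps(z)$ is $\theta'_0/\seps$ times $\eps$ times an integral over $p\in S$ of $V(z/\eps,dp)$ against a bounded kernel, multiplied by the bounded quantities $F_M(\calX^{j,1-l}_{\eps,M}(z))\,\partial_{\calX^{j,l}}f(\calX_{\eps,M}(z))$; the latter are bounded by $2M$ and $\|f\|_{C^1}$ respectively thanks to the truncation. Thus, up to a deterministic constant $C_M$ depending on $M$, $n$, the frequencies, $\theta'_0$ and $\|f\|_{C^1}$, one has the pointwise bound
\[
|f_1^\eps(z)| \leq C_M\,\seps\,\Big| \int_S \frac{V(z/\eps,dp)}{\mu|p|^{2\beta} - i\,2\omega_j(-1)^{l+1}\lambda_\eps(\kappa_j)} \Big| \leq C_M\,\seps\,\Big|V\Big(\frac{z}{\eps},\varphi_\eps\Big)\Big|,
\]
where $\varphi_\eps$ is the bounded, $\eps$-dependent weight appearing in the denominator, which lies in $W_{k,M'}$ for a suitable $M'$ since it is smooth and uniformly bounded on $S$ (its $W^{1,k}$-norm is controlled independently of $\eps$ because $\lambda_\eps(\kappa)\to 1/c_0$). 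Here I use the notation $V(z/\eps,\varphi)$ for the pairing $\int_S \varphi(p)\,V(z/\eps,dp)$, consistent with Lemma \ref{l:bound}.

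Finally I would invoke Lemma \ref{l:bound} to conclude. For the second claim, estimate \eqref{eq:prop2V} with $n=1$ gives $\sup_\eps\sup_{z\in[0,L]}\E[|V(z/\eps,\varphi_\eps)|]\leq C_1$, hence $\sup_{z}\E[|f_1^\eps(z)|]\leq C_M\,C_1\,\seps\to 0$ as $\eps\to 0$. For the first (uniform-in-$z$) claim, I would use the supremum bound \eqref{eq:prop1V}: since the weight $\varphi_\eps(z,\cdot)$ is bounded in $L^\infty([0,L],W_{k,M'})$ uniformly in $\eps$, one gets
\[
\E\Big[\sup_{z\in[0,L]}|f_1^\eps(z)|\Big]\leq C_M\,\seps\,\E\Big[\sup_{(z,\varphi)\in D_{k,M'}}\Big|V\Big(\frac{z}{\eps},\varphi(z,\cdot)\Big)\Big|\Big]\leq C_M\,\seps\,\Big(C+\frac{C(\eps)}{\seps}\Big) = C_M\big(C\seps + C(\eps)\big),
\]
which tends to $0$ since $\seps\to 0$ and $C(\eps)\to 0$. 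The uniform high-probability statement then follows from Markov's inequality, $\Pro(\sup_z|f_1^\eps(z)|>\eta)\leq \eta^{-1}\E[\sup_z|f_1^\eps(z)|]\to 0$. The main obstacle is the bookkeeping around the oscillatory $s$-integral: one must verify that the denominator $\mu|p|^{2\beta}-2i\omega_j(-1)^{l+1}\lambda_\eps(\kappa_j)$ is genuinely bounded away from $0$ uniformly in $\eps$ and $p$ (this is exactly where the assumption $\omega_c>0$ is essential, ruling out $\omega=0$), and that the resulting weight still has a uniformly controlled $W^{1,k}(S)$-norm so that Lemma \ref{l:bound} applies; the factor $\seps$ extracted from combining $1/\seps$ with $\eps$ is what ultimately drives everything to zero.
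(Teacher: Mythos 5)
Your proposal follows the paper's own proof almost step for step: the same use of Lemma \ref{l:mar} to compute $\E^\eps_z[V(s/\eps)]$, the same explicit $s$-integration producing the factor $\eps$ that converts the singular prefactor $\theta'_0/\seps$ into $\seps$, the same reduction of $|f_1^\eps(z)|$ to a pairing $|V(z/\eps,\varphi_\eps)|$ against a deterministic weight, and the same appeal to Lemma \ref{l:bound}. Your treatment of the uniform-in-$z$ claim (Markov's inequality applied to $\E[\sup_z|f_1^\eps(z)|]$, bounded via \eqref{eq:prop1V}) is a harmless and slightly cleaner variant of the paper's event-splitting argument, and it is valid.

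There is, however, one genuine flaw, and it sits exactly at the point where the paper takes the most care. You justify $\varphi_\eps\in W_{k,M'}$ by asserting that the weight ``is smooth and uniformly bounded on $S$''. Boundedness is fine (the imaginary part of the denominator is bounded below by $2\omega_c\lambda_\eps(\kappa_j)>0$, which is where the cutoff frequency enters, as you note), but smoothness is false whenever $\beta<1/2$: writing $\fg(p)=\mu|p|^{2\beta}$, the derivative
\[
\partial_p \varphi_\eps(p) = \frac{-2\mu\beta\,|p|^{2\beta-1}\,\mathrm{sign}(p)}{\big(\fg(p) \pm 2i\omega_j\lambda_\eps(\kappa_j)\big)^2}
\]
blows up like $|p|^{2\beta-1}$ at $p=0$. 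Hence $\varphi_\eps\notin W^{1,\infty}(S)$ in general, and $\varphi_\eps\in W^{1,k}(S)$ only when the singularity is $L^k$-integrable, i.e. $k(1-2\beta)<1$; this forces $k\in(1,1/(1-2\beta))$ when $\beta<1/2$, while $k=\infty$ is admissible only for $\beta\geq 1/2$. This is precisely the computation the paper performs ($\int_S|\partial_p\varphi_{1,j,\eps}(p)|^k dp \leq \tilde C \int |p|^{k(2\beta-1)}dp <\infty$), and it is what legitimizes the use of Lemma \ref{l:bound}, whose estimates \eqref{eq:prop1V}--\eqref{eq:prop2V} are stated for a fixed class $W_{k,M}$ with $k\in(1,\infty]$. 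As written, your justification (smoothness, suggesting $k=\infty$ always works) would break down for $\beta<1/2$, a parameter range the paper explicitly allows (e.g. $\beta=1/6$ in its illustrations). Once this verification of the $W^{1,k}$-membership, with the correct restriction on $k$, is inserted, your proof is complete and coincides with the paper's.
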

\begin{lemma}\label{A1}
The family $\big\{\mathcal{A}^\eps \big(f^\eps _0 +f^\eps _1\big)(z), \eps \in(0,1), 0\leq z\leq L\big\}$ is uniformly integrable.
\end{lemma}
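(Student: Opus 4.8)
The plan is to prove the stronger statement that the family is bounded in $L^2$, uniformly over $\eps\in(0,1)$ and $z\in[0,L]$; since an $L^p$-bounded family with $p>1$ is uniformly integrable, this is enough. Thus I would establish
\[
\sup_{\eps\in(0,1)}\ \sup_{z\in[0,L]}\E\big[\,|\calA^\eps(f^\eps_0+f^\eps_1)(z)|^2\,\big]<\infty,
\]
by first unravelling the algebraic form of $\calA^\eps(f^\eps_0+f^\eps_1)$ and then controlling each summand through the moment bound \eqref{eq:prop2V} of Lemma \ref{l:bound}.

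First I would compute $\calA^\eps f^\eps_1$ directly from the defining difference quotient of the pseudogenerator, viewing $f^\eps_1(z)$ as a function of $z$, of the conditional law of $V(\cdot/\eps)$ beyond $z$ given $\mathcal{G}^\eps_z$, and of $\calX_{\eps,M}(z)$. Three kinds of contributions appear: the boundary term from differentiating the lower limit of the $\int_z^\infty$ integral; the terms from the explicit $z$-dependence of the phases and of the conditional weights together with the generator of the driving noise (the conditional expectation and variance being given by \eqref{eq:markovesp} and \eqref{eq:markovvar} in Lemma \ref{l:mar}); and the \emph{fast} term in which the dynamics \eqref{eq:calX_M2} acts on $\calX_{\eps,M}(z)$ inside $f^\eps_1$, carrying an extra factor $\theta'_0 V(z/\eps)/\seps$. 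The perturbation $f^\eps_1$ is built precisely so that the boundary term equals $-\calA^\eps f^\eps_0$ (the shifted phase vanishing at $s=z$); hence the singular term of order $1/\seps$ cancels and $\calA^\eps(f^\eps_0+f^\eps_1)$ is of order one in $\eps$.

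Next I would check that every surviving summand has the generic form $c_\eps(z)\,P\big(V(z/\eps,\varphi_1),V(z/\eps,\varphi_2)\big)\,Q_\eps(z)$, where $P$ is a polynomial of degree at most two (the system \eqref{eq:calX_M2} being linear in $V$, the only quadratic contribution comes from the fast term acting on $f^\eps_1$). Here $c_\eps(z)$ is a coefficient bounded uniformly in $\eps,z$ (products of $\omega_j$ and of $1/\lambda_\eps(\kappa_j)$, bounded since each $\kappa_j$ stays in a fixed ball and $\lambda_\eps(\kappa_j)\to 1/c_0$), while $Q_\eps(z)$ gathers $F_M$, its derivatives, the derivatives of $f$, and the unimodular oscillatory exponentials, all bounded uniformly thanks to the cutoff $\phi_M$ and to $|\calX_{\eps,M}(z)|\le 2M$. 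Each $\varphi_i$ is, by \eqref{eq:markovesp}, a weight of the form $p\mapsto e^{-\mu|p|^{2\beta}u/\eps}$ or its $s$-integrated, phase-modulated version; the essential point is that the $\int_z^\infty e^{-\mu|p|^{2\beta}(s-z)/\eps}\,ds$ supplies the $\eps$-factor that tames the $1/\eps$ of the fast term, and that, via $\sup_{x\ge0}x^a e^{-x}<\infty$, all these $\varphi_i$ together with their $p$-derivatives stay in one Sobolev ball $W_{k,M'}(S)$ whose radius $M'$ is independent of $\eps$ and $z$.

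Finally, the $L^2$-bound follows by the Cauchy--Schwarz and generalized H\"older inequalities, which reduce $\E[|\calA^\eps(f^\eps_0+f^\eps_1)(z)|^2]$ to a finite combination of moments $\E[\sup_{\varphi\in W_{k,M'}}|V(z/\eps,\varphi)|^n]$ with $n\le 4$, multiplied by $\eps,z$-uniform constants; applying \eqref{eq:prop2V} bounds these uniformly, which yields the desired uniform integrability. I expect the main difficulty to be twofold: the careful bookkeeping needed to compute $\calA^\eps f^\eps_1$ and to verify the exact cancellation of the $1/\seps$ term, and --- more delicately --- the uniform-in-$\eps$ control of the Sobolev norms of the weights $\varphi_i$ near $p=0$, where $|p|^{2\beta}$ is only H\"older regular; this uniformity is exactly what makes \eqref{eq:prop2V} applicable and is the technical heart of the argument.
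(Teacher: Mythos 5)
Your proposal is correct and follows essentially the same route as the paper: the paper likewise deduces uniform integrability from a uniform $L^2$ bound (its Lemma \ref{lem:bound_As}), obtained after the ``lengthy but straightforward algebra'' giving the decomposition \eqref{eq:exp_A} in which the $1/\seps$ terms have cancelled by design of $f^\eps_1$, and then controls each summand by showing the weights lie in a fixed Sobolev ball $W_{k,C}$ (with $k=\infty$ for $\beta\geq 1/2$ and $k\in(1,1/(1-2\beta))$ for $\beta<1/2$) so that \eqref{eq:prop2V} applies. The only minor bookkeeping difference is that the $z$-derivative of the random phase $\tau_\eps$ inside $f^\eps_1$ produces $\Theta(\seps V)$ rather than its linearization, so the paper must Taylor-expand $\Theta$ once more and ends up bounding $|\calA^\eps_1|^2$ by moments of order up to $8$ (with compensating $\eps^2$ factors) rather than your $n\leq 4$; since \eqref{eq:prop2V} holds for every $n$, this does not affect the argument.
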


\begin{proof}[Proof of Lemma \ref{bound2}]
According to \eqref{eq:markovesp} we have
\[
\E^\eps_z \Big[V\Big(\frac{z}{\eps}+s\Big)\Big] = \int_{S} e^{-\fg(p)s}V\Big(\frac{z}{\eps},dp\Big),
\]
where we have introduced the notation
\[\fg(p)=\mu |p|^{2\beta}\]
for simplicity. Making the change of variable $s\to z+\eps s$, and integrating in $s$, we obtain
\begin{align*}
f^\eps _{1} (z) & =  \seps \, \theta'_0 \sum_{j=1}^n \sum_{l=0,1} \int  \frac{V(z/\eps,dp)}{\fg(p) + 2i\omega_j (-1)^{l+1}  \lambda_\eps(\kappa_j)} \\
&  \hspace{2cm}\times \frac{i\omega_j(-1)^l}{2\lambda_\eps(\kappa_j)c^2_0}e^{i\omega_j(-1)^{l+1} \tau_\eps(\kappa_j,z)/\eps}F_M(\calX^{j,1-l}_{\eps,M}(z)) \partial_{\calX^{j,l}} f(\calX_{\eps,M}(z))\\ 
&+c.c..
\end{align*}
It turns out that
\[
|f^\eps _{1} (z)| \leq \seps \, K \sum_{j=1}^n  \Big|\int_S \frac{V(z/\eps,dp)}{\fg(p) + 2i\omega_j \lambda_\eps(\kappa_j)} \Big| + \Big|\int_S \frac{V(z/\eps,dp)}{\fg(p) - 2i\omega_j \lambda_\eps(\kappa_j)} \Big|, \]
for some constant $K>0$. Let us denote
\begin{equation}\label{def:Vphi1}
V(s,\varphi_{1,j,\eps}) = \int_S \frac{V(s,dp)}{\fg(p) \pm 2i\omega_j \lambda_\eps(\kappa_j)}, \qquad\text{where} \qquad \varphi_{1,j,\eps}(p):= \frac{1}{\fg(p) \pm 2i\omega_j \lambda_\eps(\kappa_j)},
\end{equation}
which is Lipschitz in $p$ if $\beta\geq 1/2$, or belongs to $W^{1,k}(S)$ for $k\in(1,1/(1-2\beta))$ if $\beta<1/2$ since
\[
\int_S |\partial_p \varphi_{1,j,\eps}(p)|^k dp \leq \tilde C \int |p|^{k(2\beta-1)} <\infty.
\]
Therefore, $\varphi_{1,j,\eps}\in W_{k,C}$ for some constant $C>0$ and
\[
|V(z/\eps,\varphi_{1,j,\eps})| \leq \sup_{\varphi \in W_{k,C}}|V(z/\eps,\varphi)|,
\]
so that
\[
|f^\eps _{1} (z)| \leq K' \seps \sup_{\varphi\in W_{k,C}} |V(z/\eps,\varphi)|.
\]
Consequently, according to \eqref{eq:prop2V} we obtain 
\[
\lim_{\eps\to 0} \sup_{z \in[0, L]}\E[\lvert f^\eps _1 (z) \rvert ]=0.
\]
Also, we have for any $\eta'>0$
\begin{align*}
\Pro\Big(\sup_{z \in[0, L]} \lvert f^\eps _1 (z)\rvert>\eta\Big)& \leq \Pro\Big(\sup_{z \in[0, L]} \lvert f^\eps _1 (z)\rvert>\eta,\, \seps \sup_{(s,\varphi)\in [0,L/\eps]\times W_{k,C}} |V(s,\varphi)|\leq \eta' \Big)\\
& + \Pro\Big( \seps \sup_{(s,\varphi)\in [0,L/\eps]\times W_{k,C}} |V(s,\varphi)|>\eta' \Big)=: P_{1,\eps}+P_{2,\eps},
\end{align*}
where $P_{1,\eps}=0$ since $K'\eta'\leq \eta$ for $\eta'$ small enough but independent of $\eps$. Finally, according to \eqref{eq:prop1V}, together with the Markov inequality, we obtain
\[0\leq P_{2,\eps} \leq \frac{1}{\eta'}\E\Big[\seps \sup_{(s,\varphi)\in [0,L/\eps]\times W_{k,C}} |V(s,\varphi)|\Big],\]
so that $\lim_{\eps\to 0} P_{2,\eps}=0$, which concludes the proof of the lemma.
\end{proof}

\begin{proof}[Proof of Lemma \ref{A1}]
After lengthy but straightforward algebra, we obtain 
\begin{equation}\label{eq:exp_A}
\mathcal{A}^\eps (f_0 ^\eps + f^\eps _{1}) (z) = \mathcal{A}^\eps_0(z)+ \mathcal{A}^\eps_1(z),
\end{equation}
where
\begin{align*}
\mathcal{A}^\eps_0(z) &:=  \sum_{j=1}^n\sum_{l=0,1} \theta'^2_0  \int \frac{V (z/\eps) V(z/\eps,dp)}{\fg(p) + 2i\omega_j (-1)^{l+1}  \lambda_\eps(\kappa_j)}\\
&\times \Big( \sum_{j'=1}^n\sum_{l'=0,1} \bF_{1,j,j',l,l',\eps}(z) e^{i( \omega_j(-1)^{l+1} \tau_\eps(\kappa_j,z)+\omega_{j'}(-1)^{l'+1} \tau_\eps(\kappa_{j'},z))/\eps}\\
&\hspace{2cm} + \bF_{2,j,j',l,l',\eps}(z) e^{i( \omega_j(-1)^{l+1} \tau_\eps(\kappa_j,z)-\omega_{j'}(-1)^{l'+1} \tau_\eps(\kappa_{j'},z))/\eps}\\
& \hspace{0.5cm} + \bG_{1,j,\eps}(z) + \bG_{2,j,\eps}(z) e^{2i \omega_j(-1)^{l+1} \tau_\eps(\kappa_j,z)/\eps}\Big)+ c.c.,\\
\mathcal{A}^\eps_1(z) &:= \sum_{j=1}^n\sum_{l=0,1} \frac{\theta'_0}{\seps} \int \frac{ \Theta(\seps V(z/\eps))V(z/\eps,dp)}{\fg(p) + 2i\omega_j (-1)^{l+1}  \lambda_\eps(\kappa_j)}
\bH_{j,\eps}(z)e^{i\omega_j(-1)^{l+1} \tau_\eps(\kappa_j,z)/\eps}+ c.c.,
\end{align*} 
with
\begin{align*}
\bF_{1,j,j',l,l',\eps}(z)&:=  \frac{i^2\omega_j\omega_{j'}(-1)^{l+l'}}{4\lambda_\eps(\kappa_j)\lambda_\eps(\kappa_{j'})c^4_0} F_M(\calX^{j,1-l}_{\eps,M}(z))F_M(\calX^{j',1-l'}_{\eps,M}(z)) \\
&\hspace{6cm}\times \partial^2_{\calX^{j',l'}\calX^{j,l}} f(\calX_{\eps,M}(z)),\\
\bF_{2,j,j',l,l',\eps}(z)&:=  \frac{-i^2\omega_j\omega_{j'}(-1)^{l+l'}}{4\lambda_\eps(\kappa_j)\lambda_\eps(\kappa_{j'})c^4_0} F_M(\calX^{j,1-l}_{\eps,M}(z))\overline{F_M(\calX^{j',1-l'}_{\eps,M}(z))} \\
&\hspace{6cm}\times \partial^2_{\overline{\calX^{j',l'}}\calX^{j,l}} f(\calX_{\eps,M}(z)),\\
\bG_{1,j,l,\eps}(z)&:= \frac{-i^2\omega^2_j}{4\lambda^2_\eps(\kappa_j)c^4_0}F_M(\calX^{j,l}_{\eps,M}(z)) \partial_{\calX} F_M(\calX^{j,l}_{\eps,M}(z))\partial_{\calX^{j,l}} f(\calX_{\eps,M}(z)), \\
\bG_{2,j,l,\eps}(z)&:=\frac{i^2\omega^2_j}{4\lambda^2_\eps(\kappa_j)c^4_0}\overline{F_M(\calX^{j,l}_{\eps,M}(z))} \partial_{\overline{\calX}} F_M(\calX^{j,l}_{\eps,M}(z)) \partial_{\calX^{j,l}} f(\calX_{\eps,M}(z)),\\
\bH_{j,l,\eps}(z)&:= \frac{-i^2\omega^2_j}{4\lambda^2_\eps(\kappa_j)c_0^4} F_M(\calX^{j,l}_{\eps,M}(z))\partial_{\calX^{j,l}} f(\calX_{\eps,M}(z)).
\end{align*}

Note that both $\mathcal{A}^\eps_0$ and $\mathcal{A}^\eps_1$ depend on the original test function $f$, even if we drop this dependency for notational simplicity. To conclude the proof of the tightness, we can see that both $\mathcal{A}^\eps_0$ and $\mathcal{A}^\eps_1$ are uniformly integrable thanks to the following lemma.

\begin{lemma}\label{lem:bound_As}
We have
\[
\sup_{\eps,z\in[0,L]} \E[|\mathcal{A}^\eps_0(z)|^2] + \E[|\mathcal{A}^\eps_1(z)|^2] <\infty.
\]
\end{lemma}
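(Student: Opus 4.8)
The plan is to exploit the common algebraic structure of $\mathcal{A}^\eps_0$ and $\mathcal{A}^\eps_1$. In both, the scalar $V(z/\eps)=\int_S V(z/\eps,dp)$ does not depend on the integration variable $p$, so it factors out of the $p$-integral and leaves
\[
\int_S \frac{V(z/\eps,dp)}{\fg(p) + 2i\omega_j(-1)^{l+1}\lambda_\eps(\kappa_j)} = V(z/\eps,\varphi_{1,j,\eps}),
\]
with $\varphi_{1,j,\eps}$ the function introduced in \eqref{def:Vphi1} (the two signs $\pm$ being produced by $l=0,1$). All the prefactors $\bF_{1}, \bF_{2}, \bG_{1}, \bG_{2}, \bH$ are products of: values of $F_M$ and of its derivatives $\partial_\calX F_M$, $\partial_{\overline{\calX}} F_M$, which are bounded by construction of the truncation (since $\phi_M$ is compactly supported and smooth, so $F_M(\calX)=\calX\phi_M(\calX)$ is bounded by $2M$); derivatives of the fixed smooth bounded test function $f$, bounded by hypothesis; the fixed frequencies $\omega_j$; and factors $1/\lambda_\eps(\kappa_j)$. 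Since $\lambda_\eps(\kappa)$ is bounded above and below away from $0$ uniformly in $\eps$ and in $\kappa$ on the compact support of $\hat\Psi$ (recall \eqref{def:lambda_eps_k} and \eqref{eq:lambda_asymp}), and since all exponentials have modulus one, every prefactor is bounded by a constant uniform in $\eps$ and $z$, and the finitely many complex-conjugate terms are controlled identically.

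It then suffices to bound $\E[|V(z/\eps)\,V(z/\eps,\varphi_{1,j,\eps})|^2]$ for $\mathcal{A}^\eps_0$. As already established in the proof of Lemma \ref{bound2}, $\varphi_{1,j,\eps}$ belongs to $W_{k,C}$ for some $C>0$ uniform in $\eps$: this uses $\omega_j\neq 0$ (so that the denominator stays bounded away from $0$, giving a uniform $L^k$ bound) together with the estimate $\int_S |\partial_p \varphi_{1,j,\eps}|^k\,dp \leq \tilde C \int_S |p|^{k(2\beta-1)}\,dp <\infty$ for $k\in(1,1/(1-2\beta))$. Enlarging $C$ if needed so that the constant function also lies in $W_{k,C}$, we get $\max(|V(z/\eps)|,|V(z/\eps,\varphi_{1,j,\eps})|)\leq \sup_{\varphi\in W_{k,C}}|V(z/\eps,\varphi)|$, whence, the sum over the finitely many indices being harmless,
\[
|\mathcal{A}^\eps_0(z)|^2 \leq C' \Big(\sup_{\varphi\in W_{k,C}}|V(z/\eps,\varphi)|\Big)^4.
\]
Taking expectations and invoking the moment bound \eqref{eq:prop2V} with $n=4$ gives $\sup_{\eps,\,z\in[0,L]}\E[|\mathcal{A}^\eps_0(z)|^2]<\infty$.

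For $\mathcal{A}^\eps_1$ the same argument applies once the apparent singularity $1/\seps$ is resolved, and this is the only point I expect to require care. Here the scalar factored out is $\Theta(\seps V(z/\eps))/\seps$ rather than $V(z/\eps)$; since $\Theta$ is smooth and odd with $\Theta(0)=0$, the mean value theorem yields $|\Theta(x)|\leq \|\Theta'\|_{\infty}\,|x|$, so that
\[
\frac{|\Theta(\seps V(z/\eps))|}{\seps} \leq \|\Theta'\|_{\infty}\,|V(z/\eps)|,
\]
and the $1/\seps$ is absorbed. Consequently $|\mathcal{A}^\eps_1(z)|$ is bounded by the same quantity $C'(\sup_{\varphi\in W_{k,C}}|V(z/\eps,\varphi)|)^2$ as $\mathcal{A}^\eps_0$, and the same application of \eqref{eq:prop2V} with $n=4$ gives $\sup_{\eps,\,z\in[0,L]}\E[|\mathcal{A}^\eps_1(z)|^2]<\infty$. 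Adding the two bounds proves the lemma.
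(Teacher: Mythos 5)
Your proof is correct and follows essentially the same route as the paper: factor the scalar ($V(z/\eps)$, resp.\ the $\Theta$ factor) and the uniformly bounded prefactors out of the $p$-integral, recognize the remaining integral as $V(z/\eps,\varphi_{1,j,\eps})$ with $\varphi_{1,j,\eps}\in W_{k,C}$ exactly as in the proof of Lemma \ref{bound2}, and conclude with the uniform moment bound \eqref{eq:prop2V}. The one place you diverge is the treatment of $\Theta(\seps V)/\seps$ in $\mathcal{A}^\eps_1$: the paper Taylor-expands $\Theta$ via \eqref{eq:expensionV}, producing a main term plus an $\calO\big(\eps|V|^3\sum_j|V(\varphi_{1,j,\eps})|\big)$ remainder, and therefore needs \eqref{eq:prop2V} with $n=8$ as well as $n=4$; you instead use the global Lipschitz bound $|\Theta(x)|\leq \sup|\Theta'|\,|x|$ (legitimate, since the paper freely uses $\sup|\Theta'|<\infty$, e.g.\ in the proof of Lemma \ref{lem:Phi}), which absorbs the $1/\seps$ at once and requires only fourth moments. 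Your shortcut is slightly more economical for the lemma itself; the paper's expansion is not wasted, however, because the resulting identity \eqref{eq:A1} is reused in Section \ref{sec:identification} to replace $\calA^\eps_1$ by $\calA'^\eps_1$ up to an $\calO(\eps)$ error in $L^1$, which is asymptotic information your one-sided bound does not provide but which the lemma, as stated, does not require.
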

\begin{proof}[Proof of Lemma \ref{lem:bound_As}]
Let us treat only the term $\mathcal{A}^\eps_1$ involving a term $\Theta(\seps V)$. The treatment of $\mathcal{A}^\eps_0$ follows the same lines once we get ride of the function $\Theta$ in $\mathcal{A}^\eps_1$.
From \eqref{eq:expensionV}, we have
\begin{equation}\label{eq:A1}\begin{split}
\mathcal{A}^\eps_1(z) & = \sum_{j=1}^n\sum_{l=0,1} \theta'^2_0 \int \frac{ V(z/\eps)V(z/\eps,dp)}{\fg(p) + 2i\omega_j (-1)^{l+1}  \lambda_\eps(\kappa_j)}
\bH_{j,\eps}(z)e^{i\omega_j(-1)^{l+1} \tau_\eps(\kappa_j,z)/\eps}+ c.c.\\
&+\calO\Big(\eps |V(z/\eps)|^3 \sum_{j=1}^n |V(z/\eps,\varphi_{1,j,\eps})|\Big),
\end{split}
\end{equation}
where $\varphi_{1,j,\eps}$ is defined by \eqref{def:Vphi1}. Following the lines of the proof of Lemma \ref{bound2}, we obtain 
\begin{equation*}
|\mathcal{A}^\eps_1(z)|^2 \leq K \Big( \sup_{\varphi\in W_{k,C}} |V(z/\eps,\varphi)|^4 + \eps^2 \sup_{\varphi\in W_{k,C}} |V(z/\eps,\varphi)|^8\Big),
\end{equation*}
for some appropriate positive constants $K$ and $C$, with $k=\infty$ if $\beta \geq 1/2$, and $k\in (1,1/(1-2\beta))$ if $\beta<1/2$. This concludes the proof of Lemma \ref{lem:bound_As},
\end{proof}
and then the one of Lemma \ref{A1} owing \eqref{eq:exp_A}.
\end{proof}

\subsection{Identification of the limit}\label{sec:identification}

In this section we identify all the limit points of $(\calX_{\eps,M})_\eps$ through a martingale problem with infinitesimal generator
\begin{equation}\label{def:LM}\begin{split}
\calL_M f(\calX) = \sum_{j=1}^n\sum_{l=0,1} \frac{\theta'^2_0 \omega^2_j}{4c^2_0}  \int & \frac{r(p)}{\fg(p) + 2i(\omega_j/c_0) (-1)^{l+1}}\\
&\times \Big( F_M(\calX^{j,1-l})F_M(\calX^{j,l}) \, \partial^2_{\calX^{j,1-l}\calX^{j,l}} f(\calX)\\
& +  F_M(\calX^{j,1-l}_{\eps,M})\overline{F_M(\calX^{j,1-l})} \, \partial^2_{\overline{\calX^{j,l}}\calX^{j,l}} f(\calX ) \\
&+ F_M(\calX^{j,l}) \partial_{\calX} F_M(\calX^{j,l})\partial_{\calX^{j,l}} f(\calX)\Big)\\
& \hspace{-2cm} + c.c..
\end{split}
\end{equation}
Let us start with the following remark. In view of \eqref{eq:A1} together with \eqref{eq:prop2V}, we have
\[
\sup_{\eps,z\in [0,L]} \E[|\calA^\eps_1(z) - \calA'^\eps_1(z)|]=\calO(\eps), 
\]
with
\[
\calA'^\eps_1(z):=  \sum_{j=1}^n\sum_{l=0,1} \theta'^2_0 \int \frac{ V(z/\eps)V(z/\eps,dp)}{\fg(p) + 2i\omega_j (-1)^{l+1}  \lambda_\eps(\kappa_j)}
\bH_{j,l,\eps}(z)e^{i\omega_j(-1)^{l+1} \tau_\eps(\kappa_j,z)/\eps}+ c.c..
\]
As a result, remembering \eqref{eq:exp_A}, we have
\[
\sup_{z\in [0,L]} \E[|\calA^\eps(f_0^\eps + f_1^\eps)(z) - \calA^\eps_0(z) - \calA'^\eps_1(z)|] = \calO(\eps),
\]
so that to determine the infinitesimal generator of the limit points we only have to focus on $\calA^\eps_0 + \calA'^\eps_1$. For this term, we separate the terms which exhibit a fast phase from the others, since the former will average out and do not contribute at the limit $\eps\to 0$. We then write
\[
\calA^\eps_0(z) + \calA'^\eps_1(z) = \calB^\eps_0(z) + \calB^\eps_1(z),
\]
where
\begin{align*}
\calB^\eps_0(z) &:=\sum_{j=1}^n\sum_{l=0,1} \theta'^2_0  \int \frac{V (z/\eps) V(z/\eps,dp)}{\fg(p) + 2i\omega_j (-1)^{l+1}  \lambda_\eps(\kappa_j)} \\
&\hspace{3cm}\times \Big( \bF_{1,j,j,l,1-l,\eps}(z) + \bF_{2,j,j,l,l,\eps}(z) + \bG_{1,j,l,\eps}(z)\Big)\\
&+ c.c.,
\end{align*}
and
\begin{align*}
\calB^\eps_1(z) := \calA^\eps_1(z)+\sum_{j=1}^n & \sum_{l=0,1}  \theta'^2_0  \int  \frac{V (z/\eps) V(z/\eps,dp)}{\fg(p) + 2i\omega_j (-1)^{l+1}  \lambda_\eps(\kappa_j)}\\
&\times \Big( \sum_{\substack{j\neq j' \\ \text{or} \\ l=l'}} \bF_{1,j,j',l,l',\eps}(z) e^{i( \omega_j(-1)^{l+1} \tau_\eps(\kappa_j,z)+\omega_{j'}(-1)^{l'+1} \tau_\eps(\kappa_{j'},z))/\eps}\\
& +\sum_{\substack{j\neq j' \\ \text{or} \\ l=1-l'}}\bF_{2,j,j',l,l',\eps}(z) e^{i( \omega_j(-1)^{l+1} \tau_\eps(\kappa_j,z)-\omega_{j'}(-1)^{l'+1} \tau_\eps(\kappa_{j'},z))/\eps}\\
& \hspace{0.5cm} + \bG_{2,j,l,\eps}(z) e^{2i \omega_j(-1)^{l+1} \tau_\eps(\kappa_j,z)/\eps}\Big)+ c.c..
\end{align*}

\subsubsection{The term $\calB^\eps_1$}

Because of its rapid phases, this term does not contribute to the limit. To prove this, we start by introducing another test function to average out the stochastic terms involving $V$. For notational convenience, and without loss of generality, we only treat the term involving $\calA'^\eps_1$. The other terms are treated exactly the same way. 

Setting 
\begin{align*}
f^\eps_2(z) := \int_z^\infty ds\,\sum_{j=1}^n\sum_{l=0,1} \theta'^2_0 & \iint \frac{ \E_z^\eps[V(s/\eps,dp)V(s/\eps,dq)] - \E[V(0,dp)V(0,dq)] }{\fg(p) + 2i\omega_j (-1)^{l+1}  \lambda_\eps(\kappa_j)}\\
&\times \bH_{j,l,\eps}(z)e^{i\omega_j(-1)^{l+1} ( \tau_\eps(\kappa_j,z) + 2\lambda_\eps(\kappa_j)(s-z)) /\eps},
\end{align*}
we have
\[
\calA^\eps(f^\eps_2)(z)=-\calA'^\eps_1(z) +\calA''^\eps_1(z)+ \seps \, \calR^\eps_1(z),
\]
where
\[
\calA''^\eps_1(z) := \sum_{j=1}^n\sum_{l=0,1} \theta'^2_0 \int \frac{ r(p)}{\fg(p) + 2i\omega_j (-1)^{l+1}  \lambda_\eps(\kappa_j)}
\bH_{j,l,\eps}(z)e^{i\omega_j(-1)^{l+1} \tau_\eps(\kappa_j,z)/\eps} +c.c.,
\]
and the following lemma.
\begin{lemma}\label{bound:f2} We have
\[
\sup_{z\in[0,L]}\E[|f^\eps_2(t)|]=\calO(\eps)\qquad\text{and}\qquad \sup_{\eps,z\in[0,L]}\E[| \calR^\eps_1(z)|]<\infty.
\]
\end{lemma}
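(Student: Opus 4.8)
The plan is to handle the two assertions separately, in both cases reducing everything to explicit resolvent integrals against the Gaussian measure $V(z/\eps,\cdot)$ and then invoking the uniform moment bound \eqref{eq:prop2V}. Throughout I abbreviate $\Lambda_{j,l}:=2\omega_j(-1)^{l+1}\lambda_\eps(\kappa_j)$ and record the elementary lower bound $|\fg(p)+i\Lambda_{j,l}|\geq|\Lambda_{j,l}|\geq c>0$, uniform in $p\in S$ and in $\eps$ small, which holds because $\omega_j$ is bounded away from $0$ (the source support avoids $\omega=0$) and $\lambda_\eps(\kappa_j)\to 1/c_0>0$. This makes all the resolvent factors $1/(\fg(p)+i\Lambda_{j,l})$ bounded, and more importantly renders the coupled kernel below bounded as well.

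For the bound on $f^\eps_2$ I would first compute the centered conditional second moment by means of Lemma \ref{l:mar}. Writing $h=(s-z)/\eps$ and combining \eqref{eq:markovesp}--\eqref{eq:markovvar} with the stationary identity $\E[V(0,dp)V(0,dq)]=r(p)\delta(p-q)\,dp\,dq$, one obtains
\[
\E^\eps_z[V(s/\eps,dp)V(s/\eps,dq)]-\E[V(0,dp)V(0,dq)] = -e^{-2\fg(p)h}r(p)\delta(p-q)\,dp\,dq + e^{-(\fg(p)+\fg(q))h}V(z/\eps,dp)V(z/\eps,dq).
\]
This splits $f^\eps_2=f^\eps_{2,a}+f^\eps_{2,b}$ into a deterministic piece (the $r$-term, whose $\delta(p-q)$ collapses the double integral) and a random quadratic piece. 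In both, the substitution $s=z+\eps h$ turns $\int_z^\infty ds$ into $\eps\int_0^\infty dh$, and the exponential factors make each $h$-integral converge while producing an explicit prefactor $\eps$. The remaining $p$-integral in $f^\eps_{2,a}$ is bounded by $\|r\|_{L^1(S)}<\infty$ (this is where $\alpha<1/2$ enters) times the bounded factor $\bH_{j,l,\eps}$; the remaining object in $f^\eps_{2,b}$ is a double stochastic integral $\iint g(p,q)\,V(z/\eps,dp)V(z/\eps,dq)$ with bounded kernel $g(p,q)=[(\fg(p)+i\Lambda_{j,l})(\fg(p)+\fg(q)-i\Lambda_{j,l})]^{-1}$, whose $L^1$ norm is controlled by a second Wiener-chaos estimate, $\E[|\iint g\,V(dp)V(dq)|^2]\lesssim \big(\int|g(p,p)|r(p)\,dp\big)^2+\iint|g(p,q)|^2 r(p)r(q)\,dp\,dq\leq C\|r\|_{L^1(S)}^2$. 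Since $\bH_{j,l,\eps}$ is bounded (bounded $F_M$, bounded derivatives of $f$, $\lambda_\eps$ bounded below), both contributions are $\calO(\eps)$ in $L^1$ uniformly in $z$, which gives the first assertion.

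For the second assertion I would apply the pseudogenerator to $f^\eps_2$, exploiting its tailored structure. First, the $z$-dependence of the oscillatory factor collapses: since $\tau_\eps(\kappa_j,z)+2\lambda_\eps(\kappa_j)(s-z)=2\lambda_\eps(\kappa_j)s+\phi_\eps(\kappa_j,z)$, the phase depends on $z$ only through the slow term $\phi_\eps(\kappa_j,z)$ of \eqref{def:phi}. Differentiating the lower limit reproduces the integrand at $s=z$, where the conditional moment difference degenerates to $V(z/\eps,dp)V(z/\eps,dq)-r(p)\delta(p-q)\,dp\,dq$; after the $q$-integration this is exactly $\calA'^\eps_1(z)-\calA''^\eps_1(z)$, so the lower-limit term equals $-\calA'^\eps_1+\calA''^\eps_1$, while the conditioning $\E^\eps_z[\cdot]$ contributes nothing to the $p$-limit by the tower property. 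The remainder $\seps\,\calR^\eps_1$ then collects precisely $\calA^\eps$ acting on the explicit $z$-dependence inside the integral, namely the derivative of $F_M(\calX^{j,l}_{\eps,M}(z))$ in $\bH_{j,l,\eps}$, which by \eqref{eq:calX_M2} brings a singular factor $\tfrac{\theta'_0}{\seps}V(z/\eps)$, and the derivative $\partial_z\phi_\eps(\kappa_j,z)=\nu_\eps(z/\eps^2)/(2\lambda_\eps(\kappa_j)c_0^2)$ of the slow phase. The key cancellation of scales is that each such term still sits under $\int_z^\infty ds$, which after $s=z+\eps h$ again yields a factor $\eps$ compensating the singular $1/\seps$, leaving exactly the advertised prefactor $\seps$. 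The estimate of $\E[|\calR^\eps_1(z)|]$ then reduces once more to the resolvent bounds above together with \eqref{eq:prop2V}, used to control products of at most three factors of $V$ (one from $\partial_z F_M$ or $\partial_z\phi_\eps$, two from the conditional second moment) uniformly in $\eps$ and $z$ via Cauchy--Schwarz.

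I expect the main obstacle to be the bookkeeping in $\calA^\eps f^\eps_2$: verifying the exact identity $\calA^\eps f^\eps_2=-\calA'^\eps_1+\calA''^\eps_1+\seps\calR^\eps_1$ requires tracking every $z$-derivative, confirming that the martingale/conditioning contribution vanishes in the $p$-limit, and checking that the singular $1/\seps$ from \eqref{eq:calX_M2} is matched term by term against the $\eps$ gained from the $ds$-integration. A second, more analytic difficulty is the $L^1$ control of the random double integral $\iint g\,V(dp)V(dq)$: the naive route of Laplace-transforming the denominator and applying Cauchy--Schwarz would require $\int r(p)/\fg(p)\,dp<\infty$, i.e. $\gamma>1$, which fails for long-range correlations; one must instead use the boundedness of the kernel $g$ directly in the second-chaos $L^2$ estimate, which needs only $r\in L^1(S)$.
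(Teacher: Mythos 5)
Your proposal is correct, and its skeleton coincides with the paper's proof: the substitution $s\to z+\eps s$, the conditional moment identities of Lemma \ref{l:mar}, the resulting explicit prefactor $\eps$, the splitting into a deterministic $r(p)\delta(p-q)$ piece and a random quadratic piece, and the fact that all resolvent factors are bounded because the $\omega_j$ stay away from $0$. Where you genuinely diverge is in the control of the random double integral $\iint g(p,q)\,V(z/\eps,dp)V(z/\eps,dq)$: the paper does not compute a chaos moment here, but instead writes this object as a nested evaluation $V(z/\eps,\varphi_{2,j,z/\eps,\eps})$ with $\varphi_{2,j,s,\eps}(p)=V(s,\varphi_{3,j,p,\eps})$, shows that the random $W^{1,k}$-norm of $\varphi_{2,j,s,\eps}$ is dominated by $\sup_{\varphi\in W_{k,C}}|V(s,\varphi)|$, renormalizes so that the test function lies in $W_{k,1}$, and concludes $|V(z/\eps,\varphi_{2,j,z/\eps,\eps})|\leq \sup_{\varphi\in W_{k,C_1}}|V(z/\eps,\varphi)|^2$, which \eqref{eq:prop2V} with $n=2$ bounds uniformly. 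Your direct second-chaos (Isserlis) estimate, $\E\big[|\iint g\,V\,V|^2\big]\lesssim \big(\int|g(p,p)|r(p)\,dp\big)^2+\iint|g(p,q)|^2r(p)r(q)\,dp\,dq$, is a valid and more elementary substitute, and you correctly identify the trap it must avoid: any bound that discards the oscillatory part of the denominator produces $\int r(p)/\fg(p)\,dp=\infty$ for $\gamma\leq 1$ (the paper makes the same remark when treating $\calB^\eps_0$). What the paper's route buys is uniformity of machinery: the $\sup_\varphi|V|^n$ bounds of Lemma \ref{l:bound} hold for all $n$ and compose multiplicatively, which is exactly what is needed for $\calR^\eps_1$, where an extra singular factor $V(z/\eps)/\seps$ (from $\partial_z \bH_{j,l,\eps}$ via \eqref{eq:calX_M2} and from $\partial_z\phi_\eps$) multiplies the double integral; your Cauchy--Schwarz handling of that triple product works too, at the cost of a separate moment computation. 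Finally, note that for $\calR^\eps_1$ the paper gives no details at all ("treated in a similar way"), so your bookkeeping — the lower limit of $\int_z^\infty$ producing $-\calA'^\eps_1+\calA''^\eps_1$, the tower property annihilating the contribution of the conditioning, and the $\eps$ gained from the $ds$-integration offsetting the $1/\seps$ singular factors to leave precisely the prefactor $\seps$ — is more explicit than the published argument and is the correct accounting.
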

\begin{proof}[Proof of Lemma \ref{bound:f2}]
Making the change of variable $s\to z + \eps s$, together with \eqref{eq:markovvar}, and integrating in $s$, yield
\begin{align*}
f_2^\eps(z) = \eps \sum_{j=1}^n&\sum_{l=0,1} \theta'^2_0 \bH_{j,l,\eps}(z)e^{i\omega_j(-1)^{l+1}  \tau_\eps(\kappa_j,z)  /\eps} \\
& \times \iint \frac{ V(z/\eps,dp)V(z/\eps,dq) - r(p)\delta(p-q)dpdq }{(\fg(p) + 2i\omega_j (-1)^{l+1}  \lambda_\eps(\kappa_j))(\fg(p)+\fg(q) + 2i\omega_j (-1)^{l+1}  \lambda_\eps(\kappa_j))}\\
&=: f_{21}^\eps(z) + f_{22}^\eps(z),
\end{align*}
where $f_{21}^\eps$ corresponds to the term with $V(z/\eps, dp)V(z/\eps,dq)$ in $f_{2}^\eps$, and $f_{22}^\eps$ the one with $r(p)$. It is direct to see that
\[
\sup_{z\in[0,L]}\E[|f_{22}^\eps(z)|] = \calO(\eps),
\]
and we only need to focus on $f_{21}^\eps$.
 
Now, let us denote
\[
V(s,\varphi_{2,j,s,\eps}) = \int_S\int_S \frac{V(s,dp)V(s,dq)}{\big(\fg(p) \pm 2i\omega_j \lambda_\eps(\kappa_j)\big)\big(\fg(p)+\fg(q) \pm 2i\omega_j \lambda_\eps(\kappa_j)\big)},
\]
where
\[
\varphi_{2,j,s,\eps}(p):= V(s,\varphi_{3,j,p,\eps}),
\]
and
\[
\varphi_{3,j,p,\eps}(q):=\frac{1}{\big(\fg(p) \pm 2i\omega_j \lambda_\eps(\kappa_j)\big)\big(\fg(p)+\fg(q) \pm 2i\omega_j \lambda_\eps(\kappa_j)\big)}.
\]
In the same way as for $\varphi_{1,j,\eps}$ in the proof of Lemma \ref{bound2}, we can remark that for any $p$ the function $\varphi_{3,j,p,\eps}$ belongs to $W_{k,C'}$ for some constant $C'>0$ independent to $p$, and so that
\[
\sup_{p\in S} |\varphi_{2,j,s,\eps}(p)| \leq \sup_{\varphi\in  W_{k,C'}} |V(s,\varphi)|.
\]
Also, we have
\[
\partial_p \varphi_{2,j,s,\eps}(p) = V(s,\tilde\varphi_{3,j,p,\eps}),
\]
with $\tilde\varphi_{3,j,p,\eps}$ belonging to $W_{k,C''}$, where $C''>0$ does not depend on $p$, and then
\[
\sup_{p\in S} |\partial_p \varphi_{2,j,s,\eps}(p)| \leq \sup_{\varphi\in W_{k,C''}} |V(s,\varphi)|.
\]
Considering $C''' = \max(C',C'')$, we have
\[
\|\varphi_{2,j,s,\eps}\|_{W^{1,k}(S)}\leq \sup_{\varphi\in W_{k,C'''}} |V(s,\varphi)|,
\]
and setting 
\[
\tilde \varphi_{2,j,s,\eps}:= \frac{\varphi_{2,j,s,\eps}}{\sup_{\varphi\in W_{k,C'''}} |V(s,\varphi)|},
\]
we have $\|\tilde \varphi_{2,j,s,\eps}\|_{W^{1,k}(S)}\leq 1$, and then $\tilde \varphi_{2,j,s,\eps}\in W_{k,1}$. As a result, we obtain
\[
|V(z/\eps, \varphi_{2,j,z/\eps,\eps})| = \sup_{\varphi\in W_{k,C'''}} |V(z/\eps,\varphi)| \, |V(z/\eps, \tilde \varphi_{2,j,z/\eps,\eps})| \leq \sup_{\varphi\in W_{k,C_1}} |V(z/\eps,\varphi)|^2,
\]
with $C_1=\max(1,C''')$, so that
\[
\sup_{z\in[0,L]}\E[|f_{21}^\eps(z)|] = \calO(\eps),
\]
according to \eqref{eq:prop2V}. 

The term $\calR^\eps_1(z)$ is treated in a similar way and we omit the precise details. This concludes the proof of Lemma \ref{bound:f2}.
\end{proof} 

Now, to deal with the rapid phases of $\calA''^\eps_1$, we introduce the following test function
\begin{align*}
f^\eps_3(z) := \int_z^\infty ds\,\sum_{j=1}^n\sum_{l=0,1} \theta'^2_0 & \iint \frac{ r(p) }{\fg(p) + 2i\omega_j (-1)^{l+1}  \lambda_\eps(\kappa_j)}\\
&\times \bH_{j,l,\eps}(z)e^{i\omega_j(-1)^{l+1} ( \tau_\eps(\kappa_j,z) + 2\lambda_\eps(\kappa_j)(s-z)) /\eps} e^{-\seps (s-z)},
\end{align*}
so that making again the change of variable $s\to z+\eps s$ and integrating in $s$, we have
\[
f^\eps_3(z) := \eps\sum_{j=1}^n\sum_{l=0,1} \theta'^2_0  \iint \frac{ r(p)\bH_{j,l,\eps}(z)e^{i\omega_j(-1)^{l+1}  \tau_\eps(\kappa_j,z)/\eps} }{(\eps^{3/2} - 2i\omega_j (-1)^{l+1}  \lambda_\eps(\kappa_j))(\fg(p) + 2i\omega_j (-1)^{l+1}  \lambda_\eps(\kappa_j))},
\]
satisfying
\[
\sup_{z\in[0,L]}\E[|f_3^\eps(z)|] = \calO(\eps).
\]
Now, differentiating in $z$, we obtain
\[
\calA^\eps(f_3^\eps)(z) = - \calA''^\eps_0(z) +  \calR^\eps_2(z),
\]
with
\begin{align*}
\calR^\eps_2(z) := \int_z^\infty ds\,\sum_{j=1}^n & \sum_{l=0,1} \theta'^2_0  \iint \frac{ r(p) }{\fg(p) + 2i\omega_j (-1)^{l+1}  \lambda_\eps(\kappa_j)}\\
&\times e^{i\omega_j(-1)^{l+1} ( \tau_\eps(\kappa_j,z) + 2\lambda_\eps(\kappa_j)(s-z)) /\eps} e^{-\seps (s-z)}\\
&\times \Big(  \frac{d}{dz}\bH_{j,l,\eps}(z) + \frac{i\omega_j(-1)^{l+1}}{2\eps\lambda_\eps(\kappa_j)c_0^2}\bH_{j,l,\eps}(z)\Theta(\seps V(z/\eps)) + \seps \Big),
\end{align*}
and for which
\[
|\calR^\eps_2(z)| \leq \eps \, K \Big( \frac{|V(z/\eps)|}{\seps} + |V(z/\eps)|^2  + \seps \Big) \frac{1}{|\eps^{3/2} - 2i\omega_j(-1)^{l+1}\lambda_\eps(\kappa_j)|},
\]
after the change of variable $s\to z+\eps s$, and integrating in $s$. As a result, thanks to \eqref{eq:prop2V}, we have
\[
\sup_{z\in[0,L]}\E[|\calR^\eps_2(z)|]=\calO(\seps).
\]
To sum up, we obtain
\[
\sup_{z\in [0,L]} \E[|\calA^\eps(f_0^\eps + f_1^\eps + f_2^\eps + f_3^\eps)(z) - \calB^\eps_0(z)|] = \calO(\seps),
\]
and it only remains to determine the asymptotic of $\calB^\eps_0$.

\subsubsection{The term $\calB^\eps_0$}

This term needs a careful treatment to average the random process. In fact, for this term we cannot proceed as for $\calB^\eps_1$ with a test function like $f^\eps_2$ since there is no remaining phase. This would provide a term of the form $\int dp\,r(p)/\fg(p)=\infty$ for long-range correlations. The asymptotic of $\calB^\eps_0$ is given by the following result.

\begin{proposition}\label{prop:limB0}
We have
\[
\lim_{\eps \to 0}\sup_{z\in [0,L]}\E\Big[\Big| \int_0^z \calB^\eps_0(s)-\calL_M f(\calX_{\eps,M}(s))  ds\Big|\Big] = 0,
\] 
where $\calL_M$ is defined by \eqref{def:LM}.
\end{proposition}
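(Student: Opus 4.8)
The plan is to isolate inside $\calB^\eps_0$ the deterministic ``mean'' part that produces $\calL_M f$ and to show the remaining fluctuating part integrates to zero. Writing $V(z/\eps)=\int_S V(z/\eps,dq)$, I would split the quadratic coefficient as
\[
V(z/\eps)\,V(z/\eps,dp)=r(p)\,dp+\Big(\int_S V(z/\eps,dq)\,V(z/\eps,dp)-r(p)\,dp\Big),
\]
where the first term is exactly $\E[V(z/\eps)V(z/\eps,dp)]$ and the second is a mean-zero (Wick) fluctuation. Substituting the $r(p)\,dp$ term into $\calB^\eps_0$ and using $\lambda_\eps(\kappa_j)\to 1/c_0$ (so that $\omega_j^2/(4\lambda_\eps^2c_0^4)\to\omega_j^2/(4c_0^2)$ and $\fg(p)+2i\omega_j(-1)^{l+1}\lambda_\eps(\kappa_j)\to\fg(p)+2i(\omega_j/c_0)(-1)^{l+1}$) matches, term by term through the contributions $\bF_{1,j,j,l,1-l,\eps}$, $\bF_{2,j,j,l,l,\eps}$ and $\bG_{1,j,l,\eps}$, the three pieces defining $\calL_M$ in \eqref{def:LM}; the limiting integrals $\int_S r(p)/(\fg(p)+2i(\omega_j/c_0)(-1)^{l+1})\,dp$ are finite because the denominators have nonzero imaginary part, and the passage to the limit is by dominated convergence since $\int_S r(p)\,dp<\infty$ ($\alpha<1/2$). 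Thus the time integral of this part converges in $L^1$, uniformly in $z\in[0,L]$, to $\int_0^z\calL_M f(\calX_{\eps,M})\,ds$, the coefficient error being controlled by the a.s. bound $\|\calX_{\eps,M}\|\le 2M$.

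It remains to kill the fluctuation $\calB^{\eps,\mathrm{fl}}_0(z):=\sum_{j,l}\theta'^2_0\int_S[V(z/\eps)V(z/\eps,dp)-r(p)dp]/(\fg(p)+2i\omega_j(-1)^{l+1}\lambda_\eps(\kappa_j))\,g_{j,l,\eps}(z)+c.c.$, where $g_{j,l,\eps}$ collects the bounded adapted factors $\bF_{1,j,j,l,1-l,\eps}+\bF_{2,j,j,l,l,\eps}+\bG_{1,j,l,\eps}$. As the authors observe, the naive corrector $\int_z^\infty\E^\eps_z[\calB^{\eps,\mathrm{fl}}_0(s)]\,ds$ is \emph{ill-defined}: carrying out the $s$-integral with \eqref{eq:markovvar} produces a deterministic piece proportional to $\int_S r(p)/(2\fg(p))\,\cdots\,dp$, which diverges at $p=0$ in the long-range regime since $r(p)/\fg(p)\sim a(0)/(\mu|p|^{2\alpha+2\beta})$ is non-integrable for $\gamma\le1$. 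My remedy is a damped corrector
\[
f^\eps_4(z):=\sum_{j,l}\theta'^2_0\,g_{j,l,\eps}(z)\int_z^\infty\E^\eps_z\!\Big[\int_S\frac{V(s/\eps)V(s/\eps,dp)-r(p)dp}{\fg(p)+2i\omega_j(-1)^{l+1}\lambda_\eps(\kappa_j)}\Big]e^{-(s-z)/\seps}\,ds+c.c.,
\]
where the weight $e^{-(s-z)/\seps}$ regularizes the small-$p$ singularity. Using \eqref{eq:markovesp}--\eqref{eq:markovvar} and the change of variables $s\to z+\eps u$, one finds that $f^\eps_4$ equals $\eps$ times a difference of two integrals carrying $\fg(p)+\fg(q)+\seps$ and $2\fg(p)+\seps$ in the denominators; each is finite for $\eps>0$ and the two cancel in expectation, so $\E[f^\eps_4]=0$.

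The remaining work is a set of uniform estimates, all built on Lemma \ref{l:bound} and the representation of $f^\eps_4$ as $V$ tested against $W^{1,k}(S)$-functions, exactly as in the proofs of Lemmas \ref{bound2} and \ref{bound:f2}. A Gaussian (Wick) computation gives $\E|f^\eps_4(z)|^2\lesssim\eps^2\int_S\int_S r(p)r(q)/(\fg(p)+\fg(q)+\seps)^2\,dp\,dq$, and the near-origin scaling $|p|,|q|\sim\seps^{1/(2\beta)}$ yields $\E|f^\eps_4(z)|^2\lesssim\eps^{1+\gamma}$, whence $\sup_z\E|f^\eps_4(z)|=O(\eps^{(1+\gamma)/2})\to0$ for every $\gamma>0$. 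Differentiating $f^\eps_4$ through the pseudogenerator produces $\calA^\eps f^\eps_4(z)=-\calB^{\eps,\mathrm{fl}}_0(z)+\seps^{-1}f^\eps_4(z)+\calR^\eps(z)$, in which the regularization error $\seps^{-1}f^\eps_4$ is $O(\eps^{\gamma/2})$ in $L^1$ (precisely because $\E|f^\eps_4|=o(\seps)$), while $\calR^\eps$, arising from $\calA^\eps$ acting on the slowly varying factor $g_{j,l,\eps}(\calX_{\eps,M})$, is $O(\seps)$ by the same Sobolev bounds together with \eqref{eq:prop2V}. Inserting this into the martingale identity of Proposition \ref{martingale} and using the boundary estimates on $f^\eps_4$ gives $\sup_z\E|\int_0^z\calB^{\eps,\mathrm{fl}}_0(s)\,ds|\to0$, which combined with the mean-part convergence is the claim. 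The main obstacle is exactly the design and control of $f^\eps_4$: the corrector must be simultaneously negligible and have a vanishing regularization error, which forces the specific damping scale $\seps$ and relies on the borderline decay $|t_1-t_2|^{-2\gamma}$ of the correlations of the quadratic functional, ensuring that the $\eps$-scaling of the time integral always beats the non-integrability near $p=0$ across the whole range $\gamma>0$.
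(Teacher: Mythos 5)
Your identification of the mean part is the same as the paper's (pass to the $\eps$-dependent generator $\calL_{\eps,M}$, then to $\calL_M$ via \eqref{eq:lambda_asymp} and dominated convergence), but your treatment of the fluctuating part is genuinely different from the paper's. The paper explicitly abandons the corrector route for $\calB^\eps_0$ (an undamped corrector produces the divergent quantity $\int r(p)/\fg(p)\,dp$ in the long-range case) and instead freezes the coefficients $\bF_{j,l,\eps}$ on a grid of step $\seps\eta$, applies Cauchy--Schwarz on each cell, and bounds the resulting Gaussian second moment directly, splitting the $p$-integration into $|p|\leq \eta'$ and $|p|>\eta'$. Your damped corrector $f^\eps_4$ with rate $1/\seps$ is a sensible alternative: it does regularize the small-$p$ singularity, and your estimates $\E|f^\eps_4|=O(\eps^{(1+\gamma)/2})$ and $\seps^{-1}\E|f^\eps_4|=O(\eps^{\gamma/2})$ are correct for $\gamma\in(0,1)$ (the claim $\calR^\eps=O(\seps)$ should be $O(\eps^{\gamma/2})$, since $\calA^\eps$ acting on $g_{j,l,\eps}$ costs $\seps^{-1}$ against $\E|f^\eps_4|$; this still vanishes).

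The genuine gap is the final step. The martingale identity of Proposition \ref{martingale} gives
\[
\int_0^z\calB^{\eps,\mathrm{fl}}_0(s)\,ds \;=\; f^\eps_4(0)-f^\eps_4(z)+M^\eps_{f^\eps_4}(z)+\int_0^z\Big(\seps^{-1}f^\eps_4(s)+\calR^\eps(s)\Big)\,ds,
\]
and your bounds control every term on the right \emph{except} the martingale $M^\eps_{f^\eps_4}(z)$. The martingale property only yields $\E[M^\eps_{f^\eps_4}(z)]=0$; it does not make $\E[|M^\eps_{f^\eps_4}(z)|]$ small (a standard Brownian motion is a mean-zero martingale with $\E|W(z)|\propto\sqrt{z}$). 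Bounding $\E|M^\eps_{f^\eps_4}(z)|$ from its definition is circular: it requires exactly the bound on $\E|\int_0^z\calB^{\eps,\mathrm{fl}}_0|$ that you are trying to prove, since $\calB^{\eps,\mathrm{fl}}_0$ itself is only $O(1)$ in $L^1$. To close the argument you would need a quadratic-variation (carr\'e du champ) estimate, e.g.\ applying the pseudogenerator to $|f^\eps_4|^2$ and showing that $\E[|M^\eps_{f^\eps_4}(z)|^2]$, which is controlled by $\E\int_0^z\big(\calA^\eps|f^\eps_4|^2-2\,\mathrm{Re}\big(\overline{f^\eps_4}\,\calA^\eps f^\eps_4\big)\big)\,du$, vanishes; a scaling computation (the relevant integral behaves like $\eps\iint r(p)|p|^{2\beta}r(q)\,(\fg(p)+\fg(q)+\seps)^{-2}\,dp\,dq$) indicates this is true, but that computation carries essentially the same analytical weight as the paper's direct Gaussian estimate and is entirely absent from your proposal. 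Alternatively, one could estimate $\E|\int_0^z\calB^{\eps,\mathrm{fl}}_0|^2$ directly by Wick's theorem, but then the random coefficients $g_{j,l,\eps}(s)$ inside the time integral are correlated with $V$, which is precisely the difficulty the paper resolves by freezing them on the grid; your proposal does not address this coupling either. As written, the proposition is therefore not proved.
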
 

\begin{proof}[Proof of Proposition \ref{prop:limB0}]
Here, we only have to prove
\begin{equation}\label{eq:toprove}
\lim_{\eps \to 0}\sup_{z\in [0,L]}\E\Big[\Big| \int_0^z \calB^\eps_0(s)-\calL_{\eps,M} f(\calX_{\eps,M}(s))  ds\Big|\Big] = 0,
\end{equation}
with
\begin{align*}
\calL_{\eps,M} f(\calX) & :=   \sum_{j=1}^n\sum_{l=0,1} \theta'^2_0  \int dp\, \frac{r(p)}{\fg(p) + 2i\omega_j (-1)^{l+1}  \lambda_\eps(\kappa_j)}\\
&\times \Big( \bF_{1,j,j,l,1-l,\eps}(z) + \bF_{2,j,j,l,l,\eps}(z) + \bG_{1,j,l,\eps}(z)\Big),
\end{align*}
since passing from $\calL_{\eps,M}$ to $\calL_{M}=\calL_{\eps=0,M}$ being straightforward owing \eqref{eq:lambda_asymp}. Also, to simplify the notations in this proof, let us denote 
\begin{equation}\label{def:F}
\bF_{j,l,\eps}(z):= \frac{\bF_{1,j,j,l,1-l,\eps}(z)+\bF_{2,j,j,l,l,\eps}(z)+\bG_{1,j,l,\eps}(z)\big)}{\fg(p) + 2i\omega_j (-1)^{l+1}  \lambda_\eps(\kappa_j)},
\end{equation}
so that
\begin{align*}
\int_0^z \calB^\eps_0(s)-\calL_{\eps,M} f(\calX_{\eps,M}(s))  ds = \sum_{j=1}^n & \sum_{l=0,1} \theta'^2_0  \int_0^z ds\, \bF_{j,l,\eps}(s) \\
&\times \int V(s/\eps,dp)V(s/\eps,dq)-r(p)\delta(p-q)dpdq .
\end{align*}
To prove \eqref{eq:toprove}, we decompose the interval $[0,z]$ over a uniform grid with a small stepsize of order $\eps$ in order to get ride of the $s$-dependence of $\bF_{j,l,\eps}$, and then average out the $V$'s. Let $\eta>0$ be an arbitrary small parameter and write
\begin{align*}
\int_0^z \calB^\eps_0(s)-\calL_{\eps,M} f(\calX_{\eps,M}(s))  ds & = \sum_{m=0}^{[z/(\seps \eta)]-1} \int_{m\seps \eta}^{(m+1)\seps \eta} \calB^\eps_0(s)-\calL_{\eps,M} f(\calX_{\eps,M}(s))  ds \\
& + \int_{[z/(\seps \eta)]}^{z} \calB^\eps_0(s)-\calL_{\eps,M} f(\calX_{\eps,M}(s))  ds\\
&=: \calR^\eps_3(z) + \calR^\eps_4(z).
\end{align*}
For $\calR^\eps_4$, we have
\[
\E[|\calR^\eps_4(z)|]\leq (z-[z/(\seps \eta)] \seps \eta ) \Big(\sup_{s\in[0,L]}\E[|\calB^\eps_0(s)|]  + K \Big) \leq K' \seps,
\]
where  $K$ is some constant that bounds uniformly $\calL_{\eps,M} f(\calX_{\eps,M}(s))$. The last inequality is obtained by bounding $\sup_{\eps, s\in[0,L]}\E[|\calB^\eps_0(s)|]$ as a subcase of Lemma \ref{lem:bound_As}. 

Regarding $\calR^\eps_3$, we have
\begin{align*}
\calR^\eps_3(z)&=\sum_{j=1}^n\sum_{l=0,1} \theta'^2_0  \sum_{m=0}^{[z/(\seps \eta)]-1} \bF_{j,l,\eps}(mq\eta) \\
&\hspace{2cm}\times  \int_{m\seps \eta}^{(m+1)\seps \eta}ds\,\int \big( V(s/\eps,dp)V(s/\eps,dq)-r(p)\delta(p-q)dpdq \big)\\
& + \sum_{j=1}^n\sum_{l=0,1} \theta'^2_0  \sum_{m=0}^{[z/(\seps \eta)]-1} (\bF_{j,l,\eps}(s) - \bF_{j,l,\eps}(m \seps\eta) )\\
&\hspace{2cm}\times \int_{m\seps \eta}^{(m+1)\seps \eta}ds\,\int \big( V(s/\eps,dp)V(s/\eps,dq)-r(p)\delta(p-q)dpdq \big) \\
&=:\calR^\eps_{31}(z)+\calR^\eps_{32}(z).
\end{align*}
For $\calR^\eps_{32}(z)$, we have from \eqref{eq:calX_M2}
\[
\E[|\calR^\eps_{42}(z)|]\leq \frac{K}{\seps} \sum_{m=0}^{[z/(\seps \eta)]-1}  \int_{m\seps \eta}^{(m+1)\seps \eta}ds\,\int_{m\seps\eta}^s ds' \leq \eta\, K',
\] 
where the constant $K>0$ is obtained through similar arguments as to obtain the first part of Lemma \ref{bound:f2} using the denominator of \eqref{def:F}. Finally for $\calR^\eps_{31}(z)$, using the Cauchy-Schwarz inequality  w.r.t. the expectation, we have
\begin{align*}
\E [| \calR^\eps_{31}(z)| ] &\leq K  \sum_{m=0}^{[z/(\seps \eta)]-1}  \sqrt{I^\eps_{m,\eta}},
\end{align*}
for some constant $K>0$, with
\[
I^\eps_{m,\eta} := \E\Big[\Big|\int_{m\seps \eta}^{(m+1)\seps \eta} ds  \iint  V(s/\eps,dp) V(s/\eps,dq)-r(p)\delta(p-q)\,dp\,dq \Big|^2\Big].
\]
Therefore, using the Gaussianity of $V$, we have
\begin{align*}
I^\eps_{m,\eta} =\int_{m\seps \eta}^{(m+1)\seps \eta} & ds_1  \int_{m\seps \eta}^{(m+1)\seps \eta} ds_2 \\
& \times \iiiint \Big( \E\Big[V\Big(\frac{s_1}{\eps},dp_1\Big) V\Big(\frac{s_2}{\eps},dq_1\Big)\Big]\E\Big[V\Big(\frac{s_1}{\eps},dp_2\Big) V\Big(\frac{s_2}{\eps},dq_2\Big)\Big]\\
&+\E\Big[V\Big(\frac{s_1}{\eps},dp_1\Big) V\Big(\frac{s_2}{\eps},dq_2\Big)\Big]\E\Big[V\Big(\frac{s_1}{\eps},dp_2\Big), V\Big(\frac{s_2}{\eps},dq_1\Big)\Big]\Big)
\end{align*}
and by symmetry w.r.t. $s_1$ and $s_2$ we obtain
\[
I^\eps_{m,\eta} \leq \iint dp\,dq \,r(p)\, r(q)\int_{m\seps \eta}^{(m+1)\seps \eta} ds_1 \int_{m\seps \eta}^{s_1} ds_2 \, e^{-(\fg(p)+\fg(q))(s_1-s_2)/\eps}.
\]
Now, to bound $I^\eps_{m,\eta}$ properly, we consider two cases. Integrating without any caution would provide a term of the form $r(p)\, r(q)/(\fg(p)+\fg(q))$ which is not integrable at $0$ in case of long-range correlations. 

Let $\eta'>0$ be an arbitrary small parameter. If $|p|\leq \eta'$, we can bound the exponential term by $1$ and obtain a term of the form
\[
\eps\, \eta^2 \int_{\{|p|\leq \eta'\}} r(p) dp,
\]   
upto a constant independent of $m$, $\eps$, $\eta$ or $\eta'$, and where the term $\eps \eta^2$ is compensated by the one of the above sum in $m$. However, this term goes to $0$ as $\eta'\to 0$ thanks to the integrability of $r$ and the dominated convergence theorem. Now for $|p| > \eta'$, that is we place ourselves away from $0$, the point that can produce nonintegrability, we have
\begin{align*}
\int_{m\seps \eta}^{(m+1)\seps \eta} ds_1 \int_{m\seps \eta}^{s_1} ds_2 \, e^{-(\fg(p)+\fg(q))(s_1-s_2)/\eps} & \leq \frac{\eps}{\fg(p)}\int_0^{\seps \eta} (1-e^{-\fg(p)s_1/\eps })ds_1 \\
& \leq K_{\eta'} \eps^{3/2}\eta.
\end{align*} 
As a result, one can finally write
\[
\E [| \calR^\eps_{31}(z)| ] \leq K' \Big( \Big(\int_{\{|p|\leq \eta'\}} r(p) dp\Big)^{1/2} + K_{\eta'} \eps^{1/4}\Big),
\]
which concludes the proof of Proposition \ref{prop:limB0}.
\end{proof}

All the limit points are then solution to a martingale problem associated to the infinitesimal generator $\calL_M $ given by \eqref{def:LM}. From that definition it is direct to see that for $||\calX||\leq M$, with $\calX \in \mathbb{C}^{2n}$, we have
\[
\calL_M f(\calX) = \calL f(\calX),
\]
where
\begin{align*}
\calL f(\calX) := \calL f_{M=\infty}(\calX) = \sum_{j=1}^n\sum_{l=0,1} \frac{\theta'^2_0 \omega^2_j}{4c^2_0}  \int & \frac{r(p)}{\fg(p) + 2i(\omega_j/c_0) (-1)^{l+1}}\\
&\times \big( \calX^{j,1-l}\calX^{j,l} \, \partial^2_{\calX^{j,1-l}\calX^{j,l}} f(\calX)\\
& +  \calX^{j,1-l}_{\eps,M}\overline{\calX^{j,1-l}} \, \partial^2_{\overline{\calX^{j,l}}\calX^{j,l}} f(\calX ) \\
&+ \calX^{j,l} \partial_{\calX} \calX^{j,l}\partial_{\calX^{j,l}} f(\calX)\big)\\
& \hspace{-2cm} + c.c..
\end{align*}
Also, one can see that there is no coupling between two components $(\calX^{j,0}, \calX^{j',1})$ and $(\calX^{j,0}, \calX^{j',1})$ for $j\neq j'$, meaning that the components of the the limiting processes are independent. Therefore, we only have to look at the well-posedness of the (untruncated) martingale problem associated to one coordinate, that is for one frequency. The corresponding infinitesimal generator then writes
\begin{align*}
\bL f(\calX) & = \sum_{l=0,1} \frac{\theta'^2_0 \omega^2}{4c^2_0}  \int  \frac{r(p)}{\fg(p) + 2i(\omega/c_0) (-1)^{l+1}}\\
& \hspace{2cm}\times \Big( \calX^{1-l}\calX^{l} \, \partial^2_{\calX^{1-l}\calX^{l}} f(\calX) +  \calX^{1-l}\overline{\calX^{1-l}} \, \partial^2_{\overline{\calX^{l}}\calX^{l}} f(\calX ) + \calX^{l} \partial_{\calX^l} f(\calX)\Big)\\
& + c.c.\\
& =  \frac{\theta'^2_0 \omega^2 \Gamma_c(\omega)}{4 c^2_0}\Big(\calX^{1}\calX^{0} \, \partial^2_{\calX^{1}\calX^{0}} f(\calX)+\overline{\calX^{1}\calX^{0}} \, \partial^2_{\overline{\calX^{1}\calX^{0}}} f(\calX) \\
&\hspace{4cm}+\calX^{1}\overline{\calX^{1}} \, \partial^2_{\calX^{0}\overline{\calX^{0}}} f(\calX)+\calX^{0}\overline{\calX^{0}} \, \partial^2_{\calX^{1}\overline{\calX^{1}}} f(\calX) \Big) \\
&+\frac{\theta'^2_0 \omega^2 \Gamma_c(\omega)}{8 c^2_0}\Big(\calX^{0} \partial_{\calX^0} f(\calX)+\calX^{1} \partial_{\calX^1} f(\calX)+\overline{\calX^{0}} \partial_{\overline{\calX^0}} f(\calX)+\overline{\calX^{1}} \partial_{\overline{\calX^1}} f(\calX)\Big)\\
&+i\frac{\theta'^2_0 \omega^2 \Gamma_s(\omega)}{8 c^2_0}\Big(\calX^{0} \partial_{\calX^0} f(\calX) - \calX^{1} \partial_{\calX^1} f(\calX)-\overline{\calX^{0}} \partial_{\overline{\calX^0}} f(\calX)+\overline{\calX^{1}} \partial_{\overline{\calX^1}} f(\calX)\Big),
\end{align*}
for $\calX\in\mathbb{C}^2$,
where
\[
\Gamma_c(\omega):=2\int_0^\infty R(s)\cos(2 \omega s/c_0) ds\qquad\text{and}\qquad \Gamma_s(\omega):=2\int_0^\infty R(s)\sin(2 \omega s/c_0) ds.
\]
Finally, from a martingale representation theorem \cite[Proposition 4.6 pp. 315]{karatzas}, upto an extension of the underlying probability space, the solutions to this martingale problem can be represented as weak solutions to the following stochastic differential equation
\begin{align*}
d\calX_0(z) &= -\sqrt{\frac{\omega^2 \Gamma_c(\omega)}{ 4 c_0^2}}\begin{pmatrix} 0 & 1 \\ 1 & 0 \end{pmatrix}\calX_0(z) \circ dW_1(z) \\
&- i \sqrt{\frac{\omega^2 \Gamma_c(\omega)}{ 4 c_0^2}}\begin{pmatrix} 0 & 1 \\ -1 & 0 \end{pmatrix}\calX_0(z) \circ dW_2(z) \\
& - i\frac{\omega^2 \Gamma_s(\omega)}{8 c_0^2}\begin{pmatrix} 1 & 0 \\ 0 & 1 \end{pmatrix}\calX_0(z) dz,
\end{align*}
where $W_1$ and $W_2$ are two independent real-valued standard Brownian motions, and $\circ$ stands for the Stratonovich integral. This equation is readily well-posed, in the sense of probability law, since the diffusion coefficients and drift are linear \cite[Theorem 2.5 pp. 287 and Proposition 3.20 pp. 309]{karatzas}. As a result, all the accumulation points have the same distribution, and this concludes the proof of Theorem \ref{th:asymp}.

\section*{Conclusion}

In this paper, we have analyzed high-frequency waves propagating in randomly layered media with long-range correlations in the weak-coupling regime. In this context, the waves are affected in two ways. First, they exhibit a random travel time-shift that can be characterized as a fractional Brownian motion (or a Brownian motion for $\gamma=1$ in \eqref{eq:decay_R}), but with a standard deviation which is very large compared to the pulse width. Second, the wave-front spreading is deterministic and can be characterized as the solution to a paraxial wave equation involving a pseudo-differential operator. This operator exhibits a frequency-depend attenuation and phase modulation depending on the correlation function of the medium fluctuations, and ensuring the causality of the limiting paraxial equation as well as the Kramers-Kroning relations. The frequency-dependent attenuation is shown to be close to the form \eqref{eq:powerdecay} for $\lambda \in (1,2]$. Moreover, this pseudo-differential operator can be approximated by a fractional Weyl derivative, with order depending on $\gamma$, the power law decay of the correlation function of the random medium (see \eqref{eq:decay_R}).
 
The noise model considered in this paper presents some restrictions that could be removed since the scattering coefficients rely on the correlation function $R$, not the particular structure of $V$. Considering more general models could also allow us to obtain exponents $\lambda\in (0,1)$ for the attenuation power law, and the rough path theory could be helpful to handle more general settings. Nevertheless, our method opens the road to analyze more general 3D settings with random variations with respect to the transverse section, which will be the aim of future works. 

\appendix

\section{Uniqueness of \eqref{eq:parax1} and \eqref{eq:parax2}}\label{sec:uniqueness}

In this appendix, we only treat the uniqueness for \eqref{eq:parax1} since the methodology for \eqref{eq:parax2} is exactly the same.

From the linearity of \eqref{eq:parax1}, it enough to prove that any solution $\tilde{\mathcal{K}}$ to \eqref{eq:parax1} in $\calC^0_z([0,\infty),\calS'_{0,s,\by}(\R\times\R^2)) \cap \calC^1_z((0,\infty),\calS'_{0,s,\by}(\R\times\R^2))$, and vanishing at $z=0$, is constant equal to $0$. In other words, for any $Z>0$ and $\phi \in \mathcal{S}_0(\R\times \R^2)$, we need to prove that
\[\big<  \tilde{\mathcal{K}}(Z), \phi \big>_{\calS',\calS}=0.\]
To this end, let us consider in a first time $\psi \in \mathcal{S}(\R\times \R^2) $, and
\[
\varphi(z,s,\by) = \calK(\cdot,\cdot,z-Z) \ast \psi(s,\by) \qquad (z,s,\by)\in [0,Z]\times \R \times \R^2,
\] 
where $\calK$ is defined in the Fourier domain by \eqref{def:K}. Hence, $\varphi$ satisfies
\[
\partial^2_{sz} \varphi + \frac{c_0}{2} \Delta_{\by} \varphi - \calI(\varphi)=0,
\]
with $\varphi(z=Z)=\psi$. Using that $\varphi(z)\in \mathcal{S}(\R\times \R^2)$ for any $z\in[0,Z]$, we can consider
\[
g(z) = \big< \partial_s \tilde{\mathcal{K}}(z), \varphi(z) \big>_{\calS',\calS},
\]
which satisfies
\begin{align*}
\frac{d}{dz} g(z) & = \big< \partial^2_{sz} \tilde{\mathcal{K}}(z),  \varphi(z) \big>_{\calS',\calS} + \big< \partial_s \tilde{\mathcal{K}}, \partial_{z} \varphi(z) \big>_{\calS',\calS}\\
& = \big< (-c_0\Delta_{\by}/2 + \calI) \tilde{\mathcal{K}},  \varphi(z) \big>_{\calS',\calS} - \big< \tilde{\mathcal{K}}, \partial^2_{sz} \varphi(z) \big>_{\calS',\calS}\\
& = - \big< \tilde{\mathcal{K}}, ( \partial^2_{sz} + c_0\Delta_{\by}/2 - \calI)\varphi(z) \big>_{\calS',\calS}\\
& =0.
\end{align*}
The function $g$ being constant in $z$, we have
\[
\big<  \partial_s \tilde{\mathcal{K}}(Z), \psi \big>_{\calS',\calS} = g(Z)=g(0)=-\big<  \tilde{\mathcal{K}}(0), \partial_s\varphi(0)  \big>_{\calS',\calS} =0,
\]
so that $\partial_s \tilde{\mathcal{K}}(Z)=0$ in $\mathcal{S}(\R\times \R^2)$, and then $\tilde{\mathcal{K}}(s,\by,Z)=K(\by,Z)$ does not vary is $s$. As a result, for any $\phi \in \mathcal{S}_0(\R\times \R^2)$, we have
\[
\big<  \tilde{\mathcal{K}}(Z), \phi \big>_{\calS',\calS}=\int K(\by,Z)\Big(\int\phi(s,\by)ds\Big) d\by=0.
\]

\section{Proof of Proposition \ref{prop:KK}}\label{proof:KK}

The analyticity of $\omega\mapsto \omega^2 \Gamma_c(\omega)$ and $\omega\mapsto \omega^2 \Gamma_s(\omega)$ over the upper complex half-plane is direct. Let us now introduce a notation and make two remarks. Defining the inverse Fourier transform of \eqref{def:Fs} as
\[
\calF^{-1}(\psi)(s):=\frac{1}{2\pi}\int e^{-i\omega s} \psi(\omega) d\omega,
\]
the Fourier transform of the Hilbert transform reads
\[
\calF^{-1}(\calH(\psi))(s) = -i\,\text{sign}(s)\,\calF^{-1}(\psi)(s).
\]
From this relation, we see that $\calH(\calS(\R))\subset \calS(\R)$.
The second remark is that by applying two integrations by part we have
\[
\omega^2 (\Gamma_c(\omega) + i \, \Gamma_s(\omega)) = -\frac{c_0^2}{2}\int_0^\infty R''(s) e^{2i\omega s/c_0} ds,
\]
so that for any test function $\psi\in \calS(\R)$
\begin{align*}
\big< \calH(\omega'^2 \Gamma_c(\omega')), \psi \big>_{\calS',\calS} & = \frac{c_0^2}{2}\int_0^\infty ds\, R''(s) \int d\omega\, \cos(2\omega s/c_0)  \calH(\psi)(\omega) \\
& = - \frac{c_0^2}{2}\int_0^\infty ds\, R''(s) \int d\omega\, \calH(\cos(2\omega' s/c_0))(\omega)  \psi(\omega),
\end{align*}
thanks to the Fubini theorem since $R''$ is integrable. Now, using that
\[
\calH(\cos(2\omega' s/c_0))(\omega) = \sin(2\omega s/c_0),
\]
we obtain
\begin{align*}
\big< \calH(\omega'^2 \Gamma_c(\omega')), \psi \big>_{\calS',\calS} & = - \frac{c_0^2}{2}\int_0^\infty ds\, R''(s) \int d\omega\, \sin(2\omega s/c_0)  \psi(\omega) \\
&= \big< \omega^2 \Gamma_s(\omega), \psi \big>_{\calS',\calS}\,,
\end{align*}
yielding the first relation. To obtain the second relation, we follow the same lines but now using 
 \[
\calH(\sin(2\omega' s/c_0))(\omega) = -\cos(2\omega s/c_0),
\]
providing the minus sign for this relation, and then concludes the proof.

\section{Proof of Propositions \ref{prop:travel_time}}\label{proof:travel_time}

Setting
\[
\tilde W_\eps(L) := \frac{\seps \, \theta'_0}{\sigma_\eps}\int_0^L V(u/\eps)du,
\]
we have from a Taylor expansion of $\Theta$ at the second order (remember that $\Theta$ is odd, and then $\Theta''(0)=0$), and \eqref{eq:prop2V}
\begin{align*}
\E\big[|W_\eps(L)-\tilde W_\eps(L)|\big] & \leq \frac{1}{c_0 \sigma_\eps}\int_0^L \E \big[| \nu_\eps(u/\eps) - \seps \theta'_0 V(u/\eps) | \big] du \\
& \leq  \frac{\sup|\Theta'''|L}{c_0 \sigma_\eps} \eps^{3/2} \sup_{z\in [0,L/\eps]} \E\big[ |V(z)|^3 \big]\underset{\eps\to 0}{\longrightarrow 0},
\end{align*}
so that we only need to focus on $\tilde W_\eps$ according to \cite[Theorem 3.1 pp. 27]{billingsley}. For $\gamma >1$, the convergence of $\tilde W_\eps$ to $W_0$ is given by an invariant principle  (see \cite[Theorem 4]{marty1} for a more advanced results), and for $\gamma\in(0,1]$ we refer to \cite[Proposition 1.3]{gomez3}. This concludes the proof of the proposition.

\section{Proof of Proposition \ref{prop:delay}}\label{proof:delay}

The proof is in two steps. The first step consists in approximating the time delay in a more convenient form. The second step consists in looking at the expectation and variance of this new expression to obtain the convergence in probability. 

For the first step, using the Taylor expansion of $u\mapsto\sqrt{1+u}$ at the second order, we have
\[
\frac{\Delta T_\eps(L)}{\eps} = \frac{1}{8c_0 \eps}\int_0^L \nu^2_\eps(z/\eps)dz + E^1_\eps(L),
\]
with for any $\eta>0$
\[
\lim_{\eps\to0}\Pro\big(|E^1_\eps(L)|>\eta\big)=0.
\]
In fact, for any $\eta'>0$ we have
\begin{align*}
\Pro\big(|E^1_\eps(L)|>\eta\big) & \leq  \Pro\Big(|E^1_\eps(L)|>\eta, \seps \sup_{z\in[0,L/\eps]} |V(z/\eps)|\leq \eta' \Big) \\
& + \Pro\Big( \seps \sup_{z\in[0,L/\eps]} |V(z/\eps)| > \eta' \Big),
\end{align*}
where the second term on the r.h.s. goes to $0$ as $\eps \to 0$, thanks to \eqref{eq:prop1V} together with the Markov inequality. Now, working on the event $(\seps \sup_{z\in[0,L/\eps]} |V(z/\eps)|\leq \eta')$, we have for $\eta' \in (0,1/\sup|\Theta'|)$
\[
|E^1_\eps(L)| \leq \frac{\seps\sup|\Theta'|^3}{4c_0 (1-\eta' \sup|\Theta'|)^{5/2}} \int_0^L |V(z/\eps)|^3,
\]
so that
\begin{align*}
\E\Big[|E^1_\eps(L)|\1_{(\sup_{z\in[0,L/\eps]} |V(z/\eps)|\leq \eta')} \Big] & \leq \frac{\sup|\Theta'|^3 L}{4c_0 (1-\eta' \sup|\Theta'|)^{5/2}} \seps  \sup_{z\in[0,L/\eps]} \E[|V(z/\eps)|^3] \\
& \underset{\eps\to 0}{\longrightarrow 0},
\end{align*}
thanks to \eqref{eq:prop2V}. Also, we have from a second order Taylor expansion for $\Theta$ (using that $\Theta''(0)=0$)
\[
\frac{1}{8c_0 \eps }\int_0^L \nu^2_\eps(z/\eps)dz = D_\eps(L) + E^2_\eps(L),
\]
with 
\[
D_\eps(L) := \frac{\theta'^2_0}{8c_0 }\int_0^L V^2(z/\eps)dz,
\]
and for any $\eta>0$
\[
\lim_{\eps\to0}\Pro\big(|E^2_\eps(L)|>\eta\big)=0.
\]
In fact, we have
\begin{align*}
\E\Big[\Big|\frac{1}{8c_0 \eps }\int_0^L \nu^2_\eps(z/\eps)dz - D_\eps(L)\Big|\Big] & \leq \eps \frac{ |\theta'_0|\sup |\Theta'''| L}{24 c_0} \sup_{z\in[0,L/\eps]} \E[ |V(z)|^4]\\
& + \eps^3 \frac{ \sup|\Theta'''|^2 L}{288 c_0} \sup_{z\in[0,L/\eps]}  \E[|V(z)|^6],
\end{align*}
which goes to $0$ as $\eps\to 0$ thanks to \eqref{eq:prop2V}. As a result, from the Markov inequality, we have for any $\eta >0$
\[
\lim_{\eps\to 0} \Pro\Big(\Big|\frac{\Delta T_\eps(L)}{\eps}-D_\eps(L) \Big| > \eta \Big)=0,
\]
so that we only need to focus on the convergence of $D_\eps(L)$. Regarding its expectation we have
\[
\E[D_\eps(L)]=\frac{\theta'^2_0 R(0)L}{8c_0}.
\] 
Now, thanks to the Chebyshev inequality, for any $\eta>0$ we have 
\[
\Pro\Big( \Big|D_\eps(L)-\frac{\theta'^2_0 R(0)L}{8c_0}\Big| > \eta \Big) \leq \frac{Var[D_\eps(L)]}{\eta^2},
\]
and from the Gaussianity of $V$,
\begin{align*}
Var[D_\eps(L)] & =\frac{\theta'^4_0}{64 c_0^2} \int_0^L\int_0^L \E[V^2(z_1/\eps)V^2(z_2/\eps)] dz_1dz_2 - \Big(\frac{\theta'^2_0 R(0)L}{8c_0}\Big)^2 \\
& = \frac{\theta'^4_0}{32 c_0^2} \int_0^L\int_0^L R^2\Big(\frac{z_1 - z_2}{\eps}\Big) dz_1dz_2\\
& \underset{\eps \to 0}{\longrightarrow 0},
\end{align*}
which concludes the proof of the proposition thanks to the dominated convergence theorem.

\section{Proof of Lemma \ref{lem:Gamma_scatt}}\label{proof:lemma_coef_scat}

\paragraph{For $\gamma >1$}
Thanks to the integrability of the correlation function and making the change of variable $s\to l_0 s$, we have
\[
\Gamma_c(\omega,l_0)+i\, \Gamma_s(\omega,l_0) = 2\int_0^\infty R(s)e^{2i \omega s l_0/c_0}ds \underset{l_0\to 0}{\longrightarrow}2\int_0^\infty R(s)ds =  \Gamma_c(0),
\] 
using the dominated convergence theorem.

\paragraph{For $\gamma \in (0,1)$}
After integrating in $s$, we have
\[
\Gamma_c(\omega,l_0)+i\, \Gamma_s(\omega,l_0) = \frac{2}{l_0^\gamma} \int_S \frac{a(p)dp}{|p|^{2\alpha}(\mu|p|^{2\beta}/l_0 - 2i\omega/c_0)}. 
\]
Now, making the change of variable $p \to l_0^{1/(2\beta)}p$ gives
\begin{align*}
\Gamma_c(\omega,l_0)+i\, \Gamma_s(\omega,l_0) & = 2 \int_{S/l_0^{1/(2\beta)}} \frac{a(l_0^{1/(2\beta)} p)dp}{|p|^{2\alpha}(\mu|p|^{2\beta} - 2i\omega/c_0)}\\
&\underset{l_0\to 0}{\longrightarrow} 2a(0) \int_{-\infty}^\infty \frac{dp}{|p|^{2\alpha}(\mu|p|^{2\beta} - 2i\omega/c_0)}.
\end{align*}
This limit can be rewritten as 
\[
  a(0)\int_{-\infty}^\infty \frac{dp}{|p|^{2\alpha}(\mu|p|^{2\beta} - 2i\omega/c_0)} = \int_0^\infty \tilde R(s) e^{2 i\omega s}ds,
\]
with
\[
\tilde R(s) = a(0)\int_{-\infty}^\infty \frac{e^{-\mu |p|^{2\beta}s}}{|p|^{2\alpha}}dp = \frac{R_0}{s^\gamma}.
\]
\paragraph{For $\gamma=1$}
After integrating in $s$, we have
\begin{align*}
\Gamma_c(\omega,l_0)+i\, \Gamma_s(\omega,l_0) & = \frac{2}{l_0 |\ln(l_0)|} \int_{-r_S}^{r_S} \frac{a(p)dp}{|p|^{2\alpha}(\mu|p|^{2\beta}/l_0 - 2i\omega/c_0)}\\
&= \frac{4 }{l_0 |\ln(l_0)|} \int_{0}^{r_S} \frac{a(p)dp}{p^{2\alpha}(\mu \,  p^{2\beta}/l_0 - 2i\omega/c_0)} \\
& = I_{l_0} + II_{l_0} + III_{l_0}+IV_{l_0},
\end{align*}
where the decomposition is defined as follows:
\begin{align*}
I_{l_0} &:= \frac{4 }{l_0 |\ln(l_0)|} \int_{0}^{r_S} \frac{(a(p)-a(0))dp}{p^{2\alpha}(\mu \,  p^{2\beta}/l_0 - 2i\omega/c_0)},\\
II_{l_0} &:= \frac{4 a(0) }{l_0 |\ln(l_0)|} \int_{0}^{l_0^{1/(2\beta)}} \frac{dp}{p^{2\alpha}(\mu \,  p^{2\beta}/l_0 - 2i\omega/c_0)},\\
III_{l_0} &:= \frac{4 a(0) }{l_0 |\ln(l_0)|} \int_{l_0^{1/(2\beta)}}^{r_S} \frac{dp}{p^{2\alpha}}\Big( \frac{1}{\mu \,  p^{2\beta}/l_0 - 2i\omega/c_0 }-\frac{1}{\mu \,  p^{2\beta}/l_0}\Big),\\
IV_{l_0} &:= \frac{4 a(0) }{\mu |\ln(l_0)|} \int_{l_0^{1/(2\beta)}}^{r_S} \frac{dp}{p},
\end{align*}
using that $2(\alpha+\beta)=1$, since $\gamma=1$.

For $I_{l_0}$, using the triangular inequality and making the change of variable $p \to l_0^{1/(2\beta)}p$ gives
\[
|I_{l_0}|\leq \frac{4 r_S \sup|a'|}{\mu |\ln(l_0)|} \underset{l_0\to 0}{\longrightarrow} 0.
\]
For $II_{l_0}$, the idea being just to isolate the point $p=0$, we have
\[
|II_{l_0}|\leq \frac{4 a(0)c_0}{2\mu\, \omega_c |\ln(l_0)|} \int_{0}^{l_0^{1/(2\beta)}} \frac{dp}{p^{2\alpha}} \propto \frac{l_0}{|\ln(l_0)|} \underset{l_0\to 0}{\longrightarrow} 0.
\]
For $III_{l_0}$, we have after the change of variable $p \to l_0^{1/(2\beta)}p$
\[
|II_{l_0}|\leq \frac{4 a(0)2\omega_c}{\mu^2\, c_0 |\ln(l_0)|} \int_{1}^{r_S/l_0^{1/(2\beta)}} \frac{dp}{p^{2\alpha+4\beta}} \propto \frac{1}{|\ln(l_0)|} \underset{l_0\to 0}{\longrightarrow} 0.
\]
Finally, for $IV_{l_0}$, integrating in $p$ we have
\[
IV_{l_0} = \frac{4 a(0) }{\mu |\ln(l_0)|} \Big( \ln(r_S)-\frac{1}{2\beta}\ln(l_0)\Big)\underset{l_0\to 0}{\longrightarrow}  \frac{2 a(0) L'}{\mu \, \beta},
\]
which concludes the proof of Lemma \ref{lem:Gamma_scatt}.

\end{document}